\makeatletter \@addtoreset{equation}{section} \makeatother
\newtheorem{theorem}{Theorem}[section]
\newtheorem{lemma}{Lemma}[section]
\newtheorem{proposition}{Proposition}[section]
\newcommand{\mdet}{\mathrm{det}}
\newcommand{\intd}{\displaystyle\int}
\newcommand{\Tr}{\mathrm{Tr}\,}
\newcommand{\Str}{\mathrm{Str}\,}
\newcommand{\Sdet}{\mathrm{Sdet}\,}
\begin{document}

\title{Finite-rank complex deformations of random band matrices: sigma-model approximation}
\author{ Mariya Shcherbina
\thanks{Institute for Low Temperature Physics, Kharkiv, Ukraine, e-mail: shcherbi@ilt.kharkov.ua} \and
 Tatyana Shcherbina
\thanks{ Department of Mathematics, University of Wisconsin - Madison, USA, e-mail: tshcherbyna@wisc.edu. }}
\date{}
\maketitle

\begin{abstract}
We study the distribution of complex eigenvalues $z_1,\ldots, z_N$ of random Hermitian $N\times N$ block band matrices with a complex deformation of a finite rank. Assuming that the width of the band $W$ grows faster than $\sqrt{N}$, we proved that the limiting density of $\Im z_1,\ldots, \Im z_N$ in a sigma-model approximation coincides with that for the Gaussian Unitary Ensemble.   The method follows the techniques of \cite{SS:sigma}.

\end{abstract}

\section{Introduction}\label{s:1}

The complex eigenvalues of non-Hermitian random matrices have attracted
much research interest due to their relevance to several branches of theoretical physics, and in particular to the study of scattering chaotic systems.
According to the works  \cite{Ver:85}, \cite{Sok:89}, universal properties of the poles of the scattering matrix $S(E)$ in the complex plane
can be modelled by $N$ complex eigenvalues $z_n$, $\Im z_n\le 0$ of so-called ``effective non-Hermitian Hamiltonian"
\begin{align}\label{H_eff}
\mathcal{H}_{eff}=H-i\Gamma, 
\end{align}
where $H$ is a random matrix ensemble with an appropriate symmetry (e.g., Hermitian or real symmetric), and $\Gamma$ is a positive deformation
of a rank $M\ll N$. More details of the approach can be found, e.g., in reviews \cite{FyoSav:11}, \cite{MRW:10},
\cite{FyoSom:03} and references therein.

One of the most interesting questions about the spectral statistics of $H_{eff}$ is the distribution of $\Im z_i$ (i.e. ``resonance widths").
In contrast to the classical non-Hermitian models such as Ginibre ensemble (random matrices with iid entries), if $M$ is fixed and $N\to\infty$, 
matrices $\mathcal{H}_{eff}$ are weakly non-Hermitian, and so $\Im z_i$ are of order of the typical spacing $\omega$ between eigenvalues of $H$,
i.e. $O(1/N)$. It is also expected that the spectral
fluctuations on the $\omega$-scale  is universal, i.e. independent of the particular form of the
distribution of $H$ or the energy dependence of $\omega$.

For the case $H$ taken from Gaussian Unitary Ensemble (GUE) the probability density of the scaled $\Im z_i$ was obtained in 
\cite{FyoSom:96}, \cite{FyoKh:99} for any finite $M$ (for some related models see review \cite{FyoSom:03} and references therein).
Let us mention also that the cases of non-Hermitian symmetry, and in particular real symmetric case, are much more involved,
and is not well-enough studied even for $H$ taken from Gaussian Orthogonal Ensemble (there are only some partial results
for $M=1$, see physical papers \cite{SFyT:99},\cite{FyoOs:21} for GOE; let us also mention the paper \cite{Koz:17} that gives
joint probability distribution of $z_i$ for rank-one perturbation of general $\beta$-ensembles).

In this paper we consider $H$ to be a one-dimensional Hermitian block band matrix (block RBM). The 1d block RBM are the special class of Wegner's orbital models (see \cite{We:79}), i.e.
Hermitian $N\times N$ matrices $H_N$  with complex zero-mean random Gaussian entries $H_{jk,\alpha\beta}$,
where $j,k =1,..,n$ (they parametrize the lattice sites) and $\alpha, \gamma= 1,\ldots, W$ (they
parametrize the orbitals on each site), $N=nW$, such that
\begin{equation}\label{H}
\langle H_{j_1k_1,\alpha_1\gamma_1}H_{j_2k_2,\alpha_2\gamma_2}\rangle=\delta_{j_1k_2}\delta_{j_2k_1}
\delta_{\alpha_1\gamma_2}\delta_{\gamma_1\alpha_2} J_{j_1k_1}
\end{equation}
with
\begin{equation}\label{J_old}
J=1/W+\tilde\beta\Delta/W,
\end{equation}
where $W\gg 1$ and $\Delta$ is the discrete Laplacian on $\{1,2,..,n\}$. 
The probability law of $H_N$ can be written in the form
\begin{equation}\label{pr_l}
P_N(d H_N)=\exp\Big\{-\dfrac{1}{2}\sum\limits_{j,k=1}^n\sum\limits_{\alpha,\gamma=1}^W
\dfrac{|H_{jk,\alpha\gamma}|^2}{J_{jk}}\Big\}dH_N.
\end{equation}
The density of states $\rho$ of a general class of RBM with $W\gg 1$ is given by the well-known Wigner semicircle law (see
\cite{BMP:91, MPK:92}):
\begin{equation}\label{rho_sc}
\rho(E)=(2\pi)^{-1}\sqrt{4-E^2},\quad E\in[-2,2].
\end{equation}
The main feature of RBM is that their local spectral statistics is conjectured to exhibit  the crossover at $W=\sqrt{N}$: for $W\gg \sqrt{N}$ the eigenvectors are expected to be
delocalized and the local spectral statistics is governed by the Wigner-Dyson (GUE/GOE) statistics, and for $W\ll \sqrt{N}$ the eigenvectors  are localized
and the local spectral statistics is Poisson. The conjecture is supported by the physical derivation due to Fyodorov and Mirlin (see \cite{FM:91}) based on supersymmetric formalism, but is not proved in the full generality yet. For the general  RBM the delocalization is proved for $W\gg N^{3/4}$ (see the review \cite{B:rev} and references 
therein). For  the more specific Gaussian model (\ref{H}) -- (\ref{J_old}),  the Wigner-Dyson local statistics is proved up to the optimal regime $W\gg \sqrt{N}$
first in the so-called sigma-model approximation \cite{SS:sigma}, and then in the full model  \cite{SS:Un} by the application of the supersymmetric transfer matrix approach.

The main advantage of the SUSY techniques here is that the main spectral
 characteristics of the model (\ref{H}) -- (\ref{J_old}) such as a density of states, spectral correlation functions, $\mathbb{E}\{|G_{jk}(E+i\varepsilon)|^2\}$, etc. can
 be expressed via SUSY as the averages of certain observables in nearest-neighbour statistical mechanics models on a box in $\mathbb{Z}$, which allows to 
 combine the SUSY techniques with a transfer matrix approach. However,  the rigorous application of the techniques to the main spectral characteristics of RBM is quite difficult due to
the complicated structure and non self-adjointness of the corresponding transfer operator. So it is easier to apply it first to the so-called sigma-model approximation, 
which is often used by physicists to study complicated statistical mechanics systems.  In such approximation spins of the statistical mechanics model take values in
some symmetric space ($\pm 1$ for Ising model, $S^1$ for the rotator, $S^2$ for the classical
Heisenberg model, etc.). It is expected that sigma-models have all the qualitative physics
of more complicated models with the same symmetry.  The sigma-model approximation for RBM was introduced by Efetov (see \cite{Ef}),
and the spins there are $4\times 4$ matrices with both complex and Grassmann entries (this approximation was studied in \cite{FM:91}, \cite{FM:94}).
The rigorous application of the techniques to the correlation functions of  (\ref{H}) -- (\ref{J_old}) was developed in \cite{SS:sigma}.

The aim of the current paper is to derive the sigma-model approximation for the limiting density of the imaginary parts of the eigenvalues of $H_{eff}$ of (\ref{H_eff}) with
$H$ of (\ref{H}), and, following the techniques of \cite{SS:sigma}, prove that its limiting behaviour in the delocalized regime $W\gg \sqrt{N}$ coincides with that for $H=\hbox{GUE}$.

Define
\begin{equation}\label{cal_H}
\mathcal{H}=H_N+i\Gamma_M,
\end{equation}
with $H_N$ of (\ref{H}) -- (\ref{J_old}), where $\Gamma_M$ is a $N\times N$ matrix 
\begin{equation}\label{Gamma}
\Gamma_M=\left(\begin{matrix}
\gamma_1&0&\ldots &\ldots &0 &\ldots &0 \\
0&\gamma_2&0&\ldots&0 &\ldots &0\\
\vdots&\vdots&\ddots&\vdots&\vdots &\vdots &\vdots \\
0&\ldots&0&\gamma_M&0&\ldots&0\\
0&\ldots&\ldots&0&0&\ldots&0\\
\vdots& \vdots&\vdots&\vdots&\vdots&\vdots&\vdots\\
0&\ldots&\ldots &0 &0 &\ldots &0
\end{matrix}\right)
\end{equation}
with some fixed $\gamma_i>0$ and fixed $M$. Notice that for convenience we have changed the sign of $\Gamma_M$ in order to get positive $\Im z_i$.

In order to access the  density $\rho(x,y)$ of complex eigenvalues $z_j=x_j+iy_j$ one can use the formula (see \cite{FyoSom:96} and reference therein)
\[
\rho_N(x,y)=-\dfrac{1}{4\pi N}\lim\limits_{\kappa\to 0}\partial^2\Phi(x,y,\kappa)
\]
with
\[
\Phi(x,y,\kappa)=-\dfrac{1}{N}\log \mdet\Big((\mathcal{H}-x-iy)(\mathcal{H}-x-iy)^*+\dfrac{\kappa^2}{N^2}\Big)
\]
where $\partial^2$ stands for the two-dimensional Laplacian and  a positive parameter 
  $\kappa$is added to regularize the logarithm. 
  
 Introduce the generating function
\begin{equation}\label{Z}
Z_{\beta nW}(\kappa,z_1,z_2)=\mathbb{E}\left[\dfrac{\mdet\Big\{(\mathcal{H}-z_1)(\mathcal{H}-z_1)^*+\dfrac{\kappa^2}{N^2}\Big\}}{\mdet\Big\{(\mathcal{H}-z_2)(\mathcal{H}-z_2)^*+\dfrac{\kappa^2}{N^2}\Big\}}\right],
\end{equation}
where $z_1$ and $z_2$ are auxiliary spectral parameters in the vicinity of $E + iy/N$:
\begin{equation}\label{small_z}
z_l=E_l+\dfrac{iy_l}{N},\quad E_l=E+\dfrac{x_l}{N},\quad l=1,2.
\end{equation}
Given $Z_{\beta nW}$, the density can be obtained using the following
identity (see \cite{FyoSom:96} and references therein):
\begin{equation*}
\rho_N(E,y/N)=\dfrac{1}{4\pi}\lim\limits_{\kappa\to 0}\lim\left(\dfrac{\partial}{\partial y_1}\Big(\lim\limits_{y_2\to y_1}\dfrac{\partial Z_{n,W}}{\partial y_2}\Big)
+\dfrac{\partial}{\partial x_1}\Big(\lim\limits_{x_2\to x_1}\dfrac{\partial Z_{n,W}}{\partial x_2}\Big)\right)\Bigg|_{\begin{array}{l}y_1=y,\\ x_1=0 \end{array}}
\end{equation*}
Following \cite{SS:sigma}, to derive sigma-model approximation of $Z_{\beta nW}$ for the model(\ref{H}) -- (\ref{J_old}), we take $\beta$ in (\ref{J_old}) of order $1/W$, i.e. put
\begin{equation}\label{J}
J=1/W+\beta\Delta/W^2.
\end{equation}
The first main result states that in the model (\ref{J}) with fixed $\beta$ and $n$, and with $W\to\infty$, the function
$Z_{\beta nW}(\kappa,z_1,z_2)$ of (\ref{Z}) converges to the value given by the sigma-model approximation. More precisely, we get
\begin{theorem}\label{thm:sigma_mod} 
Given $Z_{\beta nW}(\kappa,z_1,z_2)$ of (\ref{Z}), (\ref{cal_H}), (\ref{J}), and  (\ref{Gamma}), any fixed $\beta$, $n$, $\kappa>0$, $z_1$, $z_2$ of (\ref{small_z}), and $|E|\le\sqrt2$, we have,   as $W\to\infty$:
\begin{align*}
Z_{\beta nW}(\kappa,z_1,z_2)\to Z_{\beta n}(\kappa,z_1,z_2),
\end{align*}
where
\begin{align}\label{sigma-mod}
Z_{\beta n}(\kappa,z_1,z_2)=e^{E(x_1-x_2)}\int &\exp\Big\{-\dfrac{\tilde\beta}{4}\sum\Str Q_jQ_{j-1}+\dfrac{c_0}{2n}\sum \Str Q_j\Lambda_{\kappa, y_1,y_2}\Big\}\\ 
&\times\prod\limits_{a=1}^M\Sdet^{-1}\Big(Q_1-\dfrac{iE}{2\pi\rho(E)}+\dfrac{i\gamma_a}{\pi\rho(E)}\mathcal{L}\Sigma\Big)d Q, \notag
\end{align}
  $\tilde\beta=(2\pi\rho(E))^2\beta$, $c_0=2\pi\rho(E)$ with $\rho$ of (\ref{rho_sc}), $U_j\in \mathring{U}(2)$, $S_j\in \mathring{U}(1,1)$ (see notation (\ref{U}) below),
and $Q_j$ are $4\times 4$ supermatrices with commuting
diagonal and anticommution off-diagonal $2\times 2$ blocks
\begin{align*}
Q_j=\left(\begin{array}{cc}
U_j^*&0\\
0&S_j^{-1}
\end{array}\right)\left(\begin{array}{cc}
(I+2\hat\rho_j\hat\tau_j)L&2\hat\tau_j\\
2\hat\rho_j&-(I-2\hat\rho_j\hat\tau_j)L
\end{array}\right)\left(\begin{array}{cc}
U_j&0\\
0&S_j
\end{array}\right),
\end{align*}
\begin{align*}
d Q=\prod d Q_j,\quad d Q_j=(1-2n_{j,1}n_{j,2})\, d \rho_{j,1}d \tau_{j,1}\,d \rho_{j,2}d \tau_{j,2}\, d U_j\,d S_j
\end{align*}
with
\begin{align*}
&n_{j,1}=\rho_{j,1}\tau_{j,1},
\quad n_{j,2}=\rho_{j,2}\tau_{j,2},\\ \notag 
&\hat\rho_j=\mathrm{diag}\{\rho_{j1},\rho_{j2}\},\quad \hat\tau_j=\mathrm{diag}\{\tau_{j1},\rho_{j2}\},\quad L=\mathrm{diag}\{1,-1\}
\end{align*} 
Here $\rho_{j,l}$, $\tau_{j,l}$, $l=1,2$ are anticommuting Grassmann variables,
\begin{align}\label{sdet_def}
\Str \left(\begin{array}{cc}
A&\sigma\\
\eta&B
\end{array}\right)=\Tr B-\Tr A,\quad \Sdet\left(\begin{array}{cc}
A&\sigma\\
\eta&B
\end{array}\right)=\dfrac{\mdet(B- \eta A^{-1}\sigma)}{\mdet A},
\end{align}
and 
\begin{align*}
\Lambda_{\kappa,y_1,y_2}=\left(\begin{matrix}
\kappa&-iy_1&0&0\\
iy_1&-\kappa&0&0\\
0&0&\kappa&-iy_2\\
0&0&iy_2&-\kappa
\end{matrix}\right),\quad \mathcal{L}=\left(\begin{matrix}
I_2&0\\
0&-I_2
\end{matrix}\right),\quad \Sigma=\left(\begin{matrix}
\sigma&0\\
0&\sigma
\end{matrix}\right).
\end{align*}\end{theorem}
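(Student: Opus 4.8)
The plan is to follow the supersymmetric (SUSY) strategy of \cite{SS:sigma}, adapted to the finite-rank deformation $i\Gamma_M$, and---since here $n$ is fixed while $W\to\infty$---to reduce the computation to a steepest-descent analysis of a supermatrix integral on the fixed chain $\{1,\dots,n\}$. First I would linearize the modulus-squared structure by the identity
\begin{align*}
\mdet\Big\{(\mathcal{H}-z_l)(\mathcal{H}-z_l)^*+\tfrac{\kappa^2}{N^2}\Big\}
=\mdet\begin{pmatrix} \kappa/N & \mathcal{H}-z_l\\ -(\mathcal{H}-z_l)^* & \kappa/N\end{pmatrix},
\end{align*}
which doubles each copy to dimension $2N$ (and makes the requisite positivity manifest: the Hermitian part of the right-hand matrix is $(\kappa/N)I$). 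The numerator determinant ($l=1$) is then represented by a Grassmann Gaussian integral and the reciprocal of the denominator ($l=2$) by a complex Gaussian integral, convergent for $\kappa>0$. Collecting the two sectors yields a single SUSY integral over supervectors carrying a site index $j$, an orbital index $\alpha\in\{1,\dots,W\}$, and a four-dimensional internal index made of a bosonic pair tied to $z_2$ and a fermionic pair tied to $z_1$, each pair produced by the conjugation doubling above.

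Next, because $H_N$ is Gaussian with covariance (\ref{H}), I would carry out the disorder average in closed form; this generates a quartic self-interaction of the supervectors weighted by $J_{jk}$, whose on-site part is of order $1/W$ and whose Laplacian part couples nearest neighbours with strength of order $\beta/W^2$ under the scaling (\ref{J}). The deformation $i\Gamma_M$ touches only the $M$ orbitals it acts on---all on the first site once $W>M$---so I would keep it aside as an explicit finite-rank insertion rather than averaging it. I would then decouple the quartic term by a Hubbard--Stratonovich transformation, introducing at each site a single $4\times4$ supermatrix $Q_j$ conjugate to the local dyadic of the supervectors, with a compact block acting on the fermionic pair and a non-compact block on the bosonic pair. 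The supervectors now enter quadratically and are integrated out, yielding a $\Sdet$ of the effective quadratic form together with the nearest-neighbour term $\Str Q_jQ_{j-1}$ inherited from the Laplacian, while the finite-rank insertion survives as the product $\prod_{a=1}^M\Sdet^{-1}(\dots)$ evaluated at $Q_1$, its argument carrying the shift $-iE/(2\pi\rho(E))$ from the spectral parameter and the term $i\gamma_a\mathcal{L}\Sigma/(\pi\rho(E))$ that encodes how $\pm i\gamma_a$ enters $\mathcal{H}$ and $\mathcal{H}^*$ in the doubled space.

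It is here that the limit $W\to\infty$ enters: the orbital sum contributes an overall factor of order $W$ to the action, so the $Q_j$ integral localizes by steepest descent onto the manifold of self-consistent saddles, which is exactly the symmetric superspace parametrized by $U_j\in\mathring{U}(2)$, $S_j\in\mathring{U}(1,1)$ and the Grassmann variables $\rho_{j,l},\tau_{j,l}$ of the statement. Substituting the saddle value $Q_j$, expanding the spectral and source terms to the order that survives the limit, and fixing the constants through $\tilde\beta=(2\pi\rho(E))^2\beta$ and $c_0=2\pi\rho(E)$ by means of the semicircle law (\ref{rho_sc}), recovers the coupling $\tfrac{\tilde\beta}{4}\Str Q_jQ_{j-1}$, the source $\tfrac{c_0}{2n}\Str Q_j\Lambda_{\kappa,y_1,y_2}$ (note that the real shifts $x_l$ do not enter $\Lambda_{\kappa,y_1,y_2}$), the prefactor $e^{E(x_1-x_2)}$ coming from the terms linear in $x_l$, and the finite-rank factor---that is, the expression (\ref{sigma-mod}).

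The main obstacle is the rigorous justification of this steepest-descent reduction. One must show, uniformly on the saddle manifold and in the Grassmann directions, that the transverse massive fluctuations contribute a factor of order $1/W$ and hence drop out in the limit; that the non-compact $\mathring{U}(1,1)$ integration converges with no boundary contribution; and that the subdominant saddles are exponentially suppressed, which is presumably where the restriction $|E|\le\sqrt2$ enters. The genuinely new difficulty relative to \cite{SS:sigma} is the finite-rank insertion: I must check that $\prod_{a=1}^M\Sdet^{-1}(\dots)$ is bounded and continuous in a neighbourhood of the dominant saddle---in particular that its bosonic denominator stays away from zero for the fixed $\kappa>0$---so that it neither displaces the saddle nor spoils the fluctuation estimates, and then pass to the limit treating it as a continuous functional of $Q_1$.
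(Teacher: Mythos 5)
Your overall strategy --- dualization of the two determinants into a $2N\times 2N$ structure with positive real part, Gaussian fields for numerator/denominator, explicit disorder average producing a quartic interaction with on-site strength $1/W$ and hopping $\beta/W^2$, reduction to a lattice of $4\times4$ supermatrices, and a $W\to\infty$ steepest-descent onto the symmetric superspace $\mathring{U}(2)\times\mathring{U}(1,1)$ with the finite-rank deformation carried along as an explicit $\Sdet^{-1}$ insertion at site $1$ --- is the same as the paper's. But there are two points where your plan diverges in a way that matters. First, the single $4\times4$ supermatrix Hubbard--Stratonovich is the Efetov-style derivation; its boson--boson block is a non-compact Gaussian identity that does not converge as written and requires a careful choice of integration contours. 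The paper avoids this entirely: it performs Hubbard--Stratonovich only in the fermion--fermion sector (Hermitian $2\times2$ matrices $X_j$), passes from the complex fields to positive Hermitian matrices $Y_j$ by the superbosonization formula of Proposition \ref{p:supboz}, and carries the mixed (Grassmann--boson) blocks through an explicit determinant $\det\widetilde{\mathcal D}$ that is expanded at the saddle. If you keep the supermatrix route you must supply the contour/convergence argument, which is a genuine missing step, not a routine one.

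Second, and more seriously, your proposed mechanism for discarding the non-dominant saddle configurations is wrong. Since $f(a_+)=-f(a_-)\in i\mathbb{R}$, i.e. $\Re f(a_+)=\Re f(a_-)=0$, the configurations in which both eigenvalues of $X_j$ collapse to the same point ($x_{j,1}=x_{j,2}=a_+$ or $a_-$) are \emph{not} exponentially suppressed relative to the dominant one $(a_+,a_-)$: all saddle values have the same vanishing real part. The paper eliminates them in Lemma \ref{l:s_point} by a power-counting argument combined with an exact cancellation: after rescaling the fluctuations by $W^{-1/2}$, the Jacobian factor $(x_{j,1}-x_{j,2})^2$ and the structure of the rows of $\mathcal D$ force the leading contribution at those points to be the Gaussian moment in (\ref{++}), which vanishes identically. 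Without an argument of this type (or some substitute) the steepest-descent reduction does not close, because the ``wrong'' saddles would a priori contribute at the same order as the sigma-model manifold. Relatedly, the restriction $|E|\le\sqrt2$ is used to justify the contour deformations onto $-iE/2+\mathbb{R}$ and $\mathcal{L}_\pm(E)$, not to suppress subdominant saddles. Your treatment of the finite-rank insertion (continuity and non-degeneracy of the $\Sdet^{-1}$ factor near the saddle for $\kappa>0$, so that it neither moves the saddle nor affects the fluctuation analysis) is consistent with what the paper does via the matrices $A$ and $T$ in (\ref{T})--(\ref{A}).
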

The next theorem gives asymptotic behaviour of $Z_{\beta n}$ in the delocalized regime $\beta\gg n$ as $n,\beta\to \infty$:
\begin{theorem}\label{thm:2} Given $Z_{\beta n}(\kappa,z_1,z_2)$ of (\ref{sigma-mod}), we have in the limit $\beta\to\infty, n\to\infty$
with \\ $\beta>n\log^3n$
\begin{align}\label{t2.1}
Z_{\beta n}(\kappa,z_1,z_2)\to e^{E(x_1-x_2)}\int\dfrac{\exp\big\{\pi\rho(E) \Str Q\Lambda_{\kappa, y}\big\}}{\prod_{a=1}^M\Sdet
\Big(Q-\frac{iE}{2\pi\rho(E)}+\frac{i\gamma_a}{\pi\rho(E)}\mathcal{L}\Sigma\Big) }dQ,
\end{align}
which coincides with $Z(\kappa,z_1,z_2)$ for the GUE. Therefore,  the limiting distribution of the imaginary parts of the eigenvalues of $\mathcal{H}$ of (\ref{cal_H})
with $H_N$ of (\ref{H}) -- (\ref{J_old}) in the sigma-model approximation coincides with that for $H_N=GUE$ obtained in \cite{FyoSom:96}. 
\end{theorem}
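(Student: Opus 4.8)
The plan is to read \eqref{sigma-mod} as the partition function of a one-dimensional nearest-neighbour spin chain, with spins $Q_j$ on the saddle supermanifold, coupling $\tilde\beta$, a single-site source $\frac{c_0}{2n}\Str Q_j\Lambda_{\kappa,y_1,y_2}$, and one distinguished site $j=1$ carrying the observable $F(Q)=\prod_{a=1}^M\Sdet^{-1}\big(Q-\frac{iE}{2\pi\rho(E)}+\frac{i\gamma_a}{\pi\rho(E)}\mathcal L\Sigma\big)$. In the regime $\beta\gg n$ the coupling is so strong that the measure concentrates on the spatially uniform configurations $Q_1=\dots=Q_n=Q$. On this locus the coupling term is stationary (writing $\Str Q_jQ_{j-1}=\mathrm{const}-\tfrac12\Str(Q_j-Q_{j-1})^2$, using that $\Str Q_j^2$ is constant on the manifold), the accumulated source collapses to $\frac{c_0}{2n}\sum_{j}\Str Q_j\Lambda=\frac{c_0}{2}\Str Q\Lambda=\pi\rho(E)\Str Q\Lambda$ since $c_0=2\pi\rho(E)$, and the site-$1$ factor becomes $F(Q)$. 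These are exactly the ingredients of the target integral \eqref{t2.1}, so the whole problem is to show that the uniform locus captures the integral up to $o(1)$.

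Following \cite{SS:sigma} I would make this rigorous through the transfer-operator representation. Introduce the operator on functions on the saddle manifold with kernel
\[
\mathcal T(Q,Q')=\exp\Big\{-\tfrac{\tilde\beta}{4}\Str QQ'+\tfrac{c_0}{2n}\Str Q'\Lambda_{\kappa,y_1,y_2}\Big\},
\]
so that, with the appropriate boundary prescription, $Z_{\beta n}=e^{E(x_1-x_2)}\,\Tr\big[\mathcal T^{\,n-1}\circ M_F\big]$, where $M_F$ is multiplication by $F$ at site $1$. Parametrising the spins as $Q_j=g_j\,Q\,g_j^{-1}$ with $g_j=\exp A_j$ near a common reference $Q$, the coupling term becomes, to leading order, the discrete Dirichlet energy $\frac{\tilde\beta}{8}\sum_j\Str[A_j-A_{j-1},Q]^2$ (made sense of, after the steepest-descent rotation dictated by the indefinite signature of $\Str$, as a damping quadratic form in the increments). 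The Gaussian integral over $A_j$ is one-loop exact and, crucially, its bosonic and fermionic determinants cancel by supersymmetry, so that no spurious Jacobian survives and the remaining integration is precisely over the zero (Goldstone) mode, i.e. over the common $Q$.

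The heart of the argument is a uniform bound on the fluctuations. Because the one-dimensional Dirichlet energy has no spectral gap of useful size (its lowest nonzero mode costs only $\sim\tilde\beta/n^2$), the relevant mechanism is not a mass but \emph{diffusion}: the field $A_j$ is a lattice random walk whose increments have variance $\sim 1/\tilde\beta\sim1/\beta$, so its total excursion across the chain is of order $\sqrt{n/\beta}$. Under the hypothesis $\beta>n\log^3 n$ this excursion is $o(1)$ even after a union bound over the $n$ sites and over the matrix components, so every $Q_j$ stays within $o(1)$ of $Q$. Consequently $\frac{c_0}{2n}\sum_j\Str Q_j\Lambda$ differs from $\pi\rho(E)\Str Q\Lambda$ by $o(1)$, and $F(Q_1)$ differs from $F(Q)$ by $o(1)$ away from the poles of the superdeterminant; feeding these back collapses $Z_{\beta n}$ onto the single Goldstone integral \eqref{t2.1}, which is the GUE generating function of \cite{FyoSom:96}.

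I expect the \emph{main obstacle} to be the non-compactness and non-self-adjointness of $\mathcal T$ inherited from the hyperbolic $\mathring U(1,1)$ block: the transverse Gaussian integral is only conditionally convergent and the saddle is complex, so the contour must be deformed carefully and the random-walk bound must be upgraded to exponential-moment control of the supremum over a non-compact target, rather than a naive sub-Gaussian maximal inequality. A secondary difficulty is the singular observable $F$: since $\Sdet^{-1}$ has poles where its matrix argument degenerates, one must verify that the concentration neighbourhood of $Q$ avoids these poles uniformly and that the Grassmann integration renders $F$ effectively bounded, so that both the replacement of $F(Q_1)$ by $F(Q)$ and the limiting contour integral in \eqref{t2.1} are legitimate. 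These two points are exactly where the machinery of \cite{SS:sigma} is needed, and where the extra powers in $\log^3 n$ are consumed in controlling the tails of the non-compact directions together with the supremum over the chain.
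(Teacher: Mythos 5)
Your overall picture --- transfer operator, collapse of the source term $\frac{c_0}{2n}\sum_j\Str Q_j\Lambda$ onto $\pi\rho(E)\Str Q\Lambda$ on constant configurations, and the identification of the non-compact $\mathring U(1,1)$ sector and the poles of $\Sdet^{-1}$ as the danger points --- matches the paper's target and intuition. But the step you place at ``the heart of the argument'' has a genuine gap. You treat the chain as a random walk with increment variance $\sim 1/\tilde\beta$ and invoke a union bound over sites to confine every $Q_j$ to an $o(1)$ neighbourhood of a common $Q$. There is no positive probability measure here to concentrate: after the Grassmann variables are integrated the weight is complex, the quadratic form in the increments on the hyperbolic block is indefinite before contour rotation, and the observable $D_1$ is unbounded near the degeneration locus of the superdeterminant. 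Maximal inequalities and union bounds are meaningless until one has absolute-integrability control uniform in $n$, which is precisely the hard part, not a technicality to be added later. Likewise, the claimed one-loop exactness with exact cancellation of bosonic and fermionic fluctuation determinants is broken by the symmetry-breaking source and by the site-$1$ observable; if the fluctuations genuinely cancelled, the result would hold for all $\beta\gg1$ independently of $n$, contradicting the expected localized behaviour for $\beta\ll n$. The $n$-dependent threshold $\beta>n\log^3n$ is the signature that the fluctuations contribute and must be \emph{estimated}, not localized away.

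What the paper actually does is different in mechanism. After the contour rotation $t_j+i\alpha_2\to t_j$ of Proposition \ref{p:turn} (which diagonalizes $\Lambda_{\kappa,y}$ and produces the explicit function $D_1$ of (\ref{D_1})), Proposition \ref{p:repr} writes $Z_{\beta n}$ as a contour integral $-\frac{e^{E(x_1-x_2)}}{2\pi i}\oint_{\omega_A}z^{n-1}(\widehat G(z)\widehat f,\widehat g)\,dz$ of a matrix element of the resolvent of the transfer operator $\widehat M=\widehat F\widehat K\widehat F$ acting on a finite-dimensional Grassmann sector tensored with $L_2(U)\otimes L_2(S)$. The limit is then obtained by resolvent perturbation theory around $\widehat M_0=\widehat F^2$ (hopping kernel replaced by the identity), with the error controlled by the operator bounds of Lemma \ref{l:bG}: $\|\widehat G\|\le C\log^2n/|z-1|$, $|\widetilde K_p|\le -C(\widetilde\Delta_U+\widetilde\Delta_S)/\beta$, and the new estimates (\ref{b_Gg}) on the vector $\widehat g$ generated by $D_1$, which require the explicit decay (\ref{cond_g}) in the non-compact variable $x=\sinh^2(t/2)$. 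The total error is $O(n\log^2n/\tilde\beta)$, which is where $\beta>n\log^3 n$ is consumed. If you want to pursue your route, you would need to replace the probabilistic concentration step by deterministic bounds of this resolvent type; as written, the proposal identifies the right answer and the right difficulties but does not supply the mechanism that closes the argument.
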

The paper is organized as follows. We are going to give a detailed proof for the case $M=1$ and explain some minor correction that should be done to prove the general case. In Section 2 we obtain the SUSY integral representation of $Z_{\beta nW}$ of (\ref{Z}). Section 3 is devoted to the derivation of sigma-model approximation, i.e. to the proof of Theorem \ref{thm:sigma_mod}. In Section 4 we prove Theorem \ref{thm:2} relying on the similar study in \cite{SS:sigma}.

\subsection{Notation}
We denote by $C$, $C_1$, etc. various $n$, $\beta$, $W$-independent quantities below, which
can be different in different formulas. Integrals
without limits denote the integration (or the multiple integration) over the whole
real axis, or over the Grassmann variables. 

Moreover,
\begin{itemize}

    \item[$\bullet$] $N=Wn;$


    \item[$\bullet$] indices $i,j,k$ vary from 1 to $n$ and correspond to the number of  block in $H_N$, index $l$ is always $1$ or $2$ (this is the field index),
    and Greek indices $\alpha, \gamma$ vary from $1$ to $W$ and correspond to the position of
    the element in the block;


    \item[$\bullet$] variables $\phi$ and $\Phi$ with different indices are complex variables or vectors
    correspondingly; if $x_j$ means some variable (vector or matrix) which corresponds to the site $j=1,..,n$, then $x$ means
    vector $\{x_j\}_{j=1}^n$,  $dx=\prod dx_j$, and $dx_j$ means the product of the differentials which correspond to functionally
    independent coefficients of $x_j$;

    \item[$\bullet$] variables $\psi$, $\Psi$, $\rho$, and $\tau$ with different indices are Grassmann variables or vectors or matrices
    correspondingly; if $\rho_j$ corresponds to the site $j=1,..,n$, then $\rho$ means
    vector $\{\rho_j\}_{j=1}^n$,  $d\rho=\prod d\rho_j$, and $d\rho_j$ means the product of the differentials which correspond the components
    (for vectors) or entries (for matrices)  taken into the lexicographic order; 
   
  \end{itemize}

\begin{fleqn}[18pt]
    \begin{align}\label{a_pm}
    \bullet \,\,\, & a_{\pm}=\dfrac{-iE\pm\sqrt{4-E^2}}{2},\quad c_{\pm}=1+a_{\pm}^{-2},\quad  c_0=
\sqrt{4-E^2}=2\pi\rho(E);\\
    \label{L_pm}
    &L=\hbox{diag}\,\{1,-1\},\quad L_{\pm}=\hbox{diag}\,\{a_+,a_-\};
    \end{align}
  
  \begin{align}\label{U}
   \bullet \,\,\, &
  \mathring{U}(2)=U(2)/U(1)\times U(1),\quad \mathring{U}(1,1)=U(1,1)/U(1)\times U(1),
  \end{align}
   \hskip 1cm where $U(p)$ is a group of $p\times p$ unitary matrices, and $U(1,1)$ is a group of $2\times 2$ hyperbolic 
   
   \noindent \hskip 1cm matrices $S$ such that $S^*LS=L$;
 \end{fleqn}

\begin{fleqn}[18pt]

\begin{align}\label{L_cal}
\bullet \,\,\,\mathcal{L}_\pm(E)&=\Big\{r\Big(-iE/2\pm\sqrt{4-E^2}/2\Big)|r\in [0,+\infty)\Big\};
\end{align}

\begin{align}
\bullet \,\,\,\tilde{\beta}=c_0^2\,\beta \label{beta_til};
\end{align}

\begin{align}\label{Z_12}
\bullet \,\,\,\ 
&Z_1=E_1\cdot I+\dfrac{1}{N}\Lambda_{\kappa,y_1}, \quad\, Z_2=E_2\cdot I+\dfrac{1}{N}\Lambda_{\kappa,y_2},\\ \notag
&\Lambda_{\kappa,y}=\left(\begin{matrix}-i\kappa&-y\\ y& i\kappa\end{matrix}\right),\quad \sigma=\left(\begin{matrix}
0&1\\
-1&0
\end{matrix}\right).
\end{align}
\end{fleqn}

\section{Integral representations}\label{s:2}
%
In this section we obtain an integral representation for $Z_{\beta nW}(\kappa,z_1,z_2)$ of (\ref{Z}).
\begin{proposition}\label{p:int_repr}The determinant ratio $Z_{\beta nW}(\kappa,z_1,z_2)$ of (\ref{Z}) can be written as follows:
\begin{align}\notag
&Z_{\beta nW}(\kappa,z_1,z_2)=C_{n,W}\displaystyle\int 
\exp\big\{-
i\sum\limits_{j=1}^n \Tr (LY_j+\delta_{j1}\sum_{a=1}^MLQ_a)Z_2-i\sum_{a=1}^M\gamma_a\Tr (LQ_a)\sigma\big\}
\end{align}
\begin{align} \notag
 &\times\exp\Big\{-
\dfrac{1}{2}\sum\limits_{j,k=1}^nJ_{jk}\Tr (LY_j+\delta_{j1}\sum_{a=1}^MLQ_a)(LY_k+\delta_{k1}\sum_{a=1}^MLQ_a)-\dfrac{1}{2}\sum\limits_{j,k=1}^n(J^{-1})_{jk}\Tr X_jX_k\Big\}\\
&\times \det\widetilde{ \mathcal D}\cdot \prod\limits_{j=1}^n\dfrac{\mdet^W (X_j-iZ_1)\mdet^W Y_j }{\mdet^2 Y_j}\cdot \prod\limits_{a=1}^M\dfrac{\mdet (X_1-iZ_1+i\gamma_a\sigma)}{\mdet (X_1-iZ_1)\cdot \mdet\, Y_1} \,\,dXdYdQ ,\label{sup}
\end{align}
where 
\[
\widetilde{\mathcal{D}}=\Big\{J^{-1}_{jk}\mathbf{1}_4-\delta_{jk}\Big((X_j-iZ_1)^{-1}\otimes (LY_j)+\delta_{j1}\sum\limits_{a=1}^M(X_j-iZ_1+i\gamma_a\sigma)^{-1}\otimes (LQ_a)\Big)\Big\}_{j,k=1}^n,
\]
\begin{align*}
Q_a=\left(\begin{matrix}
\bar\phi_{11a}\phi_{11a} & \bar\phi_{11a}\phi_{21a}\\ 
\bar\phi_{21a}\phi_{11a}&\bar\phi_{21a}\phi_{21a}
\end{matrix}\right),\quad dQ=\prod\limits_{a=1}^M\prod\limits_{l=1}^2\dfrac{d\Re \phi_{l1a}\, d\Im \phi_{l1a}}{\pi},
\end{align*}
for complex $\phi_{l1a}$. $\{X_j\}_{j=1}^n$ are Hermitian  $2\times 2$ matrices with standard $dX_j$ , $\{Y_j\}_{j=1}^n$ are $2\times 2$ 
positive Hermitian matrices with $dY_j$ of  Proposition \ref{p:supboz}, and $Z_{1,2}$ are defined in (\ref{Z}), and
\[
C_{n,W}=\dfrac{\mdet^2 J\cdot (-1)^{nW}}{(2\pi^3)^{n}\big((W-1)!(W-2)!\big)^{n-1}((W-M-1)!(W-M-2)!)},
\]
\end{proposition}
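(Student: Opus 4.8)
The plan is to follow the supersymmetric (SUSY) scheme of \cite{SS:sigma}, inserting the finite-rank deformation $i\Gamma_M$ by hand at the level of the Gaussian integral representation. First I would Hermitize each factor: since the generating function involves $\det\{(\mathcal H-z_l)(\mathcal H-z_l)^*+\kappa^2/N^2\}$ rather than a single resolvent, I use the identity $\det(AA^*+\kappa^2/N^2)=\det\bigl(\begin{smallmatrix} \kappa/N & -iA\\ -iA^* & \kappa/N\end{smallmatrix}\bigr)$ (with $A=\mathcal H-z_l$) to double the ``field'' index $l=1,2$; this is the source of the $2\times2$ blocks carrying $Z_{1,2}$ and $\sigma$. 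I then represent the numerator determinant ($z_1$) by a Grassmann Gaussian integral and the denominator inverse-determinant ($z_2$) by a complex Gaussian integral, obtaining a superintegral in which $\mathcal H=H_N+i\Gamma_M$ appears only through quadratic forms in the fields. Because $\Gamma_M$ is diagonal and supported on the orbitals $\alpha=1,\dots,M$ of the site $j=1$, it modifies only those specific components; this is the origin of all the $\delta_{j1}$ factors and of the distinguished treatment of site $1$.

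Next I would average over $H_N$ using the covariance (\ref{H}). The Gaussian average produces a quartic self-interaction of the fields which, by the structure of the covariance, organizes into a sum over sites of (bilinear at site $j$)$\times$(bilinear at site $k$) weighted by $J_{jk}$. I then apply the Hubbard--Stratonovich transformation separately in the two sectors, introducing the $2\times2$ Hermitian matrices $X_j$ (with the contour and positivity dictated by convergence, giving the positive Hermitian $Y_j$ in the other sector). This linearizes the quartic terms at the cost of the Gaussian weights $\exp\{-\tfrac12\sum_{jk}(J^{-1})_{jk}\Tr X_jX_k\}$ and the matching $Y$-term, after which the original fields enter only quadratically.

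The fields are then integrated out site by site. At a generic site all $W$ orbitals are equivalent and give the clean powers $\det^W(X_j-iZ_1)$ and $\det^W Y_j$; this reproduces the $((W-1)!(W-2)!)^{n-1}$ part of $C_{n,W}$. At site $1$ the $M$ deformed orbitals must be peeled off: the $W-M$ undeformed ones still contribute a power of the determinant, while each deformed orbital $a$ carries the shifted factor $\det(X_1-iZ_1+i\gamma_a\sigma)$, the shift by $i\gamma_a\sigma$ encoding the opposite signs with which the Hermitian $\Gamma_M$ enters the $(\mathcal H-z_1)$ and $(\mathcal H-z_1)^*$ blocks. Rather than integrating these $M$ bosonic modes out completely I would keep explicit complex variables $\phi_{l1a}$, packaged as the rank-one matrices $Q_a$; this yields the product $\prod_a \det(X_1-iZ_1+i\gamma_a\sigma)/(\det(X_1-iZ_1)\det Y_1)$ and the modified factorial $(W-M-1)!(W-M-2)!$. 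Collecting the residual coupling between the two sectors (and the $Q_a$) that the averaging has created leaves the $4n\times4n$ determinant $\det\widetilde{\mathcal D}$, and assembling all Jacobians and normalizations gives $C_{n,W}$.

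The main obstacle I expect is precisely the bookkeeping of the deformation. In the undeformed model of \cite{SS:sigma} every orbital at every site is interchangeable and the integration is uniform; here the rank-$M$ term breaks this at a single site, forcing a careful separation of the $M$ special orbitals, the correct assignment of the $i\gamma_a\sigma$ shifts, and the tracking of how the retained $\phi_{l1a}$ variables couple the bosonic and fermionic sectors into $\widetilde{\mathcal D}$. Getting the integration contours and convergence right (positivity of the $Y_j$, the role of the regulator $\kappa$) and matching every constant and factorial in $C_{n,W}$ is the remaining delicate, though essentially routine, part.
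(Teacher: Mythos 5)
Your overall architecture is right and matches the paper's: Hermitize each factor into a $2N\times 2N$ matrix with positive real part, represent the numerator by Grassmann fields and the denominator by complex fields, average over $H_N$, and treat site $1$ (orbitals $\alpha=1,\dots,M$) separately because that is where $\Gamma_M$ lives. However, there is one structural error that would prevent you from arriving at formula (\ref{sup}) as stated. You propose to apply Hubbard--Stratonovich \emph{in both sectors}. The paper applies it only in the Grassmann sector (linearizing $\Tr(\tilde X_jL)(\tilde X_kL)$ and producing the Hermitian $X_j$ with the weight $\exp\{-\tfrac12\sum(J^{-1})_{jk}\Tr X_jX_k\}$); the bosonic sector is instead handled by the superbosonization identity of Proposition \ref{p:supboz}, i.e.\ a change of variables from the $\phi$-fields to their Gram matrices $Y_j=\tilde Y_j$ with Jacobian $\propto \mdet^{\,p-2}Y_j$ over \emph{positive} Hermitian matrices. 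The stated formula forces this: the $Y$-exponent carries $J_{jk}$ (not $(J^{-1})_{jk}$), the $Y_j$ are positive with $\mdet^W Y_j/\mdet^2 Y_j$ appearing with a \emph{positive} power of $W$, and the factorials $((W-1)!(W-2)!)^{n-1}(W-M-1)!(W-M-2)!$ in $C_{n,W}$ are exactly the normalizations $\pi^{2p-1}/((p-1)!(p-2)!)$ of the bosonization formula with $p=W$ at generic sites and $p=W-M$ at site $1$. A bosonic Hubbard--Stratonovich would instead yield a field ranging over all Hermitian matrices, a $(J^{-1})_{jk}$ quadratic form, and $\mdet^{-W}$ factors after integrating out $\phi$ --- a different (dual) representation, not (\ref{sup}). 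Relatedly, your claim that keeping the $\phi_{l1a}$ explicit ``yields'' $\prod_a\mdet(X_1-iZ_1+i\gamma_a\sigma)/\mdet(X_1-iZ_1)$ misattributes that ratio: it comes from the Grassmann integration (the deformation shifts the quadratic form of $\psi_{\cdot 1\alpha}$, $\alpha\le M$, so those orbitals contribute $\mdet(X_1-iZ_1+i\gamma_a\sigma)$ instead of $\mdet(X_1-iZ_1)$ to $\mdet M^{(1)}$), while keeping the $\phi_{l1a}$ as the rank-one matrices $Q_a$ is what produces the $\delta_{j1}LQ_a$ insertions and the reduced power $p=W-M$ at site $1$.

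A second, smaller gap: you do not explain how the residual fermion--boson coupling created by the averaging, which after the Grassmann integration sits inside a $2Wn\times 2Wn$ determinant, collapses to the $4n\times 4n$ determinant $\det\widetilde{\mathcal D}$. The mechanism is the factorization $\mathcal{M}=AB$ with $\mdet(1-AB)=\mdet(1-BA)$, which trades the orbital index for the $2\times 2$ field/spin structure and replaces the orbital sums by $\tilde Y_j$ and $Q_a$. This identity is the key reduction step and needs to be stated; without it the claim that ``collecting the residual coupling leaves $\det\widetilde{\mathcal D}$'' is not justified.
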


\begin{proof} To simplify computation, we are going to present the detailed derivation for the case $M=1$. 
General case can be obtained similarly with minor modifications.

To obtain SUSY integral representation, it is useful to rewrite $Z_{\beta nW}$ in the more convenient form. Notice that if we set
\begin{equation}\label{P}
P:=P(E,\kappa,y)=\left(\begin{matrix}
\dfrac{\kappa}{N}-i(H_N-E)&-i(\Gamma-\dfrac{y}{N})\\
-i(\Gamma-\dfrac{y}{N})&\dfrac{\kappa}{N}+i(H_N-E)
\end{matrix}\right),\quad T=\dfrac{1}{\sqrt{2}}\left(\begin{matrix}1&i\\ i&1\end{matrix}\right)\otimes I_N,
\end{equation}
then
\[
\mdet(TPT)=\mdet\Big\{(\mathcal{H}+i\Gamma-E-\dfrac{iy}{N})(\mathcal{H}-i\Gamma-E+\dfrac{iy}{N})+\dfrac{\kappa^2}{N^2}\Big\}.
\]
Hence
\begin{equation*}
Z_{\beta nW}(\kappa,z_1,z_2)=\mathbb{E}\left[\dfrac{\mdet\,P_1}{\mdet\,P_2}\right],
\end{equation*}
where
\begin{equation}\label{P_1,2}
P_1=P(E_1,\kappa,y_1),\quad P_1=P(E_2,\kappa,y_2).
\end{equation}
Such transformation is needed since we want $P_1$, $P_2$ to have positive real part.

Introduce complex and Grassmann fields:
\begin{align*}
\Phi_l=\{\phi_{lj}\}^t_{j=1,..,n},&\quad \phi_{l j}=(\phi_{l j 1}, \phi_{l j 2},\ldots,
\phi_{ l j W}),\quad l=1,2,
\quad -\quad \hbox{complex},\\
\Psi_l=\{\psi_{l j}\}^t_{j=1,..,n}, &\quad \psi_{ l j}=(\psi_{l j 1}, \psi_{l j 2},\ldots,
\psi_{l j W}),\quad l=1,2, \quad -\quad \hbox{Grassmann}.
\end{align*}
Since $P_1$, $P_2$ have positive real part, using (\ref{G_C}) -- (\ref{G_Gr}) (see Appendix) we can rewrite $\mdet\, P_1$ and $\mdet\, P_2$ of  (\ref{P}) -- (\ref{P_1,2}) and get
\begin{equation*}
\begin{array}{c}
Z_{\beta nW}(\kappa,z_1,z_2)
=\pi^{-2Wn}\mathbf{E}\Big\{ \displaystyle{\int}\exp\{-\Psi_1^+(\frac{\kappa}{N}+iE_1-iH_N)\Psi_1
-\Psi_2^+(\frac{\kappa}{N}-iE_1+iH_N)\Psi_2\}\\
\times\exp\{-\Phi_1^+(\frac{\kappa}{N}+iE_2-iH_N)\Phi_1-\Phi_2^+
(\frac{\kappa}{N}-iE_2+iH_N)\Phi_2\}\\
\times \exp\{i\Psi_1^+(\Gamma+\frac{y_1}{N})\Psi_2+i\Psi_2^+(\Gamma+\frac{y_1}{N})\Psi_1+i\Phi_1^+(\Gamma+\frac{y_2}{N})\Phi_2+i\Phi_2^+(\Gamma+\frac{y_2}{N})\Phi_1\}d\Phi d\Psi\Big\}
\end{array}
\end{equation*}
\begin{equation*}
\begin{array}{c}
=\displaystyle\int  \exp\Big\{-(\frac{\kappa}{N}+iE_1)\Psi_1^+\Psi_1-(\frac{\kappa}{N}-iE_1)\Psi_2^+\Psi_2-(\frac{\kappa}{N}+iE_2)\Phi_1^+\Phi_1-(\frac{\kappa}{N}-iE_1)\Phi_2^+\Phi_2\Big\}\\
\times\exp\{\dfrac{iy_1}{N}\Big(\Psi_1^+\Psi_2+\Psi_1^+\Psi_2\Big)+\dfrac{iy_2}{N}\Big(\Phi_1^+\Phi_2+\Phi_1^+\Phi_2\Big)\}\\
\times\exp\{i\gamma(\bar\psi_{111}\psi_{211}+\bar\psi_{211}\psi_{111}+\bar\phi_{111}\phi_{211}+\bar\phi_{211}\phi_{111})\Big\}\\
\times\mathbf{E}\Big\{\exp\Big\{\sum\limits_{j\le k}\sum\limits_{\alpha, \gamma}
\Big(i\Re H_{jk,\alpha\gamma}\chi^+_{jk,\alpha\gamma}
-\Im H_{jk,\alpha\gamma}\chi^-_{jk,\alpha\gamma}\Big)\Big\}\Big\}d\Phi d\Psi\,\,,
\end{array}
\end{equation*}
where 
\begin{align*}
&\chi^{\pm}_{jk,\alpha\gamma}=\eta_{jk,\alpha\gamma}\pm \eta_{kj,\gamma\alpha},\\
&\eta_{jk,\alpha\gamma}=\overline{\psi}_{1 j \alpha}\psi_{1 k \gamma}-
\overline{\psi}_{2j \alpha}\psi_{2 k \gamma}+\overline{\phi}_{1 j \alpha}\phi_{1 k \gamma}-
\overline{\phi}_{2j\alpha}\phi_{2 k \gamma},\\
&\eta_{jj,\alpha\alpha}=(\overline{\psi}_{1 j \alpha}\psi_{1 j \alpha}-
\overline{\psi}_{2j\alpha}\psi_{2 j \alpha}+\overline{\phi}_{1 j \alpha}\phi_{1 j \alpha}-
\overline{\phi}_{2j \alpha}\phi_{2 j \alpha})/2.
\end{align*}
Averaging over (\ref{pr_l}), we get
\begin{align*}
&Z_{\beta nW}(\kappa,z_1,z_2)=\pi^{-2Wn}\int d\Phi d\Psi\,  \exp\{\dfrac{iy_1}{N}\Big(\Psi_1^+\Psi_2+\Psi_1^+\Psi_2\Big)+\dfrac{iy_2}{N}\Big(\Phi_1^+\Phi_2+\Phi_1^+\Phi_2\Big)\}\\
&\times \exp\Big\{-(\frac{\kappa}{N}+iE_1)\Psi_1^+\Psi_1-(\frac{\kappa}{N}-iE_1)\Psi_2^+\Psi_2-(\frac{\kappa}{N}+iE_2)\Phi_1^+\Phi_1-(\frac{\kappa}{N}-iE_1)\Phi_2^+\Phi_2\Big\}\\
&\times\exp\{i\gamma(\bar\psi_{111}\psi_{211}+\bar\psi_{211}\psi_{111}+\bar\phi_{111}\phi_{211}+\bar\phi_{211}\phi_{111})\Big\}\\
&\times\exp\Big\{-\sum\limits_{j<k}\sum\limits_{\alpha,\gamma} J_{jk}\,\,
\eta_{jk,\alpha\gamma}\eta_{kj,\gamma\alpha}-\frac{1}{2}\sum\limits_{j, \alpha} J_{jj}\,\,
\eta_{jj,\alpha\alpha}^2\Big\}.
\end{align*}
Define 
\begin{align*}
Q=\left(\begin{matrix}
\bar\phi_{111}\phi_{111} & \bar\phi_{111}\phi_{211}\\ 
\bar\phi_{211}\phi_{111}&\bar\phi_{211}\phi_{211}
\end{matrix}\right)
\end{align*}
and set
\begin{align*}
& \tilde{Y}_j=\left(
\begin{array}{ll}
\phi_{1j}^+\phi_{1j}& \phi_{1j}^+\phi_{2j}\\
\phi_{2j}^+\phi_{1j}& \phi_{2j}^+\phi_{2j}
\end{array}
\right), j\ne 1,\quad \tilde{Y}_1=\left(
\begin{array}{ll}
\sum\limits_{\alpha=2}^{W}\bar\phi_{11\alpha}\phi_{11\alpha}& \sum\limits_{\alpha=2}^{W}\bar\phi_{11\alpha}\phi_{21\alpha}\\
\sum\limits_{\alpha=2}^{W}\bar\phi_{21\alpha}\phi_{11\alpha}& \sum\limits_{\alpha=2}^{W}\bar\phi_{21\alpha}\phi_{21\alpha}
\end{array}\right)\\
&\tilde{X}_j=\left(
\begin{array}{ll}
\psi_{1j}^+\psi_{1j}& \psi_{1j}^+\psi_{2j}\\
\psi_{2j}^+\psi_{1j}& \psi_{2j}^+\psi_{2j}
\end{array}
\right).
\end{align*}
Thus, 
\begin{eqnarray}\label{G_av}
&Z_{\beta nW}(\kappa,z_1,z_2)=\pi^{-2Wn}\intd d\Phi d\Psi\,\, \exp\Big\{-i\sum\limits_{j=1}^n \Tr \tilde{X}_jLZ_1-
i\sum\limits_{j=1}^n \Tr (L\tilde{Y}_j+\delta_{j1}LQ)Z_2\Big\}\\ \notag
&\times\exp\Big\{i\gamma(\bar\psi_{111}\psi_{211}+\bar\psi_{211}\psi_{111})-i\gamma\Tr (LQ)\sigma\Big\}
\end{eqnarray}
\begin{eqnarray}\notag
&\times \exp\Big\{\dfrac{1}{2}\sum\limits_{j,k=1}^nJ_{jk}\Tr (\tilde{X}_jL)(\tilde{X}_kL)-
\dfrac{1}{2}\sum\limits_{j,k=1}^nJ_{jk}\Tr (L\tilde{Y}_j+\delta_{j1}LQ)(L\tilde{Y}_k+\delta_{k1}LQ)\Big\}\\ \notag
&\times \exp\Big\{-\sum\limits_{j,k=1}^nJ_{jk}
\big(\overline{\psi}_{1j}\psi_{1k}(\overline{\phi}_{1k}\phi_{1j}-
\overline{\phi}_{2k}\phi_{2j})+\overline{\psi}_{2j}\psi_{2k}(\overline{\phi}_{2k}\phi_{2j}-
\overline{\phi}_{1k}\phi_{1j})\big)\Big\},
\end{eqnarray}
where $L$, $Z_{1,2}$, $\sigma$ are defined in (\ref{L_pm}) and (\ref{Z_12}).

Using the standard Hubbard-Stratonovich transformation, we obtain
\begin{multline}\label{Hub}
\big(2\pi^2\big)^{n}\mdet^{2}J\cdot\exp\Big\{\dfrac{1}{2}\sum\limits_{j,k=1}^nJ_{jk}\Tr (\tilde{X}_jL)(\tilde{X}_kL)\Big\}\\ 
=\int \exp\Big\{-\dfrac{1}{2}\sum\limits_{j,k=1}^n(J^{-1})_{jk}\Tr X_jX_k+
\sum\limits_{j=1}^n\Tr X_j\big(\tilde{X}_jL\big)\Big\}dX,
\end{multline}
where $X_j$ are $2\times 2$ Hermitian matrices with the standard measure $dX_j$. 

Substituting (\ref{Hub}) to (\ref{G_av}) and integrating over $d\Psi$ (see (\ref{G_Gr})), we get
\begin{eqnarray}\notag
&Z(\kappa, z_1, z_2)=\dfrac{\mdet^{-2}J}{\big(2\pi^{2(1+W)}\big)^{n}}
\int \exp\Big\{-\dfrac{1}{2}\sum\limits_{j,k=1}^n(J^{-1})_{jk}\Tr X_jX_k-i\gamma\Tr (LQ)\sigma\Big\}\\ \label{G_M}
&\times\exp\Big\{-
i\sum\limits_{j=1}^n \Tr (L\tilde{Y}_j+\delta_{j1}LQ)Z_2-
\dfrac{1}{2}\sum\limits_{j,k=1}^nJ_{jk}\Tr (L\tilde{Y}_j+\delta_{j1}LQ)(L\tilde{Y}_k+\delta_{k1}LQ)\Big\}\\ \notag
&\times  \mdet\, M\cdot\,d\Phi\,
 \, dX
\end{eqnarray}
with $M=M^{(1)}-M^{(2)}$. Here $M^{(1)}$ and $M^{(2)}$ are  $2Wn\times 2Wn$ matrices with entries
\begin{align}
\notag
M^{(1)}_{l j\alpha, l' k\gamma}&=\delta_{jk}\delta_{\alpha\gamma} (C_{j\alpha})_{ll'}L_{ll},\quad
j,k=1,..,n,\,\,\alpha,\gamma=1,\ldots, W,\,\,l,l'=1,2, \\ \label{M_1,2}
M^{(2)}_{l j\alpha, l' k\gamma}&=J_{jk}\delta_{ll'} L_{ll}\sum\limits_{\nu=1}^2\varphi_{\nu j\alpha}\overline{\varphi}_{\nu k\gamma} L_{\nu\nu}
\end{align}
with
\begin{equation}\label{C}
C_{j\alpha}=\begin{cases}
X_1-iZ_1+i\gamma\sigma,&j=\alpha=1;\\
X_j-iZ_1,&\text{otherwise}.
\end{cases}
\end{equation}
We can rewrite
\begin{equation*}
\mdet M=\mdet M^{(1)}\cdot \mdet \Big(1-\big(M^{(1)}\big)^{-1}M^{(2)}\Big)=:\mdet M^{(1)}\cdot \mdet \Big(1-\mathcal{M}\Big)
\end{equation*}
with
\begin{equation}\label{Mcal}
\mathcal{M}_{l j\alpha, l' k\gamma }=J_{jk}(C_{j\alpha})_{ll'}^{-1}\sum\limits_{\nu=1}^2\varphi_{\nu j\alpha}\overline{\varphi}_{\nu k\gamma} L_{\nu\nu}.
\end{equation}
Note that $\mathcal{M}=AB$, where
\begin{align}\notag
A_{l j\alpha,l' k\sigma}&=J_{jk}(C_{j\alpha})^{-1}_{ll'}\,\varphi_{\sigma j\alpha}, \quad j,k\in \Lambda,\,\,\alpha,\gamma=1,\ldots, W,\,\,l,l',\sigma=1,2,\\ \label{AB}
B_{l j\sigma, l' k\alpha }&=\delta_{jk}\delta_{ll'}L_{\sigma\sigma}\,\overline{\varphi}_{\sigma k\alpha}. 
\end{align}
Therefore, using that $\mdet (1-AB)=\mdet (1-BA)$, (\ref{Mcal}), and (\ref{AB}), we get
\begin{equation}\label{ch_M}
\mdet \Big(1-\mathcal{M}\Big)=\mdet \Big(1-BA\Big)=:\mdet \Big(1-\tilde{\mathcal{M}}\Big),
\end{equation} 
where
\begin{align*}
\tilde{\mathcal{M}}_{l j\sigma, l'k\sigma' }&=\sum\limits_{p,\alpha,\nu}B_{l j\sigma, \nu p\alpha} A_{\nu p\alpha, l' k\sigma' }=
J_{jk}\sum\limits_{\alpha=1}^W(C_{j\alpha})^{-1}_{ll'}\overline{\varphi}_{\sigma j\alpha}\varphi_{\sigma' j\alpha}L_{\sigma\sigma}
\end{align*}
\begin{align}
\notag
&=\begin{cases}
J_{jk}(X_j-iZ_1)^{-1}_{ll'} (L\tilde{Y}_j)_{\sigma\sigma'},&j>1\\
J_{1k}(X_1-iZ_1)^{-1}_{ll'} (L\tilde{Y}_1)_{\sigma\sigma'}+J_{1k}\left(X_1-iZ_1+i\gamma\sigma\right)^{-1}_{ll'} (LQ)_{\sigma\sigma'},& j=1
\end{cases}
\end{align}
Here we substituted (\ref{C}).

This yields
\begin{align*} \notag
&\mdet (1-\tilde{\mathcal{M}})=\mdet \big\{\delta_{jk}-J_{jk}\Big((X_j-iZ_1)^{-1}\otimes (L\tilde{Y}_j)+\delta_{j1}\left(X_1-iZ_1+i\gamma\sigma\right)^{-1}\otimes (LQ)\Big)\big\}\\ 
&=\mdet^4 J\cdot \mdet\big\{J^{-1}_{jk}\mathbf{1}_4-\delta_{jk}\Big((X_j-iZ_1)^{-1}\otimes (\tilde{Y}_jL)+\delta_{j1}\left(X_1-iZ_1+i\gamma\sigma\right)^{-1}\otimes (LQ)\Big)\big\}.
\end{align*}
Besides,
\begin{equation}\label{mdet1}
\mdet\, M^{(1)}=(-1)^{nW}\dfrac{\mdet (X_1-iZ_1+i\gamma\sigma)}{\mdet (X_1-iZ_1)}\cdot \prod\limits_{j=1}^n\mdet^W (X_j-iZ_1).
\end{equation}
Now substituting (\ref{M_1,2}) -- (\ref{Mcal}) and (\ref{ch_M}) -- (\ref{mdet1}) to (\ref{G_M}) and
applying the bosonization formula (see Proposition \ref{p:supboz}), we obtain (\ref{sup}) which finishes the proof for the case $M=1$.
\end{proof}


\section{Derivation of  the sigma-model approximation}\label{s:3}
\subsection{Proof of Theorem \ref{thm:sigma_mod}}

Again we are going to concentrate on the case $M=1$. 

Let $\beta$ and $n$ be fixed, and  $W\to \infty$. 
Defining $n\times n$ matrix $R$ as
\begin{equation*}
J^{-1}=W\big(1-\dfrac{\beta}{W}\triangle+\dfrac{\beta^2}{W^2}\triangle^2-\ldots\big)=:W\big(1-\dfrac{\beta}{W}\triangle+\dfrac{1}{W^2}R\big),
\end{equation*}
putting $B_j=W^{-1}LY_j$, and  shifting $X_j-iZ_1\to X_j$, we can rewrite (\ref{sup}) of Proposition \ref{p:int_repr} as
\begin{eqnarray} \notag
&Z_{\beta nW}(\kappa,z_1,z_2)=C^{(1)}_{W, n}\displaystyle\int  dX\,dB\, dQ\cdot\det \,D\cdot \prod\limits_{j=1}^n\dfrac{\mdet^W X_j\cdot \mdet^W B_j }{\mdet^2 B_j}\cdot \dfrac{\mdet (X_1+i\gamma\sigma)}{\mdet\, X_1\cdot \mdet\, B_1}\\ \label{sup1}
&\times \exp\Big\{-
 \Tr (LQ)(iZ_2+i\gamma\sigma)-\dfrac{W}{2}\sum\limits_{j=1}^n \Big(\Tr (B_j+iZ_2)^2+\Tr (X_j+iZ_1)^2\Big)\Big\}
\end{eqnarray}
\begin{eqnarray}\notag 
&\times \exp\Big\{-\dfrac{1}{2W}\Tr (LQ)^2-\Tr B_1 (LQ)+\dfrac{\beta}{W}\Tr (B_1-B_2)(LQ)+\dfrac{\beta}{2W^2}\Tr (LQ)^2\Big\} 
 \\ \notag
&\times\exp\Big\{
\dfrac{\beta}{2}\sum\limits_{j=1}^{n-1}\Big(\Tr (B_j-B_{j+1})^2-\Tr (X_j-X_{j+1})^2\Big)+\dfrac{1}{2W}\sum\limits_{j, k}
R_{jk}\,\Tr (X_j+iZ_1)(X_k+iZ_1)\Big\},
\end{eqnarray}
where 
\[
\mathcal{D}=\Big\{(1-\dfrac{\beta}{W}\triangle+\dfrac{1}{W^2}R)_{jk}\mathbf{1}_4-\delta_{jk}\Big(X_j^{-1}\otimes B_j+\dfrac{\delta_{j1}}{W}(X_1+i\gamma\sigma)^{-1}\otimes (LQ)\Big)\Big\}_{j,k=1}^{n}
\]
and 
\begin{align*}
C^{(1)}_{W, n}&=\dfrac{\mdet^2 J\cdot W^{8n}\cdot W^{2(W-2)(n-1)}\cdot W^{2(W-3)} \cdot e^{-Wn\Tr Z_2^2/2}}{(2\pi^3)^{n}\big((W-1)!(W-2)!\big)^{n-1}((W-2)!(W-3)!)}\\ \notag
&=\dfrac{ W^{4n}\cdot e^{2nW-Wn\Tr Z_2^2/2}}{(2\pi^2)^{2n}}\cdot \Big(1+O\big(W^{-1}\big)\Big).
\end{align*}
Change the variables to
\begin{align*}
X_j&=U_j^*\hat{X}_jU_j,\,\,\,\,\,\, \hat{X}_j=\hbox{diag}\,\{x_{j,1},x_{j,2}\},\,\,\,\,  U_j\in \mathring{U}(2),
\quad\,\,\, x_{j,1}, x_{j,2}\in \mathbb{R},\\ \notag
B_j&=S_j^{-1}\hat{B}_jS_j,\,\,\,\, \hat{B}_j=\hbox{diag}\,\{b_{j,1},b_{j,2}\},\quad
S_j\in \mathring{U}(1,1),
\,\,\, b_{j,1}\in \mathbb{R}^+,\, b_{j,2}\in \mathbb{R}^-.
\end{align*}
The Jacobian of such a change is
\[
2^{n}(\pi/2)^{2n}\prod\limits_{j=1}^n(x_{j,1}-x_{j,2})^2
\prod\limits_{j=1}^n(b_{j,1}-b_{j,2})^2.
\]
This and (\ref{sup1}) yield
\begin{align}\label{G_last}
&Z_{\beta nW}(\kappa,z_1,z_2)=C^{(2)}_{W, n}
\int  dS\,dU\,dQ
\int d x\int_{\mathbb{R}_+^{n}\times \mathbb{R}_-^{n}} d b
\cdot \prod\limits_{j=1}^n\dfrac{(x_{j,1}-x_{j,2})^2
(b_{j,1}-b_{j,2})^2}{b_{j,1}^2b_{j,2}^2}\\ \notag
&\times \dfrac{\mdet (\hat X_1+i\gamma U_1\sigma U_1^*)}{x_{1,1}x_{1,2}\cdot b_{1,1}b_{1,2}}
\cdot \mdet \,\mathcal{D}\cdot
\exp\Big\{-W\sum\limits_{j=1}^n\sum\limits_{l=1}^2\left(f(x_{j,l})+f(b_{j,l})\right)\Big\}\\ \label{exp1}
&\times \exp\Big\{-
 \Tr (LQ)(iZ_2+i\gamma\sigma)-\Tr S_1^{-1}\hat B_1S_1 (LQ)\Big\}\\ \notag
&\times \exp\Big\{-\dfrac{1}{2W}\Tr (LQ)^2+\dfrac{\beta}{W}\Tr (S_1^{-1}\hat B_1S_1-S_2^{-1}\hat B_2S_2)(LQ)+\dfrac{\beta}{2W^2}\Tr (LQ)^2\Big\}
\end{align}
\begin{align}\label{exp2}
&\times \exp\Big\{
\dfrac{\beta}{2}\sum\limits_{j=2}^n\Big(\Tr (S_j^{-1}\hat{B}_jS_j-S_{j-1}^{-1}\hat{B}_{j-1}S_{j-1})^2-\Tr (U_j^*\hat{X}_jU_j-U_{j-1}^*\hat{X}_{j-1}U_{j-1})^2\Big)\Big\}\\ \notag
&\times\exp\Big\{\dfrac{1}{2W}\sum\limits_{j, k}
R_{jk}\,\Tr (U_j^*\hat{X}_jU_j+iZ_1)(U_k^*\hat{X}_kU_k+iZ_1)\Big\}\\ \label{exp3}
&\times\exp\Big\{-\dfrac{1}{n}\sum\limits_{j=1}^n\Big(\Tr U_j^*\hat{X}_jU_j\Big(\begin{matrix}\kappa+ix_1&-iy_1\\ iy_1&-\kappa+ix_1\end{matrix}\Big)+\Tr S_j^{-1}\hat{B}_jS_j
\Big(\begin{matrix}\kappa+ix_2&-iy_2\\ iy_2&-\kappa+ix_2\end{matrix}\Big)\Big)\Big\},
\end{align}
where 
\begin{multline}\label{D}
\det \mathcal{D}
=\det
\Big\{\delta_{jk}\big(\mathbf{1}_4-\hat{X}_j^{-1}\otimes \hat{B}_j\big)-\dfrac{\delta_{jk}\delta_{j1}}{W}(\hat X_1+i\gamma U_1\sigma U_1^*)^{-1}\otimes \big(S_1(LQ)S_1^{-1}\big)\\+\dfrac{1}{W}\Big(-\beta\triangle+\dfrac{1}{W}R \Big)_{jk}\cdot U_jU_k^*\otimes S_jS_k^{-1}\Big\}_{j,k=1}^n,
\end{multline}
\begin{align*}
C^{(2)}_{W, n}&=2^{n}(\pi/2)^{2n}\cdot e^{Wn(\Tr Z_1^2+\Tr Z_2^2)/2-Wn (2+E^2)} \cdot C^{(1)}_{W, n}\\ \notag
&=\dfrac{ W^{4n}\cdot e^{2Ex_1}}{2^{3n}\pi^{2n}}\cdot \Big(1+O\big(W^{-1}\big)\Big),\\ 
\notag
f(x)&=x^2/2+iE x-\log x -(2+E^2)/4.
\end{align*}
The constant in $f(x)$ is chosen in such a way  that $\Re f(a_\pm)=0$. Measures $dU_j$, $dS_j$ in (\ref{G_last}) are the Haar measures over $\mathring{U}(2)$ and $\mathring{U}(1,1)$ correspondingly.

Also it is easy to see that for $|E|\le\sqrt{2}$ we can deform the contours of integration as 
\begin{itemize}
\item for $x_{j,1}$, $x_{j,2}$ to $-iE/2+\mathbb{R}$;
\item for $b_{j,1}$ to $\mathcal{L}_+(E)$ of (\ref{L_cal});
\item for $b_{j,2}$ to $\mathcal{L}_-(E)$ of (\ref{L_cal}).
\end{itemize}

To prove Theorem \ref{thm:sigma_mod}, we are going to integrate (\ref{G_last}) over the ``fast'' variables: $\{x_{j,l}\}, \{b_{j,l}\}$, $l=1,2$, $j=1,..,n$.
The first step is the following lemma:
\begin{lemma}\label{l:s_point}
The integral (\ref{G_last}) over $\{x_{j,l}\}, \{b_{j,l}\}$, $l=1,2$, $j=1,..,n$ can be restricted to the integral over the $W^{-(1-\kappa)/2}$-neighbourhoods (with a  small $\kappa>0$) of the points
\begin{itemize}
\item[I.] $x_{j,1}=a_+$, $x_{j,2}=a_-$ or $x_{j,1}=a_-$, $x_{j,2}=a_+$, $b_{j,1}=a_+$, $b_{j,2}=a_-$ for any $j=1,..,n$;
\item[II.] $x_{j,1}=x_{j,2}=a_+$, $b_{j,1}=a_+$, $b_{j,2}=a_-$ for any $j=1,..,n$;
\item[III.] $x_{j,1}=x_{j,2}=a_-$, $b_{j,1}=a_+$, $b_{j,2}=a_-$ for any $j=1,..,n$.
\end{itemize}
Moreover, the contributions of the points II and III are $o(1)$, as $W\to\infty$.
\end{lemma}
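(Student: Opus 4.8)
The plan is to treat \eqref{G_last} as a Laplace-type integral in the $4n$ ``fast'' variables $\{x_{j,l}\}$, $\{b_{j,l}\}$, whose large-$W$ behaviour is governed by the exponential weight $\exp\{-W\sum_{j,l}(f(x_{j,l})+f(b_{j,l}))\}$, and to localize it near the critical points of $f$. Since $f'(x)=x+iE-1/x$, these critical points are exactly the roots of $x^2+iEx-1=0$, namely $a_+$ and $a_-$ of \eqref{a_pm}, and by the choice of the additive constant $\Re f(a_\pm)=0$. The two assertions of the lemma then correspond to (i) showing that the integral concentrates on $W^{-(1-\kappa)/2}$-neighbourhoods of configurations of these critical points, and (ii) comparing the Laplace contributions of the three resulting saddle families.

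First I would establish the localization. On the deformed contours — $-iE/2+\mathbb{R}$ for the $x$'s and $\mathcal{L}_\pm(E)$ of \eqref{L_cal} for $b_{j,1},b_{j,2}$ — one checks that $\Re f\ge 0$, with equality only at the admissible critical points ($a_\pm$ on the $x$-line, and $a_+\in\mathcal L_+$, $a_-\in\mathcal L_-$ for $b_{j,1},b_{j,2}$), that $\Re f$ grows quadratically away from them, and that the $-\log x$ term produces both a logarithmic barrier near the origin of the rays $\mathcal L_\pm$ and a Gaussian growth $\tfrac12 x^2$ at infinity. Hence outside the $W^{-(1-\kappa)/2}$-neighbourhoods one has $\Re f\ge c\,W^{-(1-\kappa)}$, so the weight there is $\le e^{-cW^{\kappa}}$. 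It remains to show this super-polynomial smallness survives the algebraic prefactor of \eqref{G_last}: the point to verify is that $\det\mathcal D$ of \eqref{D}, the Vandermonde ratios, and the denominators $b_{j,l}^{-2}$ and $x_{1,l}^{-1}b_{1,l}^{-1}$ are bounded by a fixed power of $W$ times a polynomial in the fast variables, uniformly on the contours. The denominators $b_{j,l}^{-2}$ are harmless because the factor $|b|^{W}$ coming from $-\log b$ dominates them near the origin, and the polynomial growth at infinity is killed by the Gaussian decay; the first technical obstacle is the uniform control of the $4n\times 4n$ determinant $\det\mathcal D$, since on the non-compact $b$-rays the blocks $\hat X_j^{-1}\otimes\hat B_j$ are not small.

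Once the integral is reduced to neighbourhoods of critical-point configurations, the admissible ones are enumerated site by site: $b_{j,1}\approx a_+$, $b_{j,2}\approx a_-$ always (the two $b$-contours are separated), while for each $j$ the pair $(x_{j,1},x_{j,2})$ sits near $(a_+,a_-)$ or $(a_-,a_+)$ (type I), near $(a_+,a_+)$ (type II), or near $(a_-,a_-)$ (type III), giving exactly the list I--III. For the second assertion I would compare the Laplace contributions of II and III with that of I. The decisive difference is the measure factor $\prod_j(x_{j,1}-x_{j,2})^2$: at a type-I site it equals $(a_+-a_-)^2=c_0^2\ne0$, whereas at a type-II or type-III site it vanishes and, over Gaussian fluctuations of width $W^{-1/2}$, averages to $O(W^{-1})$. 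Since $\mathbf 1_4-\hat X_j^{-1}\otimes\hat B_j$ has exactly two vanishing eigenvalues at each of the three saddles, the number of soft modes of $\det\mathcal D$ — and hence the order in $W$ of the fluctuation and determinant factors — is the same, so the only leading-order discrepancy comes from the Vandermonde. Each coincident site then contributes an extra $W^{-1}$, whence the pure type-II and type-III contributions are $O(W^{-n})=o(1)$, and by the same mechanism any mixed configuration carrying a coincident $X$-eigenvalue is $o(1)$ as well.

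The main obstacle, besides the uniform bound on $\det\mathcal D$, will be this second step: near types II and III the change of variables $X_j=U_j^*\hat X_jU_j$ degenerates, since the $\mathring U(2)$-angle becomes a flat direction as the eigenvalues merge, so the fluctuation determinant cannot be read off naively. I would handle this by keeping $(x_{j,1}-x_{j,2})^2$ inside the Gaussian integral over $(x_{j,1},x_{j,2})$ rather than evaluating it at the saddle, performing that integral explicitly to extract the $W^{-1}$, and then bounding the remaining $U_j$-, $S_j$-, $Q$- and slow-mode integrals uniformly to confirm they match the type-I normalization up to constants. This reduces the claim to the elementary estimate that the Gaussian second moment of $(x_{j,1}-x_{j,2})^2$ is $O(W^{-1})$, which yields the stated $o(1)$; note that for this upper bound only $|\exp\{-Wf(a_\pm)\}|=1$ is used, the oscillatory phases being irrelevant here.
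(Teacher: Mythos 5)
Your treatment of the first assertion follows the paper's (terse) route and is sound: on the deformed contours $\Re f$ attains its minimal value $0$ only at $a_+$, $a_-$ (resp.\ at $a_\pm$ on $\mathcal L_\pm(E)$) and grows away from them, so the weight $e^{-W\sum(f(x_{j,l})+f(b_{j,l}))}$ localizes the integral up to control of the algebraic prefactors; the enumeration of the admissible configurations I--III is also correct.

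The second assertion, however, is where your argument has a genuine gap: the power counting you propose yields $O(1)$ for the type-II and type-III contributions, not $o(1)$. The flaw is the claim that, apart from the Vandermonde, ``the order in $W$ of the fluctuation and determinant factors is the same'' at all three saddle types. At a type-I site the two soft diagonal entries of $\mathcal{D}$ are $d_{j,11}=(\tilde x_{j,1}/a_+-\tilde b_{j,1})/\sqrt W$ and $d_{j,22}=(\tilde x_{j,2}/a_--\tilde b_{j,2})/\sqrt W$: these are \emph{independent} mean-zero linear forms, their product averages to zero under the Gaussian measure, and the surviving contribution of the determinant is pushed down to $W^{-2}$ per site (by pairing with further odd $O(W^{-1/2})$ terms), which is exactly what balances the prefactor $W^{2n}$. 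At a type-II site the two soft entries are $d_{j,11}$ and $d_{j,21}=(\tilde x_{j,2}/a_+-\tilde b_{j,1})/\sqrt W$ (see (\ref{d_exp1_1})): they \emph{share} the variable $\tilde b_{j,1}$, so $\langle d_{j,11}d_{j,21}\rangle=W^{-1}\langle\tilde b_{j,1}^2\rangle\neq 0$ and the determinant genuinely contributes $W^{-1}$ per site --- a factor $W^{n}$ more than at type I. This exactly offsets the $W^{-n}$ you extract from the Vandermonde, so the two saddle families are of the \emph{same} naive order. What actually makes types II and III negligible is the identity
\[
\Big\langle\big(\tilde x_{j,1}/a_+-\tilde b_{j,1}\big)\big(\tilde x_{j,2}/a_+-\tilde b_{j,1}\big)\big(\tilde x_{j,1}-\tilde x_{j,2}\big)^2\Big\rangle_{++}
=-\frac{2}{a_+^2c_+^2}+\frac{2}{a_+^2c_+^2}=0,
\]
an exact cancellation between $a_+^{-2}\langle\tilde x_{j,1}\tilde x_{j,2}(\tilde x_{j,1}-\tilde x_{j,2})^2\rangle$ and $\langle\tilde b_{j,1}^2\rangle\langle(\tilde x_{j,1}-\tilde x_{j,2})^2\rangle$, which holds precisely because the Gaussian weight of $\tilde b_{j,1}$ is $e^{-a_+^2c_+\tilde b_{j,1}^2/2}$ while that of $\tilde x_{j,l}$ is $e^{-c_+\tilde x_{j,l}^2/2}$, so that $\langle\tilde b_{j,1}^2\rangle=a_+^{-2}\langle\tilde x_{j,l}^2\rangle$. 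This computation (formula (\ref{++}) of the paper) is the real content of the second assertion; your ``elementary estimate that the Gaussian second moment of $(x_{j,1}-x_{j,2})^2$ is $O(W^{-1})$'' does not suffice to conclude.
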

\begin{proof}
The proof of the first part of the lemma is straightforward and  based on the fact that $\Re f(z)$ for $z=x-iE/2$, $x\in \mathbb{R}$ has 
two global minimums at $z=a_\pm$, and for $z\in\mathcal{L}_\pm(E)$ has one global minimum at $z=a_\pm$.

To prove the second part of the lemma, consider the neighbourhood of the point II (the point III can be treated in a similar way). 
Change the variables as
\begin{align*}
\begin{array}{ll}
x_{j,1}=a_++{\tilde{x}_{j,1}}/{\sqrt{W}}, &x_{j,2}=a_++\tilde{x}_{j,2}/{\sqrt{W}},\\ 
b_{j,1}=a_+\big(1+{\tilde{b}_{j,1}}/{\sqrt{W}}\big), &b_{j,2}=a_-\big(1+{\tilde{b}_{j,2}}/{\sqrt{W}}\big).
\end{array}
\end{align*}
This gives the Jacobian $(-1)^{n} W^{-2n}$ and also the additional $W^{-n}$ since
\begin{align*}
x_{j,1} -x_{j,2}=(\tilde{x}_{j,1}-\tilde{x}_{j,2})/\sqrt{W}.
\end{align*}
Together with $C^{(2)}_{W, n}$ this gives $W^{n}$ in front of the integral (\ref{G_last}).
In addition, expanding $f$ into the series, we get
\begin{align}\label{f_exp}
&f(x_{j,l})=f(a_+)+\dfrac{c_+}{2}\frac{\tilde{x}_{j,l}^2}{W}-\dfrac{1}{2a_+^3}\frac{\tilde{x}_{j,l}^3}{W^{3/2}}+O\Big(\dfrac{\tilde{x}_{j,l}^4}{W^{2}}\Big), \quad l=1,2\\ \notag
&f(b_{j,1})=f(a_+)+\dfrac{a_+^2c_+}{2}\cdot \dfrac{\tilde{b}_{j,1}^2}{W}-\dfrac{1}{2}\cdot \dfrac{\tilde{b}_{j,1}^3}{W^{3/2}}+O\Big(\dfrac{\tilde{b}_{j,1}^4}{W^{2}}\Big),\\ \notag
& f(b_{j,2})=f(a_-)+\dfrac{a_-^2c_-}{2}
\cdot \dfrac{\tilde{b}_{j,2}^2}{W}-\dfrac{1}{2}\cdot \dfrac{\tilde{b}_{j,2}^3}{W^{3/2}}+O\Big(\dfrac{\tilde{b}_{j,2}^4}{W^{2}}\Big),
\end{align}
where
\begin{equation}\label{c_pm}
c_\pm=1+a_\pm^{-2}, \quad f(a_+)=-f(a_-)\in i\mathbb{R}.
\end{equation}

We are going to compute the leading order of the integral over $\{\tilde{x}_{j,l}\}, \{\tilde{b}_{j,l}\}$, $l=1,2$, $j=1,..,n$. To this end,  we leave 
the quadratic part of $f$ (see (\ref{f_exp})) in the exponent,  expand everything else into the series of $\tilde{x}_{j,l}/\sqrt{W}, \tilde{b}_{j,l}/\sqrt{W}$ 
around the saddle-point $\tilde{x}_{j,l}=\tilde{b}_{j,l}=0$, and compute the Gaussian integral of each term of this expansion.
We are going to prove that all this terms are $o(1)$.

Indeed, consider the expansion of the diagonal elements of $\mathcal{D}$ of (\ref{D}):
\begin{align}\notag
&d_{j,l1}=1-x_{j,l}^{-1}b_{j,1}=(\tilde{x}_{j,l}/a_+-\tilde{b}_{j,1})/\sqrt{W}+O\big(W^{-1+2\kappa}\big),\\
&d_{j,l2}=1-x_{j,l}^{-1}b_{j,2}=c_--({\tilde{x}_{j,l}/a_+-\tilde{b}_{j,2}})/a_-^2\sqrt{W}+O\big(W^{-1+2\kappa}\big),\quad l=1,2.\label{d_exp1_1}
\end{align}
If we rewrite the determinant of $\mathcal{D}$  in a standard way, then each summand has strictly one element from each row and column.
Because of (\ref{d_exp1_1}), each element in the rows $(j,11)$ and $(j,21)$ has at least $W^{-1/2}$, and so the expansion of $\mdet \,\mathcal{D}$ starts from $W^{-n}$. Moreover, to obtain $W^{-n}$ (i.e. non-zero contribution) we must consider the summands of the determinant expansion that have only diagonal elements $d_{j,ls}$ (since non-diagonal elements of $\mathcal{D}$ are $O(W^{-1})$ or less), and furthermore only the first terms in the expansions (\ref{d_exp1_1}) and all other function
in (\ref{G_last}). Thus we get
\begin{align}\label{++}
C\cdot \Big\langle\prod\limits_{j=1}^n \dfrac{\tilde{x}_{j,1}/a_+-\tilde{b}_{j,1}}{\sqrt{W}}\cdot \dfrac{\tilde{x}_{j,2}/a_+-\tilde{b}_{j,1}}{\sqrt{W}} \cdot (\tilde{x}_{j,1}
-\tilde{x}_{j,2})^2\Big\rangle_{++}+o(1),
\end{align}
where
\begin{align*}
\Big\langle \cdot \Big\rangle_{++}=\int \Big(\cdot\Big) \exp\Big\{-\frac{1}{2}\sum_{j=1,..,n}\Big({c_+(\tilde x_{j,1}^2+\tilde x_{j,2}^2)}+
{a_+^2c_+ \tilde b_{j,1}^2}
+{a_-^2c_- \tilde b_{j,2}^2}\Big)\Big\}  d\tilde{x}\, d\tilde{b}.
\end{align*}
But it is easy to see that  the Gaussian integral in (\ref{++}) is zero, which completes the proof of the lemma.
\end{proof}

According to Lemma \ref{l:s_point} the main contribution to (\ref{G_last}) is given by the neighbourhoods of the saddle points $x_{j,1}=a_+$, $x_{j,2}=a_-$ or $x_{j,1}=a_-$, $x_{j,2}=a_+$. All such points can be obtained from each other by rotations of $U_j$, 
so we can consider only $x_{j,1}=a_+$, $x_{j,2}=a_-$ for all $j=1,..,n$. Similarly to the proof of Lemma \ref{l:s_point}, 
change variables as
\begin{align}\label{change}
\begin{array}{ll}
x_{j,1}=a_++{\tilde{x}_{j,1}}/{\sqrt{W}}, &x_{j,2}=a_-+{\tilde{x}_{j,2}}/{\sqrt{W}},\\ 
b_{j,1}=a_+\big(1+{\tilde{b}_{j,1}}/{\sqrt{W}}\big), &b_{j,2}=a_-\big(1+{\tilde{b}_{j,2}}/{\sqrt{W}}\big).
\end{array}
\end{align}
That slightly change the expansions (\ref{f_exp})  and (\ref{d_exp1_1}). We get
\begin{align}\label{f_exp2}
&f(x_{j,2})=f(a_-)+\dfrac{c_-}{2}\cdot \dfrac{\tilde{x}_{j,2}^2}{W}-\dfrac{1}{2a_-^3}\cdot \dfrac{\tilde{x}_{j,2}^3}{W^{3/2}}+O\Big(\dfrac{\tilde{x}_{j,2}^4}{W^{2}}\Big),
\end{align}
and 
\begin{align}\label{d_exp1}
&d_{j,11}=1-x_{j,1}^{-1}b_{j,1}=\dfrac{\tilde{x}_{j,1}/a_+-\tilde{b}_{j,1}}{\sqrt{W}}+
\dfrac{a_+\tilde{x}_{j,1}\tilde{b}_{j,1}-\tilde{x}_{j,1}^2}{a_+^2W}+\dfrac{\delta_{j1}}{W}T_{11,11}+O\big(W^{-3(1-\kappa)/2}\big),\\ \notag
&d_{j,22}=1-x_{j,2}^{-1}b_{j,2}=\dfrac{\tilde{x}_{j,2}/a_--\tilde{b}_{j,2}}{\sqrt{W}}+\dfrac{a_-\tilde{x}_{j,2}\tilde{b}_{j,2}-\tilde{x}_{j,2}^2}
{a_-^2W}+\dfrac{\delta_{j1}}{W}T_{22,22}+O\big(W^{-3(1-\kappa)/2}\big),
\end{align}
\begin{align}\notag
&d_{j,12}=1-x_{j,1}^{-1}b_{j,2}=c_+-\dfrac{\tilde{x}_{j,1}/a_+-\tilde{b}_{j,2}}{a_+^{2}\sqrt{W}}-\dfrac{a_+\tilde{x}_{j,1}\tilde{b}_{j,2}-\tilde{x}_{j,1}^2}{a_+^4W}
+\dfrac{\delta_{j1}}{W}T_{11,22}+O\big(W^{-3(1-\kappa)/2}\big), \\ \notag
&d_{j,21}=1-x_{j,2}^{-1}b_{j,1}=c_- -\dfrac{{\tilde{x}_{j,2}/a_--\tilde{b}_{j,1}}}{a_-^{2}\sqrt{W}}-\dfrac{a_-\tilde{x}_{j,2}\tilde{b}_{j,1}-\tilde{x}_{j,2}^2}{a_-^4W}
+\dfrac{\delta_{j1}}{W}T_{22,11}+O\big(W^{-3(1-\kappa)/2}\big),
\end{align}
where
\begin{align}\label{T}
T&=(\hat X_1+i\gamma U_1\sigma U_1^*)^{-1}\otimes \big(S_1(LQ)S_1^{-1}\big)\\ \notag
&=\Big(A^{-1}-\dfrac{1}{\sqrt{W}}A^{-1}
\left(\begin{matrix}
\tilde{x}_{1,1}&0\\
0&\tilde{x}_{1,2}
\end{matrix}\right)
A^{-1}\Big)\otimes \big(S_1(LQ)S_1^{-1}\big)+O(W^{-1+2\kappa})
\end{align}
with
\begin{equation}\label{A}
A=\big(\hat X_1+i\gamma U_1\sigma U_1^*\big)\Big|_{\tilde x_{1,1}=\tilde x_{1,2}=0}=-\frac{iE}{2}+\frac{c_0}{2}L+i\gamma U_1\sigma U_1^*.
\end{equation}
The change (\ref{change}) gives the Jacobian $W^{-2n}$, which together with $C^{(2)}_{W, n}$ gives $W^{2n}$ in front of the integral (\ref{G_last}).
Similarly to the proof of Lemma \ref{l:s_point} we are going to compute the leading order of the integral (\ref{G_last}) over $\{\tilde{x}_{j,l}\}, \{\tilde{b}_{j,l}\}$, $l=1,2$, $j=1,..,n$, and so  we leave 
the quadratic part of $f$ (see (\ref{f_exp}) and (\ref{f_exp2})) in the exponent,  expand everything else into the series of $\tilde{x}_{j,l}/\sqrt{W}, \tilde{b}_{j,l}/\sqrt{W}$ 
around the saddle-point $\tilde{x}_{j,l}=\tilde{b}_{j,l}=0$, and compute the Gaussian integral of each term of this expansion. We are going to prove, that the non-zero
contribution is given by the terms having at least $W^{-2n}$.

\begin{lemma}\label{l:det_exp} Formula (\ref{G_last}) can be rewritten as
\begin{align}\label{G_main}
&Z_{\beta nW}(\kappa,z_1,z_2)=(c_0/2\pi)^{2n}\cdot e^{E(x_1-x_2)}\int  dz\, d\tilde\rho\, d\tilde\tau \,d U\, d S\,d Q\\ &\times\exp\Big\{-\dfrac{1}{2}(Mz, z)+W^{1/2}(z,h^0)+
W^{-1/2}(z,h+\zeta/|\Lambda)|)\Big\} \notag \\ \notag 
&\times \exp\Big\{-\Tr (LQ)(iE/2+i\gamma\sigma)-\dfrac{c_0}{2}\Tr S_1^{-1}LS_1 (LQ)\Big\}
\\ \notag 
&\times \exp\Big\{-\Tr A^{-1}\tilde\rho_1S_1(LQ)S_1^{-1}\tilde\tau_1\Big\}\cdot \mdet \,A
\\ \notag
&\times \exp\Big\{{\beta}\sum\Tr\Big(U_j^*\tilde\rho_jS_j-U_{j-1}^*\tilde\rho_{j-1}S_{j-1}\Big) \Big(S_j^{-1}\tilde\tau_jU_j-
S_{j-1}^{-1}\tilde\tau_{j-1}U_{j-1}\Big)\Big\}\\ \notag
&\times \exp\Big\{\sum\Big(c_+ n_{j,12}+c_- n_{j,21}- n_{j,1}/{c_0a_+}+n_{j,2}/{c_0a_-}\Big)-\beta c_0^2\sum (v_j^2+t_j^2)\Big\}\\ \notag
&\times\exp\Big\{-\dfrac{c_0}{2n}\sum\limits_{j=1}^n\Big(\Tr U_j^*L U_j\Big(\begin{matrix}\kappa&-iy_1\\ iy_1&-\kappa\end{matrix}\Big)+
\Tr S_j^{-1} L S_j\Big(\begin{matrix}\kappa&-iy_2\\ iy_2&-\kappa\end{matrix}\Big)\Big)\Big\}
+o(1),
\end{align}
where $A$ is defined in (\ref{A}),
\begin{align}
\label{rt_tilde}
&\tilde\rho_j=\left(\begin{array}{cc}
\rho_{j,11}&\rho_{j,12}/\sqrt{W}\\
\rho_{j,21}/\sqrt{W}&\rho_{j,22}
\end{array}\right), \quad \tilde\tau_j=\left(\begin{array}{cc}
\tau_{j,11}&\tau_{j,12}/\sqrt{W}\\
\tau_{j,21}/\sqrt{W}&\tau_{j,22}
\end{array}\right)
\\
\notag
&n_{j,12}=\rho_{j,12}\tau_{j,12},\quad n_{j,21}=\rho_{j,21}\tau_{j,21},\\ \notag
&n_{j,1}=\rho_{j,11}\tau_{j,11},\quad\,\, n_{j,2}=\rho_{j,22}\tau_{j,22},\\ \notag
&z=(z_{j,11},z_{j,22},z_{j,12},z_{j,21})=(\tilde x_{j,1},\tilde x_{j,2},\tilde b_{j,1},\tilde b_{j,1}),
\end{align}
and 
\begin{align}\label{M}
&M=M_0+W^{-1}\tilde{M}\\
\label{M_0}
&(M_0 z,z)=\sum\limits_{j=1,..,n}\Big(c_+\tilde x_{j,1}^2+c_-\tilde x_{j,2}^2+a_+^2c_+\tilde b_{j,1}^2+a_-^2c_-\tilde b_{j,2}^2\Big) 
\end{align}
\begin{align}
\label{tilde_M}
&(\tilde{M}z,z)=-2\beta\sum \Big(\tilde{x}_{j,1}\tilde{x}_{j-1,1}+\tilde{x}_{j,2}\tilde{x}_{j-1,2}-a_+^2\tilde{b}_{j,1}\tilde{b}_{j-1,1}-a_-^2\tilde{b}_{j,2}\tilde{b}_{j-1,2}\Big)\\ \notag
&+2 \beta \sum \Big( v_j^2\,(\tilde x_{j,1}-\tilde x_{j,2})(\tilde x_{j-1,1}-\tilde x_{j-1,2})+t_j^2\,(a_+\tilde b_{j,1}-a_-\tilde b_{j,2})(a_+\tilde b_{j-1,1}-a_-\tilde b_{j-1,2})\Big)\\ \notag
&-\sum\Big(\dfrac{4}{c_0^2} (\tilde x_{j,1}\tilde x_{j,2}-\tilde b_{j,1}\tilde b_{j,2})
-2(a_+^{-3}n_{j,12}\tilde x_{j,1}\tilde b_{j,2}+a_-^{-3}n_{j,21}\tilde x_{j,2}\tilde b_{j,1})\Big)\\ \notag
&+\Tr A^{-1}\Big(\begin{matrix} \tilde x_{1,1}&0\\ 0& \tilde x_{1,2}\end{matrix} \Big)A^{-1}\Big(\begin{matrix} \tilde x_{1,1}&0\\ 0& \tilde x_{1,2}\end{matrix} \Big).
\end{align}
Here $\zeta=\{\zeta_j\}_{j=1,..,n}$, $\zeta_j=(\zeta_{j,11},\zeta_{j,22}, a_+\zeta_{j,12},a_-\zeta_{j,21})$ with
\begin{align*}
&\zeta_{j,11}=-\Big(U_j\Big(\begin{matrix}\kappa&-iy_1\\ iy_1&-\kappa\end{matrix}\Big)U_j^*\Big)_{11},\quad
\zeta_{j,22}=-\Big(U_j\Big(\begin{matrix}\kappa&-iy_1\\ iy_1&-\kappa\end{matrix}\Big)U_j^*\Big)_{22},\\ \notag
&\zeta_{j,12}=-\Big(S_j\Big(\begin{matrix}\kappa&-iy_2\\ iy_2&-\kappa\end{matrix}\Big)S_j^{-1}\Big)_{11},\quad
\zeta_{j,21}=-\Big(S_j\Big(\begin{matrix}\kappa&-iy_2\\ iy_2&-\kappa\end{matrix}\Big)S_j^{-1}\Big)_{22}.
\end{align*}
We also denoted
\begin{equation}\label{h}
\begin{array}{ll}
h=\{h_{j,ls}+h^{q}_{j,ls}\}_{j=1,..,n,l,s=1,2},& h^0=\{h^0_{j,ls}\}_{j=1,..,n,l,s=1,2},\\ 
h_{j,11}={2}/{c_0}-\beta c_0 v_j^2-\beta c_0 v_{j+1}^2+{a_-n_{j,12}}/{a_+^2}, \quad &h^0_{j,11}={n_{j,1}}/{a_+},\\ 
h_{j,22}=-{2}/{c_0}+\beta c_0 v_j^2+\beta c_0 v_{j+1}^2+{a_+n_{j,21}}/{a_-^2}, \quad &h^0_{j,22}={n_{j,2}}/{a_-},\\ 
h_{j,12}={2a_+}/{c_0}-2-\beta c_0 a_+ t_j^2-\beta c_0 a_+ t_{j+1}^2-n_{j,21}{a_+}/{a_-},\quad  &h^0_{j12}=-{n_{j,1}},\\ 
h_{j,21}=-{2a_-}/{c_0}-2+\beta c_0 a_- t_j^2+\beta c_0 a_- t_{j+1}^2-n_{j,12}{a_-}/{a_+},  \quad &h^0_{j,21}=-{n_{j,2}},
\end{array}
\end{equation}
with
\begin{align*}
&h^{q}_{j,ls}=0,\quad j\ne 1,\\
&h^{q}_{1,11}=-\dfrac{1}{a_+}+(A^{-1})_{11}+\big(A^{-1}\tilde\rho_1S_1(LQ)S_1^{-1}\tilde\tau_1 A^{-1}\big)_{11},\\
&h^{q}_{1,12}=-\dfrac{1}{a_-}+(A^{-1})_{22}+\big(A^{-1}\tilde\rho_1S_1(LQ)S_1^{-1}\tilde\tau_1 A^{-1}\big)_{22},\\
&h^{q}_{1,21}=-1-a_+\big(S_1(LQ)S_1^{-1}\big)_{11},\\
&h^{q}_{1,22}=-1-a_-\big(S_1(LQ)S_1^{-1}\big)_{22},
\end{align*}
and
\begin{align*}
v_j=|(U_jU_{j-1}^*)_{12}|\quad t_j=|(S_jS_{j-1}^{-1})_{12}|.
\end{align*}
\end{lemma}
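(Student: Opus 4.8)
The plan is to reduce \eqref{G_last} to \eqref{G_main} by re-fermionizing the determinant $\det\mathcal D$ of \eqref{D} and then carrying out the saddle-point expansion around the configuration $x_{j,1}=a_+$, $x_{j,2}=a_-$, $b_{j,1}=a_+$, $b_{j,2}=a_-$ selected in Lemma \ref{l:s_point}. First I would write $\det\mathcal D=\int\exp\{-(\rho,\mathcal D\tau)\}\,d\rho\,d\tau$ as a Gaussian Grassmann integral in new anticommuting fields $\rho_{j,ls},\tau_{j,ls}$ ($l,s=1,2$), organised into the $2\times2$ matrices $\tilde\rho_j,\tilde\tau_j$ of \eqref{rt_tilde}. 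The crucial point is the choice of scaling: since by \eqref{d_exp1} the diagonal entries $d_{j,12},d_{j,21}$ of $\mathcal D$ tend to $c_+,c_-=O(1)$ while $d_{j,11},d_{j,22}$ are $O(W^{-1/2})$, the components $\rho_{j,12},\rho_{j,21}$ (and their $\tau$ counterparts) are ``massive'' and are rescaled by $W^{-1/2}$ exactly as in \eqref{rt_tilde}, so that soft and massive fluctuations are balanced and the $W^{2n}$ prefactor coming from $C^{(2)}_{W,n}$ together with the Jacobian of \eqref{change} is absorbed.

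Next I would substitute everything into the exponent. The bosonic change of variables \eqref{change} together with the Taylor expansions \eqref{f_exp}, \eqref{f_exp2} of $f$ produces the leading diagonal quadratic form $(M_0z,z)$ of \eqref{M_0}, the cubic and higher terms of $f$ being $O(W^{-1/2})$. The $\beta$-hopping in \eqref{exp2}, expanded about the saddle using $\mathrm{diag}(a_+,a_-)=-iE/2+(c_0/2)L$, yields the nearest-neighbour coupling of the $U_j$ and $S_j$ blocks through $v_j=|(U_jU_{j-1}^*)_{12}|$, $t_j=|(S_jS_{j-1}^{-1})_{12}|$, contributing the off-diagonal part of $\tilde M$ in \eqref{tilde_M} and the $v_j^2,t_j^2$ pieces of $h$ in \eqref{h}; the $R$-term and the source \eqref{exp3} feed the remaining entries of $h$ and the vector $\zeta$. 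Re-fermionizing the off-diagonal blocks $\mathcal D_{j,j\pm1}\propto U_jU_{j\pm1}^{*}\otimes S_jS_{j\pm1}^{-1}$ gives the Grassmann hopping $\beta\sum\Tr(U_j^*\tilde\rho_jS_j-\ldots)(S_j^{-1}\tilde\tau_jU_j-\ldots)$, while the diagonal entries \eqref{d_exp1} produce the fermionic masses $c_+n_{j,12}+c_-n_{j,21}$ and, through their higher-order terms and the soft bilinears $n_{j,1},n_{j,2}$, the linear couplings $W^{1/2}(z,h^0)$ and $W^{-1/2}(z,h)$. Finally the $\delta_{j1}$ insertion $T$ of \eqref{T}, together with $\det(\hat X_1+i\gamma U_1\sigma U_1^*)\to\det A$ of \eqref{A}, supplies the site-$1$ factor $\det A\cdot\exp\{-\Tr A^{-1}\tilde\rho_1S_1(LQ)S_1^{-1}\tilde\tau_1\}$ and the corrections $h^q$.

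It then remains to collect terms by powers of $W$ and discard the $o(1)$ remainder. Here I would argue exactly as in the proof of Lemma \ref{l:s_point}: the naive leading term of $\det\mathcal D$, the fully diagonal one of order $W^{-n}$, integrates to zero against $e^{-(M_0z,z)/2}$ because each soft factor $\tilde x_{j,l}/a_\pm-\tilde b_{j,l}$ has zero Gaussian mean and the soft variables decouple in $M_0$, which forces the expansion to the next order, precisely the content of \eqref{G_main}. All genuinely negligible contributions, namely the cubic and higher Taylor terms of $f$ and of the entries \eqref{d_exp1}, the $O(W^{-3(1-\kappa)/2})$ remainders, the determinant cofactors beyond nearest neighbours, and the saddle points II and III, are then shown to be $o(1)$ after Gaussian integration in $z$ and Grassmann integration, using that odd Gaussian moments vanish and that $n_{j,l}^2=0$.

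The main difficulty is the $W$-power bookkeeping that certifies which terms survive. The delicate structural point is that the large coefficient $W^{1/2}$ in front of $h^0$ is admissible only because $h^0$ is built from the nilpotent bilinears $n_{j,1},n_{j,2}$: the self-contraction $(h^0,M_0^{-1}h^0)$ vanishes since $n_{j,l}^2=0$ and $M_0$ is diagonal, so in the eventual Gaussian integration in $z$ the $W^{1/2}$-term can only pair with the $W^{-1/2}$-terms $h,\zeta/n$, producing an $O(1)$ cross-contribution. Keeping track of this cancellation simultaneously with the consistent treatment of the site-$1$ insertion $A$ inside the re-fermionized determinant is, I expect, the most error-prone part of the argument; the remainder is routine evaluation of Gaussian and Grassmann integrals of monomials.
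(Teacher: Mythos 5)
Your proposal follows essentially the same route as the paper: re-fermionize $\det\mathcal D$ with the $\sqrt W$-rescaled Grassmann matrices of (\ref{rt_tilde}), expand around the saddle selected in Lemma \ref{l:s_point}, and retain only the terms whose per-site $W$-power is exactly $W^{-2}$ after Gaussian and Grassmann integration, the nilpotency of $h^0$ neutralizing the $W^{1/2}$ source term. The only point you gloss over is that the cubic Taylor terms of $f$ are not killed merely by the vanishing of odd Gaussian moments: paired with the linear soft factors from the determinant rows they produce even monomials such as $\tilde x_{j,1}^4$, and their disappearance rests on the explicit cancellation $\int\big(\tilde x_{j,1}^4/(3a_+^4)-\tilde b_{j,1}^4/3\big)e^{-c_+\tilde x_{j,1}^2/2-a_+^2c_+\tilde b_{j,1}^2/2}\,d\tilde x_{j,1}\,d\tilde b_{j,1}=0$, which is the ``routine evaluation of Gaussian integrals of monomials'' the paper actually carries out.
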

\begin{proof} Rewriting the determinant in (\ref{D}) in a standard way, we obtain
\begin{equation}\label{exp_det}
\mdet\,\mathcal{D}= \sum\limits_{\bar\sigma} (-1)^{|\sigma|} \prod\limits_{j=1}^n P_{j,\bar\sigma_j}(\tilde{x}_{j,1},\tilde{x}_{2j}, \tilde{b}_{j,1},\tilde{b}_{j,1}),
\end{equation}
where $\bar\sigma$ is a permutation of $\{(j,ls)\}$, $l,s=1,2$, $j=1,..,n$, $\bar\sigma_j$ is its restriction on   $\{(j,ls)\}_{l,s=1}^2$, $(-1)^{|\sigma|}$ is a sign of
$\sigma$ and $P_{j,\bar\sigma_j}$ is an expansion in $\tilde{x}_{j,1},\tilde{x}_{2j}$, $\tilde{b}_{j,1},\tilde{b}_{j,1}$ of the product of four elements from the rows 
$\{(j,ls)\}_{l,s=1}^2$ taken with respect to $\bar\sigma_j$.
 
 Let us prove that
for each $j=1,..,n$ and any $\bar\sigma$ each term of $P_{j,\bar\sigma_j}(\tilde{x}_{j,1},\tilde{x}_{2j}, \tilde{b}_{j,1},\tilde{b}_{j,1})$ of (\ref{exp_det}) belongs to
one of the three following groups:
\begin{itemize}
\item[i.] has a coefficient $W^{-2}$ or lower;

\item[ii.] has a coefficient $W^{-3/2}$ and at least one of variables $\tilde{x}_{j,1},\tilde{x}_{2j}$, $\tilde{b}_{j,1},\tilde{b}_{j,1}$ of the odd degree;

\item[iii.] has a coefficient $W^{-1}$ and at least two variables of $\tilde{x}_{j,1},\tilde{x}_{2j}$, $\tilde{b}_{j,1},\tilde{b}_{j,1}$ of the odd degree;
\end{itemize}

Note that each element in the expansion of the coefficients of the rows $(j,11)$ and $(j,22)$ has a coefficient $W^{-1/2}$ or lower,
and so $P_{j,\bar\sigma_j}(\tilde{x}_{j,1},\tilde{x}_{2j}, \tilde{b}_{j,1},\tilde{b}_{j,1})$ has a coefficient $W^{-1}$ or lower. 
In addition, if $P_{j,\bar\sigma_j}(\tilde{x}_{j,1},\tilde{x}_{j,2}, \tilde{b}_{j,1},\tilde{b}_{j,1})$ contains any terms with $R_{jk}$ (see (\ref{D})),
or at least one off-diagonal elements in $(j,12)$ and $(j,21)$, we get a coefficient  $W^{-2}$ or lower (and so obtain the group (i)).

We are left to consider terms with $d_{j,12}d_{j,21}$. Consider first $j>1$. If   
$P_{j,\bar\sigma_j}(\tilde{x}_{j,1},\tilde{x}_{j,2}, \tilde{b}_{j,1},\tilde{b}_{j,1})$ contains two off-diagonal elements in rows $(j,11)$ and $(j,22)$,
we get group (i). One off-diagonal element and $d_{j,11}$ (or $d_{j,22}$) gives group (ii) or group (i) (since off-diagonal elements do not
depend on $\tilde{x}_{j,1},\tilde{x}_{j,2}$, $\tilde{b}_{j,1},\tilde{b}_{j,1}$), and it is easy to see from (\ref{d_exp1}) that all the terms in expansion of
$d_{j,11}d_{j,22}d_{j,12}d_{j,21}$ belongs to groups (i) -- (iii). For $j=1$ everything will be similar since the zero order term of $T$ of (\ref{T}) (which gives contribution
to the $W^{-1}$ order of elements) does not depend on $\tilde{x}_{j,1},\tilde{x}_{j,2}$, $\tilde{b}_{j,1},\tilde{b}_{j,1}$, and the next orders contribute to
the orders $W^{-3/2}$ or smaller.

To get a non-zero contribution, we have to complete  the expression $P_{j,\bar\sigma_j}(\tilde{x}_{j,1},\tilde{x}_{j,2}, \tilde{b}_{j,1},\tilde{b}_{j,1})$ by some other terms of 
the expansion of the exponent of (\ref{G_last}) in order to get an even degree of each variable $\tilde{x}_{j,1},\tilde{x}_{j,2}$, $\tilde{b}_{j,1},\tilde{b}_{j,1}$. 
But all such a terms have the coefficient $W^{-1/2}$ or lower, and therefore Lemma \ref{l:det_exp} yields that the coefficient near each $j$ in terms that
gives a non-zero contribution must be $W^{-2}$ or lower. Since we have a coefficient $W^{2n}$ in (\ref{G_last}) after the change (\ref{change}), 
this means that to get a non-zero
contribution each coefficient must be exactly $W^{-2}$. Note that the terms of $P_{j,\bar\sigma_j}(\tilde{x}_{j,1},\tilde{x}_{j,2}, \tilde{b}_{j,1},\tilde{b}_{j,1})$ 
that can be completed to the monomial with all even degrees and with a coefficients $W^{-2}$ does not contain any terms with $R_{jk}$,
any terms of (\ref{T}) higher than linear in $\tilde x$'s,
and any terms of the expansion $d_{j,ls}$, $l,s=1,2$ of order $W^{-3/2}$ or lower (except those that comes from $T$). They also cannot be completed to the monomial with all even degrees 
and with a coefficients $W^{-2}$ by any terms of the exponent of (\ref{G_last}) that has a coefficient lower then $W^{-1/2}$ for some $j$.
Thus we need to consider the terms up to the third order in the expansions (\ref{f_exp}) and (\ref{f_exp2}), the linear terms of the functions in the 
exponents (\ref{exp1}) -- (\ref{exp3}), the linear terms coming from
\begin{align}\label{contr1}
 &b_{j,1}^{-2}b_{j,2}^{-2}=\exp\Big\{-
\frac{2\tilde{b}_{j,1}}{\sqrt{W}}-\frac{2\tilde{b}_{j,2}}{\sqrt{W}}+O(W^{-1})\Big\},\\ \notag
&(x_{1,1}x_{1,2}b_{1,1}b_{1,2})^{-1}
= \exp\Big\{-\frac{\tilde{x}_{1,1}}{a_+\sqrt{W}}-\frac{\tilde{x}_{1,2}}{a_-\sqrt{W}}-
\frac{\tilde{b}_{1,1}}{\sqrt{W}}-\frac{\tilde{b}_{1,2}}{\sqrt{W}}+O(W^{-1})\Big\},
\end{align}
and no more than quadratic terms in
\begin{equation}\label{contr2}
\mdet (\hat X_1+i\gamma U_1\sigma U_1^*)=\mdet A\cdot
\exp\Big\{\dfrac{1}{\sqrt{W}}\Tr A^{-1}
\widetilde X_{1}-\frac{1}{2W}\Tr A^{-1}
\widetilde X_{1} A^{-1}
\widetilde X_{1}+O(W^{-3/2})\Big\}
\end{equation}
with
\[
\widetilde X_{1}=\left(\begin{matrix}
\tilde{x}_{1,1}&0\\
0&\tilde{x}_{1,2}
\end{matrix}\right).
\]
Note that the terms containing $\tilde x_{j,1}\tilde b_{j,1}/W$ in $d_{j,11}$ (see (\ref{d_exp1})) cannot contribute to the limit, since if we complete them
to the monomial with even degrees of $\tilde x_{j,1},\tilde b_{j,1}$, then it will contain $W^{-2}$ and an additional $W^{-1}$ should come from the line
containing $d_{j,22}$. Moreover, the terms containing $\tilde x_{j,1}^2$ in $d_{j,11}$ can give a non-zero contribution only if the resulting monomial
contains only $\tilde x_{j,1}^2$, since otherwise, taking into account the contribution of the line
containing $d_{j,22}$, we again obtain at least $W^{-3}$. Thus we can replace $\tilde x_{j,1}^2$ by its average via Gaussian measure 
$(2\pi/c_+)^{-1/2} e^{-c_+\tilde x_{j,1}^2/2}$, i.e. by $ c_+^{-1}$.
The same is true for $\tilde x_{j,2}\tilde b_{j,2}/W$ and for $\tilde x_{j,2}^2$ which could be replaced by $ c_-^{-1}$. Similar argument yields that
the contribution of the terms with $\tilde x_{j,1}^2$ in the line containing $d_{j,12}$ and  $\tilde x_{j,2}^2$ in the line containing $d_{j,21}$  disappear in the limit
$W\to\infty$. Thus the term corresponding to
$W^{2n}\det\mathcal{D}$ in (\ref{G_last})
can be replaced by the term
\begin{eqnarray}\label{det'}
&\int d\rho\,d\tau \exp\Big\{\beta\sum\Tr\Big(U_j^*\tilde\rho_jS_j-U_{j-1}^*\tilde\rho_{j-1}S_{j-1}\Big) \Big(S_j^{-1}\tilde\tau_jU_j-
S_{j-1}^{-1}\tilde\tau_{j-1}U_{j-1}\Big)\\ \notag
&+\sum\Big(c_+n_{j,12}+c_-n_{j,21}-n_{j,1}/c_0a_++ n_{j,2}/c_0a_- \Big)\\ \notag
&+{W}^{1/2}\sum\Big(\big(\tilde{x}_{j,1}/a_+-\tilde{b}_{j,1})n_{j,1}+(\tilde{x}_{j,2}/a_--\tilde{b}_{j,2})n_{j,2}\big)\\
&  -W^{-1/2}\sum\Big(a_+^{-2}\big(\tilde{x}_{j,1}/a_+-\tilde{b}_{j,2}\big)n_{j,12}+a_-^{-2}\big(\tilde{x}_{j,2}/a_--\tilde{b}_{j,1}\big)n_{j,21}\Big)\Big\}
\notag\\ \notag
&\times\exp\Big\{-\Tr \big(A^{-1}-W^{-1/2}A^{-1}
\left(\begin{matrix}
\tilde{x}_{1,1}&0\\
0&\tilde{x}_{1,2}
\end{matrix}\right)
A^{-1}\big)\tilde\rho_1\big(S_1(LQ)S_1^{-1}\big)\tilde \tau_1\Big\}+O(W^{-1/2}),
\end{eqnarray}
where $\tilde \rho_j$, $\tilde \tau_j$, $n_{j,12}$, $n_{j,21}$, $n_{j,1}$, $n_{j,2}$ are defined in (\ref{rt_tilde}).
Here we have used Grassmann variables $\{\rho_{j,ls}\}$, $\{\tau_{j,ls}\}$, $j=1,..,n$, $l,s=1,2$ to rewrite the determinant (\ref{D}) with respect to (\ref{G_Gr}),
have substituted (\ref{d_exp1}) and left only terms that give the contribution (according to arguments above), and then have changed $\rho_{j,11}\to \sqrt{W}\rho_{j,11}$,  
$\tau_{j,11}\to \sqrt{W}\rho_{j,11}$.
Note also 
\begin{equation}\label{capm}
c_+a_+^2=c_0a_+,\quad c_-a_-^2=-c_0a_-.
\end{equation}

Now let us prove that the contribution of the third order in the expansions (\ref{f_exp}) and (\ref{f_exp2}) is small. 
Indeed, the terms  $P_{j,\bar\sigma_j}(\tilde{x}_{j,1},\tilde{x}_{j,2}, \tilde{b}_{j,1},\tilde{b}_{j,1})$ 
that can be completed to the monomial with all even degrees and with a coefficients $W^{-2}$ by these cubic terms cannot
come from the contribution of $T$ of (\ref{T}) and can be one of two types

1. terms $\big(\tilde x_{j,1}/a_+-\tilde{b}_{j,1}\big)\cdot x\cdot c_+\cdot c_-$, where $c_+$, $c_-$ come from the zero terms of $d_{j,12}$, $d_{j,21}$ 
(see (\ref{d_exp1})) and $x$ is an element of the row $(j,22)$ and so
does not depend on $\tilde{x}_{j,1}$, $\tilde{b}_{j,1}$ (or similar terms with $\big(\tilde x_{j,2}/a_--\tilde{b}_{j,2}\big)$);

2. terms of $\big(\tilde x_{j,1}/a_+-\tilde{b}_{j,1}\big)\big(\tilde x_{j,2}/a_--\tilde{b}_{j,2}\big) \big(\tilde x_{j,1}/a_+-\tilde{b}_{j,2}\big) \cdot c_-$
with $\tilde x_{j,1}^2$ or $\tilde b_{j,2}^2$ (or similar terms with $c_+$ coming  from $d_{j,12}$) ;

But it is easy to see that
\[
\int \big(\tilde x^4_{j,1}/(3a_+^4)-\tilde{b}_{j,1}^4/3\big) \cdot e^{-\frac{c_+\tilde x^2_{j,1}}{2}-\frac{a_+^2c_+\tilde b^2_{j,1}}{2}}\,d\tilde x_{j,1}\, d\tilde b_{j,1}= \dfrac{2\pi}{a_+c_+}\Big(\dfrac{1}{a_+^4c_+^2}-\dfrac{1}{a_+^4c_+^2}\Big)=0,
\]
and so the contribution of (1) is zero. Similarly the contribution (2) is zero.

Therefore, the contribution of the third order in the expansions (\ref{f_exp})  is small,  and  
using (\ref{det'}), (\ref{contr1}) -- (\ref{contr2}), and also
\begin{align*}
&\exp\Big\{-\dfrac{1}{n}\sum\limits_{j=1}^n\Big(\Tr U_j^*L_{\pm}U_j\Big(\begin{matrix}\kappa+ix_1&-iy_1\\ iy_1&-\kappa+ix_1\end{matrix}\Big)+\Tr S_j^{-1}L_{\pm}S_j
\Big(\begin{matrix}\kappa+ix_2&-iy_2\\ iy_2&-\kappa+ix_2\end{matrix}\Big)\Big)\Big\}\\
=&\exp\big\{-E(x_1+x_2)\big\}\cdot \exp\Big\{-\dfrac{c_0}{2n}\sum\limits_{j=1}^n\Big(\Tr U_j^*L U_j\Big(\begin{matrix}\kappa&-iy_1\\ iy_1&-\kappa\end{matrix}\Big)+
\Tr S_j^{-1} L S_j\Big(\begin{matrix}\kappa&-iy_2\\ iy_2&-\kappa\end{matrix}\Big)\Big)\Big\}
\end{align*}
for $L_\pm$, $L$ defined in (\ref{L_pm}), we get (\ref{G_main}). 

\end{proof}
Denoting the exponent in the second line of (\ref{G_main}) by $\mathcal{E}(z)$ and taking the Gaussian integral over $d z$ with $z$ of (\ref{rt_tilde}), we get
\begin{align}\label{G_main1}
\int_{\mathbb{R}^{4n}} \mathcal{E}(z)dz&=(2\pi)^{2n}\mdet^{-1/2} M\\
&\exp\Big\{\frac{1}{2}(M^{-1}(W^{1/2}h^0+W^{-1/2}(h+\zeta/\Lambda)), W^{1/2}h^0
+W^{-1/2}(h+\zeta/n))\Big\} .
\notag\end{align}
It is easy to see from (\ref{M}) -- (\ref{tilde_M}) that
\begin{align*}
\det \,M&= \det\,M_0(1+O(W^{-1}))=(c_+^2c_-^2a_+^2a_-^2)^{n} (1+O(W^{-1}))=c_0^{4n}(1+O(W^{-1}))
\end{align*}
with $c_\pm$ of (\ref{c_pm}). Note now that
\[
M^{-1}=\big(M_0+\dfrac{1}{W}\tilde{M}\big)^{-1}=M_0^{-1}-\dfrac{1}{W}M_0^{-1}\tilde{M}M_0^{-1}+O(W^{-2}).
\]
Since $M_0$ is diagonal and $h^0_{j,ls}$ is proportional to $n_{j,1}$ or $n_{j,2}$ and $n_{j,l}^2=0$, we have
\[(M_0^{-1}h^0,h^0)=0.\]
Hence, the exponent in the r.h.s. of (\ref{G_main1}) takes the form
\begin{multline*}
\dfrac{1}{2}\Big((M_0^{-1}h^0,h+\zeta/\Lambda)+(M_0^{-1}(h+\zeta/\Lambda),h^0)\\-(M_0^{-1}\tilde M M_0^{-1}h^0,h^0)\Big)+o(1)
=
I_1+I_2-I_3+o(1).
\end{multline*}
Then we can rewrite (recall (\ref{h}) and (\ref{capm}))
\begin{align}\label{M_0h,h}
I_1&+I_2=\sum \Big(\dfrac{(h_{j,11}+\zeta_{j,11}/n )n_{j,1}}{a_+c_+}+\dfrac{(h_{j,22}+\zeta_{j,22}/\Lambda )n_{j,2}}{a_-c_-}\\ \notag
&-\dfrac{(h_{j,12}+a_+\zeta_{j,12}/n )n_{j,1}}{a_+^2c_+}
-\dfrac{(h_{j,21}+a_-\zeta_{j,21}/n )n_{j,2}}{a_-^2c_-}\Big)\\ \notag
&+\Big(\dfrac{h^q_{1,11}n_{j11}}{a_+c_+}+\dfrac{h_{1,22}n_{1,2}}{a_-c_-}-\dfrac{h_{1,12}n_{1,1}}{a_+^2c_+}
-\dfrac{h_{1,21}n_{1,2}}{a_-^2c_-}\Big)
\end{align}
\begin{align}
\notag &=\sum n_{j,1}\Big(\dfrac{2}{a_+c_0}+\beta\big(t_j^2+t_{j+1}^2-v_j^2-v_{j+1}^2\big)+\dfrac{a_-n_{j,12}}{a_+^2c_0}+\dfrac{n_{j,21}}{a_-c_0}
+\dfrac{\zeta_{j,11}-\zeta_{j,12}}{c_0n}\Big)\\ \notag
&+\sum n_{j,2}\Big(-\dfrac{2}{a_-c_0}+\beta\big(t_j^2+t_{j+1}^2-v_j^2-v_{j+1}^2\big)-\dfrac{a_+n_{j,21}}{a_-^2c_0}-\dfrac{n_{j,12}}{a_+c_0}
-\dfrac{\zeta_{j,22}-\zeta_{j,21}}{c_0n}\Big)\\ \notag
& +n_{1,1}\cdot \dfrac{(A^{-1})_{11}-(S_1(LQ)S_1^{-1})_{11}}{c_+a_+}+n_{1,2}\cdot \dfrac{(A^{-1})_{22}-(S_1(LQ)S_1^{-1})_{22}}{c_-a_-}\\ \notag
&+n_{1,1}n_{1,2}\Big(\dfrac{(A^{-1})_{12}(S_1LQS_1^{-1})_{22}(A^{-1})_{21}}{c_+a_+}+\dfrac{(A^{-1})_{21}(S_1LQS_1^{-1})_{11}(A^{-1})_{12}}{c_-a_-}\Big)+O(W^{-1/2});
\end{align}
\begin{align}\notag
&I_3=\dfrac{4}{c_0^4}\sum n_{j,1}n_{j,2}-\dfrac{1}{a_+^2c_0^2}\sum n_{j,12}n_{j,1}n_{j,2}
-\dfrac{1}{a_-^2c_0^2}\sum n_{j,21}n_{j,1}n_{j,2}\\
 \label{tilMh,h} &+\sum \dfrac{\beta(v_j^2+t_j^2)}{c_0^2}\big(n_{j,1}n_{j+1,1}+n_{j,1}n_{j+1,2}
+n_{j,2}n_{j+1,1}+n_{j,2}n_{j+1,2}\big)\\ \notag
&-\dfrac{1}{c_0^2}(A^{-1})_{12}(A^{-1})_{21} n_{1,1}n_{1,2}+O(W^{-1}).
\end{align}
Moreover,
\begin{align}\label{gr_lapl}
&\exp\Big\{\beta\sum\Tr\Big(U_j^*\tilde\rho_jS_j-U_{j-1}^*\tilde\rho_{j-1}S_{j-1}\Big) \Big(S_j^{-1}\tilde\tau_jU_j-
S_{j-1}^{-1}\tilde\tau_{j-1}U_{j-1}\Big)\Big\}\\ \notag
&=\exp\Big\{\dfrac{\beta}{W}\sum\Tr\Big(U_j^*\hat\rho_jS_j-U_{j-1}^*\hat\rho_{j-1}S_{j-1}\Big) \Big(S_j^{-1}\hat\tau_jU_j-
S_{j-1}^{-1}\hat\tau_{j-1}U_{j-1}\Big)\Big\}+O(W^{-1/2}),
\end{align}
where
\begin{align}\label{rt_hat}
\hat\rho_j=\hbox{diag}\{\rho_{j,11},\rho_{j,22}\},\quad
 \hat\tau_j=\hbox{diag}\{\tau_{j,11},\tau_{j,22}\}.
\end{align}
Combining (\ref{M_0h,h}) -- (\ref{gr_lapl}) we can integrate the main term of (\ref{G_main1}) with respect to $\rho_{j,12}$, $\tau_{j,12}$, $\rho_{j,21}$, $\tau_{j,21}$
according to (\ref{G_Gr}).
This integration gives
\begin{align*}
&\prod\limits_{j=1}^n \Big(c_++\dfrac{a_-n_{j,1}}{a_+^2c_0}-\dfrac{n_{j,2}}{a_+c_0}+\dfrac{n_{j,1}n_{j,2}}{a_+^2c_0^2}\Big)
\Big(c_-+\dfrac{n_{j,1}}{a_-c_0}-\dfrac{a_+n_{j,2}}{a_-^2c_0}+\dfrac{n_{j,1}n_{j,2}}{a_-^2c_0^2}\Big)\\ \notag
&=c_0^2+\dfrac{c_0n_{j,2}}{a_-}-\dfrac{c_0n_{j,1}}{a_+} +\big(1+2/c_0^2\big)n_{j,1}n_{j,2}=
c_0^2\cdot \exp\Big\{-\dfrac{n_{j,1}}{a_+c_0}+\dfrac{n_{j,2}}{a_-c_0}\Big\}\cdot \Big(1+\dfrac{2}{c_0^4}n_{j,1}n_{j,2}\Big),
\end{align*}
which, together with (\ref{M_0h,h}) -- (\ref{gr_lapl}), yields
\begin{align*}
&Z_{\beta nW}(\kappa,z_1,z_2)=c_0^{2n}\cdot e^{E(x_1-x_2)}\int d\hat\rho\, d\hat \tau \,dU\, d S \prod_{j=1}^n
\Big(1-\dfrac{2}{c_0^4}n_{j,1}n_{j,2}\Big)\exp\Big\{-\beta c_0^2\sum (v_j^2+t_j^2)\Big\} \\ \notag
&\times\exp\Big\{\beta\sum\Tr\Big(U_j^*\hat\rho_jS_j-U_{j-1}^*\hat\rho_{j-1}S_{j-1}\Big) \Big(S_j^{-1}\hat\tau_jU_j-
S_{j-1}^{-1}\hat\tau_{j-1}U_{j-1}\Big)\Big\}\\ \notag
&\times \exp\Big\{\sum n_{j,1}\Big(\beta\big(t_j^2+t_{j+1}^2-v_j^2-v_{j+1}^2\big)+\dfrac{\zeta_{j,11}-\zeta_{j,12}}{c_0n}\big)\Big)\Big\}\\
\notag&\times \exp\Big\{\sum n_{j,2}\Big(\beta\big(t_j^2+t_{j+1}^2-v_j^2-v_{j+1}^2\big)-\dfrac{\zeta_{j,22}-\zeta_{j,21}}{c_0n}\big)\Big)\Big\}\\ \notag
&\times \int F(A,Q, \hat\rho_1,\hat\tau_1, S_1) \,dQ\\
&\times\exp\Big\{-\dfrac{c_0}{2n}\sum\limits_{j=1}^n\Big(\Tr U_j^*L U_j\Big(\begin{matrix}\kappa&-iy_1\\ iy_1&-\kappa\end{matrix}\Big)+
\Tr S_j^{-1} L S_j\Big(\begin{matrix}\kappa&-iy_2\\ iy_2&-\kappa\end{matrix}\Big)\Big)\Big\}
+o(1)
\end{align*}
where we have used $a_+c_+=c_0$, $a_-c_-=-c_0$, and
\[
(1+2n_{j,1}n_{j,2}/c_0^4)\cdot e^{-4n_{j,1}n_{j,2}/c_0^4}=1-2n_{j,1}n_{j,2}/c_0^4.
\]
Here
\begin{align*}
&F(A,Q, \hat\rho_1,\hat\tau_1, S_1)=\exp\Big\{-\Tr(iE/2+i\gamma\sigma)(LQ)-\dfrac{c_0}{2}\Tr S_1^{-1}LS_1 (LQ)\Big\}
\\ \notag 
&\times \exp\Big\{-\Tr A^{-1}\hat\rho_1S_1(LQ)S_1^{-1}\hat\tau_1+\dfrac{1}{c_0^2}(A^{-1})_{12}(A^{-1})_{21} n_{1,1}n_{1,2}\Big\}\cdot \mdet \,A
\end{align*}
\begin{align*}
&\times\exp\Big\{n_{1,1}\cdot \dfrac{(A^{-1})_{11}-(S_1(LQ)S_1^{-1})_{11}}{c_0}-n_{1,2}\cdot \dfrac{(A^{-1})_{22}-(S_1(LQ)S_1^{-1})_{22}}{c_0}\Big\}\\ \notag
&\times\exp\Big\{n_{1,1}n_{1,2}\Big(\dfrac{(A^{-1})_{12}(S_1LQS_1^{-1})_{22}(A^{-1})_{21}}{c_0}-\dfrac{(A^{-1})_{21}(S_1LQS_1^{-1})_{11}(A^{-1})_{12}}{c_0}\Big)\Big\}.
\end{align*}
Notice
\begin{align*}
&\exp\Big\{\dfrac{1}{c_0}\big((A^{-1})_{11}n_{1,1}-(A^{-1})_{22}n_{1,2}\big)+\dfrac{1}{c_0^2}(A^{-1})_{12}(A^{-1})_{21} n_{1,1}n_{1,2}\Big\}\cdot \mdet \,A\\
&=\mdet \big(A+\frac{1}{c_0}L\hat\rho_1\hat\tau_1\big),
\end{align*}
where $\hat\rho_1, \hat\tau_1$ is defined in (\ref{rt_hat}).
In addition,
\begin{align*}
&\exp\Big\{-\Tr A^{-1}\hat\rho_1S_1(LQ)S_1^{-1}\hat\tau_1-\dfrac{1}{c_0}(S_1(LQ)S_1^{-1})_{11}n_{1,1}+\dfrac{1}{c_0}(S_1(LQ)S_1^{-1})_{22}n_{1,2}\Big\}\\
&\times\exp\Big\{n_{1,1}n_{1,2}\Big(\dfrac{(A^{-1})_{12}(S_1LQS_1^{-1})_{22}(A^{-1})_{21}}{c_0}-\dfrac{(A^{-1})_{21}(S_1LQS_1^{-1})_{11}(A^{-1})_{12}}{c_0}\Big)\Big\}\\
&=\exp\Big\{\Tr S_1^{-1}\hat\tau_1\Big(A+\dfrac{1}{c_0}L\hat\rho_1\hat\tau_1\Big)^{-1}\hat\rho_1S_1L Q-\dfrac{1}{c_0}\Tr S_1^{-1}L \hat\rho_1\hat\tau_1 S_1(LQ)\Big\}
\end{align*}
hence we can perform the integration with respect to $Q$ to get
\begin{multline*}
\int F(A,Q, \hat\rho_1,\hat\tau_1, S_1) \,dQ\\=\dfrac{\mdet \big(A+\dfrac{1}{c_0}L\hat\rho_1\hat\tau_1\big)}
{\mdet\Big(iE/2+i\gamma\sigma+\dfrac{c_0}{2}S_1^{-1}L\big(1-\dfrac{2}{c_0^2} \hat\rho_1\hat\tau_1\big)S_1-S_1^{-1}\hat\tau_1\big(A+\dfrac{1}{c_0}L\hat\rho_1\hat\tau_1\big)^{-1}\hat\rho_1S_1\Big)}.
\end{multline*}
Using
\[
U_1^*\Big(A+\dfrac{1}{c_0}L\hat\rho_1\hat\tau_1\Big) U_1=
-\dfrac{iE}{2}+\dfrac{c_0}{2}U_1^*L\big(1+\frac{2}{c_0^2} \hat\rho_1\hat\tau_1\big)U_1+i\gamma\sigma,
\]
we get finally
\begin{multline*}
\int F(A,Q, \hat\rho_1,\hat\tau_1, S_1) \,dQ\\=\hbox{sdet}^{-1}
\left(\begin{matrix}
U_1^*L\big(1+\dfrac{2}{c_0^2}\hat\rho_1\hat\tau_1\big)U_1-\dfrac{iE}{c_0}+\dfrac{2i\gamma}{c_0}\sigma& \dfrac{2}{c_0}S_1^{-1}\hat\tau_1 U_1\\
\dfrac{2}{c_0}U_1^*\hat\rho_1 S_1&-\Big(S_1^{-1}L\big(1-\dfrac{2}{c_0^2} \hat\rho_1\hat\tau_1\big)S_1+\dfrac{iE}{c_0}+\dfrac{2i\gamma}{c_0}\Big)
\end{matrix}\right)
\end{multline*}
Now changing
\[
\rho_{j,11}\to c_0 \rho_{j,1}, \quad \tau_{j,11}\to c_0 \tau_{j,1},\quad \rho_{j,22}\to c_0 \rho_{j,2},\quad \tau_{j,22}\to c_0 \rho_{j,2}
\]
with an appropriate change in $n_{j,1}$, $n_{j,2}$, $\hat\rho_j$, $\hat\tau_j$, and recalling (\ref{beta_til}), we get
 (\ref{sigma-mod}) which finishes the proof of Theorem \ref{thm:sigma_mod} for $M=1$. The general case  can be obtain very similar:
 since $M$ is finite, the additional terms (\ref{sup}) do not affect the saddle-points and the main terms in representation (\ref{G_main}), they 
 just add some additional terms to $\tilde M$ of (\ref{tilde_M}), $h^{(q)}_{1,ls}$ of (\ref{h}) and to (\ref{contr1}) -- (\ref{contr2}) which
 can be handled in the same way. 

$\square$

\section{Proof of Theorem \ref{thm:2}}
To simplify formulas below we handle again the case $M=1$. We explain the difference with the case $M>1$ at the end of the section.

It is easy to see that (\ref{sigma-mod}) implies that
$Z_{\beta n}(E,\varepsilon,\xi)$ can be written in the form
\begin{align*}
Z_{\beta n}(E,\varepsilon,\xi)=&e^{E(x_1-x_2)}
\int  D(Q) \tilde{\mathcal{F}}(Q)\tilde{\mathcal{M}}^{n-1}(Q,Q')\tilde{\mathcal{F}}(Q')  dQ dQ'
\end{align*}
where 
\begin{align*}
\tilde{\mathcal{F}}(Q):&=\exp\{ \frac{c_0}{4n}\Str Q\Lambda_{\kappa, y}\}\\
\tilde{\mathcal{M}}(Q,Q')&=\tilde{\mathcal{F}}(Q)\exp\big\{-\dfrac{\tilde\beta}{4}\Str QQ'\big\}\tilde{\mathcal{F}}(Q')
\notag\end{align*}
and
\begin{align*}
&D(Q):=\Sdet^{-1}\Big(Q-\dfrac{iE}{2\pi\rho(E)}+\dfrac{i\gamma}{\pi\rho(E)}\mathcal{L}\Sigma\Big) =D(U,S,\hat\rho,\hat\tau)
\end{align*}

But for the proof of Theorem \ref{thm:2}, it is convenient to  change  variables $\{U_i\}_{i=1}^n$ and $\{S_i\}_{i=1}^n$  in order  to obtain 
a little bit different representation.
\begin{proposition}\label{p:turn}
\begin{align}\label{trans1}
Z_{\beta n}(E,\varepsilon,\xi)=&e^{E(x_1-x_2)}
\int D_1(Q) \mathcal{F}(Q){\mathcal{M}}^{n-1}(Q,Q')\mathcal{F}(Q')  dQ dQ',
\end{align}
where
\begin{align}
\mathcal{F}(Q):&=
\exp\Big\{ \frac{c_0}{4n}\Str Q\Lambda_{1}\Big\},\quad \Lambda_1=\left(\begin{array}{cc}L\kappa_1&0\\0& L\kappa_2 \end{array}\right),
\,\, \kappa_{1,2}=(\kappa^2+y_{1,2}^2)^{1/2},\notag\\
\mathcal{M}(Q,Q')&=\mathcal{F}(Q)\exp\big\{-\dfrac{\tilde\beta}{4}\Str QQ'\big\}\mathcal{F}(Q'),
\notag\\
\label{D_1}
D_1(Q)&=
c_1+c_2n_{1}+c_3n_{2}+c_4n_{1}n_{2}+d_1\rho_{1}\tau_{2}+d_2\rho_{2}\tau_{1},\quad (n_1=\rho_{1}\tau_{1},\,n_2=\rho_{2}\tau_{2})\\
\notag
c_\nu&=\sum_{k=1}^3c_\nu^{(k)}(\tau-i\sinh t\cdot \cos\alpha_2\cos \theta+\cosh t\cdot\sin\alpha_2)^{-k},\quad \nu=1,2,3,4,\\
d_\nu&=d_\nu^{(1)}(\tau-i\sinh t\cdot \cos\alpha_2\cos \theta+\cosh t\cdot\sin\alpha_2)^{-1},\quad \nu=1,2,\notag\\
\tau&=(\gamma+\gamma^{-1})/c_0>0,
\label{tau}\end{align}
and $\alpha_1$, $\alpha_2$ are defined as
\begin{equation}
\sin\alpha_{\sigma}=y_\sigma(\kappa^2+y^2_\sigma)^{-1/2},\quad 0< \alpha_\sigma<\pi/2,\quad \sigma=1,2\label{alpha}.
\end{equation}
Here $c_\nu^{(k)}$  and $d_\nu^{(1)}$ are polynomials with respect to entries of $U$, whose coefficients are independent of
$S$ in the case of $c_\nu^{(k)}$, and are bounded functions of $S$ in the case of  $d_\nu^{(1)}$.
Parameters   $t$, $\theta$ here correspond to the following parametrizations of
 $U\in \mathring{U}(2)$ and $S\in \mathring{U}(1,1)$:
\begin{align}\label{U,S}
U=\left(\begin{array}{cc}e^{i\psi/2}\cos\tfrac\varphi 2 &e^{-i\psi/2}\sin\tfrac\varphi 2 \\
-e^{i\psi/2}\sin\tfrac\varphi 2&e^{-i\psi/2}\cos\tfrac\varphi 2
\end{array}\right),\quad S=\left(\begin{array}{cc}e^{i\theta/2}\cosh\tfrac t 2&e^{-i\theta/2}\sinh\tfrac t 2 \\
e^{i\theta/2}\sinh\tfrac t 2&e^{-i\theta/2}\cosh \tfrac t 2
\end{array}\right).
\end{align}

\end{proposition}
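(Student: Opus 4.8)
The plan is to deduce (\ref{trans1}) from the representation in terms of $\tilde{\mathcal F},\tilde{\mathcal M},D$ recorded just above, by a single change of variables in the outer rotations $\{U_j\}_{j=1}^n$, $\{S_j\}_{j=1}^n$ that diagonalises the two diagonal $2\times2$ blocks of $\Lambda_{\kappa,y}$. Write $\Lambda_{\kappa,y}=\mathrm{diag}(\Lambda^{(1)},\Lambda^{(2)})$, where $\Lambda^{(\sigma)}=\big(\begin{smallmatrix}\kappa&-iy_\sigma\\ iy_\sigma&-\kappa\end{smallmatrix}\big)$, $\sigma=1,2$. Using super-cyclicity of $\Str$ I would first move the outer factors $\mathrm{diag}(U_j,S_j)$ of $Q_j$ onto the source matrix, so that $\Str Q_j\Lambda_{\kappa,y}=\Str \mathcal Q_j\,\mathrm{diag}\big(U_j\Lambda^{(1)}U_j^{*},\,S_j\Lambda^{(2)}S_j^{-1}\big)$, where $\mathcal Q_j$ denotes the central factor of $Q_j$.

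Next I would substitute $U_j\to U_jV_1$ and $S_j\to S_jW_1$ simultaneously for all $j$, with the fixed matrices $V_1=\big(\begin{smallmatrix}\cos\frac{\alpha_1}2&-i\sin\frac{\alpha_1}2\\ -i\sin\frac{\alpha_1}2&\cos\frac{\alpha_1}2\end{smallmatrix}\big)$ and $W_1=\big(\begin{smallmatrix}\cos\frac{\alpha_2}2&-i\sin\frac{\alpha_2}2\\ -i\sin\frac{\alpha_2}2&\cos\frac{\alpha_2}2\end{smallmatrix}\big)$, with $\alpha_1,\alpha_2$ as in (\ref{alpha}). A direct computation gives $V_1\Lambda^{(1)}V_1^{*}=\kappa_1L$ and $W_1\Lambda^{(2)}W_1^{-1}=\kappa_2L$, so the source term turns into $\Str Q_j\Lambda_1$, i.e. $\tilde{\mathcal F}\to\mathcal F$ and $\tilde{\mathcal M}\to\mathcal M$. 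Since $\Str Q_jQ_{j-1}$ depends on the rotations only through $U_jU_{j-1}^{*}$ and $S_jS_{j-1}^{-1}$, the interaction and the Haar measures are invariant under these right translations (lifting, if needed, the coset integrals to $U(2)$, resp. $U(1,1)$). The matrix $V_1$ is unitary, so the $U$-move stays inside $\mathring U(2)$; the matrix $W_1$ is a compact rotation that does \emph{not} lie in $U(1,1)$, so the $S$-substitution is a complex deformation of the non-compact contour $\mathring U(1,1)$. Here I would invoke analyticity: the integrand is holomorphic in the entries of $S_j$, and because $\tau>0$ and $0<\alpha_2<\pi/2$ keep the relevant superdeterminant denominator bounded away from zero (see below), the contour can be rotated back to $\mathring U(1,1)$ without crossing singularities, leaving the value unchanged.

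Because $\Gamma_M$ has rank one, the factor $D(Q)$ involves only the site-$1$ variables, and under the substitution it becomes $D_1(Q)$, a function of $U_1,S_1$ and the site-$1$ Grassmanns $\hat\rho_1,\hat\tau_1$. To obtain its explicit form (\ref{D_1}) I would expand $\Sdet^{-1}\big(Q_1-\frac{iE}{c_0}+\frac{2i\gamma}{c_0}\mathcal L\Sigma\big)$ through $\Sdet^{-1}=\det A/\det(B-\eta A^{-1}\sigma)$ of (\ref{sdet_def}), insert the parametrisation (\ref{U,S}) of $U_1,S_1$, and collect by the Grassmann monomials $1,n_1,n_2,n_1n_2,\rho_1\tau_2,\rho_2\tau_1$. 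The central computation is the fermionic block determinant: conjugating by $W_1$ and using $S_1^{-1}LS_1=\big(\begin{smallmatrix}\cosh t& e^{-i\theta}\sinh t\\ -e^{i\theta}\sinh t&-\cosh t\end{smallmatrix}\big)$ together with $c_0^2=4-E^2$ from (\ref{a_pm}), one finds that the Grassmann-free part of $\det(B-\eta A^{-1}\sigma)$ equals $-\frac{4\gamma}{c_0}\big(\tau-i\sinh t\cos\alpha_2\cos\theta+\cosh t\sin\alpha_2\big)$, which is exactly the denominator in (\ref{D_1}); the higher powers $k=2,3$ are produced by the Grassmann corrections $\eta A^{-1}\sigma$. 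The numerator $\det A$, whose boson block depends on $U_1$ but not $S_1$, supplies the dependence on $U$; from it one reads off that the $c^{(k)}_\nu$ are polynomials in the entries of $U$ independent of $S$, while the $d^{(1)}_\nu$, originating from the mixed terms $\rho_1\tau_2,\rho_2\tau_1$, are bounded functions of $S$.

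I expect the main obstacle to be precisely this last evaluation of $D_1$: tracking all Grassmann orders in the $4\times4$ superdeterminant, the order in which $U_1,S_1,\hat\rho_1,\hat\tau_1$ enter, and verifying the stated analytic structure of the coefficients (polynomiality in $U$ and $S$-independence of the $c^{(k)}_\nu$, boundedness of the $d^{(1)}_\nu$). The contour-deformation step for the $S$-variables is the conceptually delicate point, but it is controlled by $\tau>0$ and $0<\alpha_2<\pi/2$: these give $\Re\big(\tau+\cosh t\sin\alpha_2\big)\ge\tau>0$, which dominates the purely imaginary term $-i\sinh t\cos\alpha_2\cos\theta$, so the superdeterminant stays non-singular along the whole deformation.
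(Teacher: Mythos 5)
Your proposal follows essentially the same route as the paper: diagonalise the two blocks of $\Lambda_{\kappa,y}$ by the rotations $V_{1},V_{2}$, absorb $V_1$ into the $U_j$ by Haar invariance, handle the fact that $V_2\notin U(1,1)$ by an analytic continuation of the $S$-integral controlled by the non-vanishing of $\tau-i\sinh t\cos\alpha_2\cos\theta+\cosh t\sin\alpha_2$, and then evaluate $D_1=D(UV_1^*,SV_2^*)$ by expanding the superdeterminant in the Grassmann monomials. The only difference is presentational: the paper makes the contour deformation concrete through the parametrisation $S(t,s)$, for which $S(t,s)V_2=S(t+i\alpha_2,s)$ reduces the step to a shift $t_j+i\alpha_2\to t_j$ of a single real variable, whereas you phrase it as a deformation in the entries of $S_j$; the underlying justification is the same.
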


\noindent\textit{Proof.}
Let us introduce unitary matrices
\begin{align*}
V_{\sigma}=\left(\begin{array}{cc}\cos(\alpha_{\sigma}/2)&-i\sin(\alpha_{\sigma}/2)\\
-i\sin(\alpha_{\sigma}/2)&\cos(\alpha_{\sigma}/2)\end{array}\right), \quad \sigma=1,2, 
\end{align*}
where $\alpha_{1,2}$ are defined in (\ref{alpha}). It is straightforward to check  that
\[V_{\sigma}\Lambda_{\kappa,y_\sigma }V_{\sigma}^*=\kappa_{\sigma}L, \quad \sigma=1,2.\]
For the unitary group we  can just change the variables
$ U_i\to U_iV_1$, and since the Haar measure is invariant with respect to this change of variables, we obtain the
desired  transformation for the "unitary" part of $\mathcal{M}$. Unfortunately, similar transformation for the hyperbolic group
does not work directly, since the matrix $\tilde S_i=S_iV_2$ is not hyperbolic. But  if we use another parametrization of the Hyperbolic group
\begin{align*}
S(t,s)=\left(\begin{array}{cc}\cosh(t/2)+ise^{t/2}/2&-\sinh(t/2)-ise^{t/2}/2\\
-\sinh(t/2)+ise^{t/2}/2&\cosh(t/2)-ise^{t/2}/2\end{array}\right),
\end{align*}
then it is straightforward to check that
\[  S(t,s)V_2=S(t+i\alpha_2,s) .  \]
On the other hand,  $\mathcal{M}(S_1,S_2)$ depends only on $S_1S_2^{-1}$ and
the entries of $S_1S_2^{-1}$ depend only on $t_1-t_2$
\begin{align*}
(S(t_1,s_1)S^{-1}(t_2,s_2)))_{11}=\cosh((t_1-t_2)/2)+is_1e^{(t_1-t_2)/2}/2-is_2e^{-(t_1-t_2)/2}/2,\\
(S(t_1,s_1)S^{-1}(t_2,s_2)))_{12}=-\sinh((t_1-t_2)/2)-is_1e^{(t_1-t_2)/2}/2+is_2e^{-(t_1-t_2)/2}/2.
\end{align*}
Hence, if we change the integration contour with respect to all $t_j+i\alpha_2\to t_j$, then
\begin{align*}
&\tilde{\mathcal{F}}\to\mathcal{F},\quad \tilde{\mathcal{M}}\to \mathcal{M},\quad
D(U,S)\to D(UV_1^*,SV_2^*).
\end{align*}
Thus we are left to study $D_1=D(UV_1^*,SV_2^*)$. Denote
\[\tilde U=UV_1^*,\quad \tilde S=SV_2^*.
\]
Using  formulas (\ref{sdet_def}) and (\ref{sigma-mod}), we conclude that
\begin{align}\label{sdet1}
D_1:=&\frac{\det \tilde A}{\det \tilde B}\cdot \frac{\det (1+2L\hat n  \tilde A^{-1})}{\det(1-2L\hat n \tilde B^{-1}+4\hat \rho(\tilde A+2L\hat n)^{-1}\hat\tau)}
\notag\\
\tilde A= & - \frac{iE}{c_0}+\frac{2i\gamma}{c_0}\tilde U\hat \sigma \tilde U^{-1}+L,\quad\tilde B=\frac{iE}{c_0}
+\frac{2i\gamma}{c_0}\tilde S\hat \sigma \tilde S^{-1}+L,\\
\hat n&=\mathrm{diag}\{n_{1},n_{2}\}
\notag\end{align}
It is easy to see that
\begin{align}\label{sdet2}
\det \tilde A=&\det (-\frac{iE}{c_0}+\frac{2i\gamma}{c_0}\tilde U\hat\sigma\tilde U^{-1}+L)=-\frac{E^2}{c_0^2}-1-\frac{4\gamma^2}{c_0^2}-\frac{4i\gamma}{c_0}
(\tilde U_{11} \bar{\tilde U}_{12}- \tilde U_{12}\bar{\tilde U}_{11})=\\
=&-\frac{4(1+\gamma^2)}{c_0^2}+\frac{4\gamma}{c_0}\sin\tilde\varphi\sin\tilde\psi=
-\frac{4\gamma}{c_0}(\tau-\sin\varphi\cdot\cos\alpha_1\sin \psi+\sin\alpha_1\cos\varphi)\notag\\
\det \tilde B=&\det (\frac{iE}{c_0}+\frac{2i\gamma}{c_0}\tilde S\hat\sigma \tilde S^{-1}+L)=-\frac{E^2}{c_0^2}-1-\frac{4\gamma^2}{c_0^2}-\frac{4i\gamma}{c_0}
(\tilde S_{11}( \tilde S^{-1})_{21}-\tilde S_{12}(\tilde S^{-1})_{11})=\notag\\
=&
-\frac{4\gamma}{c_0}(\tau-i\sinh t\cdot \cos\alpha_2\cos \theta+\cosh t\cdot\sin\alpha_2),\notag
\end{align}
where  $\tau$ is defined in (\ref{tau})
and we used parametrizations (\ref{U,S}) for $ U$ and $ S$. Here we used also that
\[
\tilde S^{-1}=\left(\begin{matrix}
\tilde S_{22}&-\tilde S_{12}\\
-\tilde S_{21}&\tilde S_{11}
\end{matrix}\right),
\]
and so
\begin{align}\label{sdet2a}
&-\tilde S_{11}( \tilde S^{-1})_{21}+\tilde S_{12}(\tilde S^{-1})_{11}=\tilde S_{11} \tilde S_{21}+\tilde S_{12}\tilde S_{22}=(SV_2^*)_{11}(SV_2^*)_{21}+(SV_2^*)_{12}(SV_2^*)_{22}\\
&=\cos\alpha_2( S_{11} S_{21}+S_{12} S_{22})+i\sin\alpha_2(S_{11}S_{22}+S_{12}S_{21})\notag\\&=
\sinh t\cdot \cos\alpha_2\cos \theta+i\cosh t\cdot\sin\alpha_2.
\notag\end{align}
Similar formulas can be obtained for $\tilde U_{11} \bar{\tilde U}_{12}- \tilde U_{12}\bar{\tilde U}_{11}$.

Since
\begin{align*}
&\hat \rho(\tilde A+2L\hat n)^{-1}\hat\tau=\hat \rho \tilde A^{-1}(1-2L\hat n \tilde A^{-1})\hat\tau,\\
&\hat \rho\tilde A^{-1}L\hat n \tilde A^{-1}\hat\tau\tilde B^{-1}=-n_1n_2\tilde A^{-1}_{12}\tilde A^{-1}_{21}L\tilde B^{-1},
\end{align*}
we have
\begin{align*}
\det (1-2L\hat n \tilde B^{-1}+&4\hat \rho(\tilde A+2L\hat n)^{-1}\hat\tau\tilde B^{-1})=
\det(1-2L\hat n \tilde B^{-1}+4\hat \rho\tilde A^{-1}(1-2L\hat n\tilde A^{-1})\hat\tau\tilde B^{-1})\\
=&\det(1-2L\hat n \tilde B^{-1}+4\hat \rho\tilde A^{-1}\hat\tau\tilde B^{-1})
\det(1+8n_1n_2\tilde A^{-1}_{12}\tilde A^{-1}_{21}L\tilde B^{-1})\\
=&(1+n_1(4\tilde A^{-1}_{11}-2)\tilde B^{-1}_{11}+n_2(4\tilde A^{-1}_{22}+2)\tilde B^{-1}_{22}
+4\tilde A^{-1}_{12}\tilde B^{-1}_{21}\rho_1\tau_2+4\tilde A^{-1}_{21}\tilde B^{-1}_{12}\rho_2\tau_1)\\
\times&\Big(1+n_1n_2\Big(\frac{(4\tilde A^{-1}_{11}-2)(4\tilde A^{-1}_{22}+2)+16\tilde A^{-1}_{12}\tilde A^{-1}_{21}}{\mathrm{det}\tilde B}+
8\tilde A^{-1}_{12}\tilde A^{-1}_{12}\Tr\tilde B^{-1}L\Big)\Big).
\end{align*}
Hence,
\begin{align*}
\mathrm{det}^{-1}& (1-2L\hat n \tilde B^{-1}+4\hat \rho(\tilde A+2L\hat n)^{-1}\hat\tau\tilde B^{-1})=\Big(1-k_bn_1n_2\Big)
\\
&\times\Big(1-n_1(4\tilde A^{-1}_{11}-2)\tilde B^{-1}_{11}-n_2(4\tilde A^{-1}_{22}+2)\tilde B^{-1}_{22}
-4\tilde A^{-1}_{12}\tilde B^{-1}_{21}\rho_1\tau_2-4\tilde A^{-1}_{21}\tilde B^{-1}_{12}\rho_2\tau_1\Big),
\end{align*}
where
\begin{align*}
k_b=&\frac{(4\tilde A^{-1}_{11}-2)(4\tilde A^{-1}_{22}+2)+16\tilde A^{-1}_{12}\tilde A^{-1}_{21}}{\mathrm{det}\tilde B}+
8\tilde A^{-1}_{12}\tilde A^{-1}_{12}\Tr\tilde B^{-1}L\\&-2(4\tilde A^{-1}_{11}-2)(4\tilde A^{-1}_{22}+2)\tilde B^{-1}_{11}\tilde B^{-1}_{22}
+32\tilde A^{-1}_{12}\tilde A^{-1}_{21}\tilde B^{-1}_{12}\tilde B^{-1}_{21}\\
=&-(16\mathrm{det }^{-1}\tilde A+8\Tr \tilde A^{-1}L-4)(\tilde B^{-1}_{11}\tilde B^{-1}_{22}+\tilde B^{-1}_{12}\tilde B^{-1}_{21})+
8\tilde A^{-1}_{12}\tilde A^{-1}_{21}\Tr\tilde B^{-1}L.
\end{align*}
Similarly,
\begin{align*}
\det (1+2L\hat n  \tilde A^{-1})=(1+2n_1\tilde A^{-1}_{11}-2n_2\tilde A^{-1}_{22})(1-4n_1n_2(\det \tilde A)^{-1}).
\end{align*}
Then finally $D_1$ of (\ref{sdet1}) can be rewritten as
\begin{align}\label{sdet3}
D_1=&(\det \tilde B)^{-1}(\det \tilde A-n_1(4\tilde A_{22}\tilde B^{-1}_{11}-2\tilde A_{22}-2\tilde B^{-1}_{11}\det \tilde A)\\&-
n_2(4\tilde B^{-1}_{22}\tilde A_{11}+2\tilde A_{11}+
2\tilde B^{-1}_{22}\det \tilde A)\notag
+4\tilde A_{12}\tilde B^{-1}_{21}\rho_1\tau_2+4\tilde A_{21}\tilde B^{-1}_{12}\rho_2\tau_1-k n_1n_2),\\
k=&-(16-8\Tr \tilde AL-4\det \tilde A)(\tilde B^{-1}_{11}\tilde B^{-1}_{22}+\tilde B^{-1}_{12}\tilde B^{-1}_{21})-8\,\Tr \tilde B^{-1}L
+4\tilde B^{-1}_{11}\tilde A_{11}+4\tilde B^{-1}_{22}\tilde A_{22}+4,
\notag\end{align}
where we  used 
\begin{align}\label{invA}
\tilde A^{-1}= \left(\begin{matrix}
\frac{\tilde A_{22}}{\det \tilde A}&-\frac{\tilde A_{12}}{\det \tilde A}\\
-\frac{\tilde A_{21}}{\det \tilde A}&\frac{\tilde A_{11}}{\det \tilde A}
\end{matrix}\right).
\end{align}
Using (\ref{invA}) for $\tilde B$, and taking into account that (see (\ref{sdet2a}))
\begin{align*}
\tilde B_{jj}=&\frac{iE}{c_0}-(-1)^j+\frac{2i\gamma}{c_0}(\tilde S\hat \sigma \tilde S^{-1})_{jj}=
\frac{iE}{c_0}-(-1)^j+(-1)^j\frac{2i\gamma}{c_0}(\tilde S_{12}\tilde S_{22}+\tilde S_{11}\tilde S_{21})\\
=&\frac{iE}{c_0}-(-1)^j+(-1)^j\frac{2i\gamma}{c_0}(\sinh t\cdot \cos\alpha_2\cos \theta+i\cosh t\cdot\sin\alpha_2),\quad j=1,2,  \\
&\tilde B^{-1}_{11}\tilde B^{-1}_{22}+\tilde B^{-1}_{12}\tilde B^{-1}_{21}=-(\det\tilde B)^{-1}+2(\det\tilde B)^{-2}\tilde B_{11}\tilde B_{22},
\end{align*}
we obtain (\ref{D_1}).

$\square$

For the next step we will use the following notations:
\begin{align}\label{F}
&F(U,S)=\exp\Big\{-\dfrac{c_0}{n}\Big(\kappa_1\Big(\frac{1}{2}-|U_{12}|^2\Big)
+\kappa_2\Big(\frac{1}{2}+ |S_{12}|^2\Big)\Big)\Big\},\\ \notag
&F_1(U,S)=-\dfrac{c_0}{n}\Big(\kappa_1\Big(\frac{1}{2}-|U_{12}|^2\Big)
-\kappa_2\Big(\frac{1}{2}+ |S_{12}|^2\Big)\Big)
\end{align}
with $\kappa_{1,2}$ defined in (\ref{M}).
\begin{proposition}\label{p:repr}  We have
\begin{align}\label{repr1}
&Z_{\beta n}(\kappa,z_1,z_2)=-\frac{e^{E(x_1-x_2)}}{2\pi i}\oint_{\omega_A}z^{n-1}(\widehat G(z)\widehat f,\widehat g)dz,\quad \omega_A=\{z:|z|=1+A/n\},\\ \label{M_hat}
&  \widehat G(z)=(\widehat M-z)^{-1},\quad\widehat M=\widehat F\widehat K\widehat F,\quad \widehat K=\widehat K_0+O(\beta^{-1}),
\end{align}
where operators $\widehat K_0$, $\widehat F$  and the vectors $\widehat f$, $\widehat g$ have  the form
\begin{align}\label{repr2}
&\quad \widehat K_0=\left(\begin{array}{cccc}K_{US}&\widetilde K_1&\widetilde K_2&\widetilde K_3\\
0&K_{US}&0&\widetilde K_2\\0&0&K_{US}&\widetilde K_1\\0&0&0&K_{US}\end{array}\right),\quad 
\widehat F=F\left(\begin{array}{cccc}1&F_1 &F_1&F_1^2\\
0&1&0&F_1\\0&0&1&F_1\\0&0&0&1\end{array}\right)\\ 
&\hat f=\widehat F (e_4-  e_1),\quad\hat g=\widehat F(c_1e_0+c_2e_2+c_3e_3+(c_4-c_1) e_4)+O(\tilde\beta^{-1})
\label{repr3}\end{align}
with  $F$  and $ F_{1}$  being the  operator of multiplication by the functions $F$  and $F_{1}$ 
defined in (\ref{F}) on $L_2(U)\otimes L_2(S)$ , $K_{US}=K_U\otimes K_S$ and
$K_U$ and $K_S$  being the integral operators in $L_2(U)$ and $L_2(S)$  with a ``difference" kernels
\begin{align*}
K_U(U,U')=K_U(U(U')^*)=\tilde{\beta}  e^{-\tilde{\beta} |(U(U')^*)_{12}|^2},\\
K_S(S,S')=K_S(S(S')^{-1})=\tilde{\beta}  e^{-\tilde{\beta} |(S(S')^{-1})_{12}|^2}.
\notag\end{align*}
and $c_\delta$ having the form (\ref{D_1}).
Here $\widetilde K_p$, $p=1, 2, 3$  are normal operators on $L_2(U)\otimes L_2(S)$, they commute with $K_{US}$ and with the Laplace operators $\widetilde\Delta_U,\widetilde\Delta_S$ 
on the corresponding groups
and satisfy the bounds
\begin{align}\label{bound_re}
&|\widetilde K_p|\le C(1-K_{US})\le -C(\widetilde\Delta_U+\widetilde\Delta_S)/\beta,
\end{align}
where   the Laplace operators $\widetilde\Delta_U,\widetilde\Delta_S$ for the functions depending only on $|S_{12}|^2$ and
$|U_{12}|^2$ have the form
\[\widetilde\Delta_S(\varphi)=-\frac{d}{dx} x(x+1)\frac{d\varphi}{dx} \quad (x=|S_{12}|^2),\qquad 
\widetilde\Delta_U(\varphi)=-\frac{d}{dx} x(1-x)\frac{d\varphi}{dx} \quad (x=|U_{12}|^2).\]
\end{proposition}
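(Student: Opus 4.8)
The plan is to read the right-hand side of (\ref{trans1}) as a matrix element of the $(n-1)$-st power of a transfer operator. Since $\mathcal{M}(Q,Q')=\mathcal{F}(Q)\exp\{-\frac{\tilde\beta}{4}\Str QQ'\}\mathcal{F}(Q')$ acts by convolution in the $Q$-variables, the iterated kernel $\mathcal{M}^{n-1}$ is the kernel of $\widehat{\mathcal M}^{\,n-1}$ with $\widehat{\mathcal M}=\widehat F\,\widehat K\,\widehat F$, where $\widehat F$ is multiplication by $\mathcal F$ and $\widehat K$ is the integral operator with kernel $\exp\{-\frac{\tilde\beta}{4}\Str QQ'\}$. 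The first step is to integrate out the Grassmann variables. Because each site carries only the four Grassmann variables $\rho_{j,1},\tau_{j,1},\rho_{j,2},\tau_{j,2}$, every exponential expanded in them terminates, and the Grassmann content that can be transmitted across an interface is spanned by the monomials $1,\ n_{1},\ n_{2},\ n_{1}n_{2}$. I would therefore organize the Grassmann integration so that $\widehat{\mathcal M}$ becomes an operator on $L_2(U)\otimes L_2(S)\otimes\mathbb C^{4}$, the four-dimensional factor being indexed by this basis.

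Next, I would extract the bosonic (Grassmann-free) part of $-\frac{\tilde\beta}{4}\Str QQ'$. Using the explicit form of $Q_j$, this part depends only on $U(U')^*$ and $S(S')^{-1}$ and equals, up to constants, $\tilde\beta|(U(U')^*)_{12}|^2+\tilde\beta|(S(S')^{-1})_{12}|^2$; this produces the leading difference kernels $K_U\otimes K_S=K_{US}$ of (\ref{repr2}). The Grassmann-bilinear corrections to $\Str QQ'$ raise the degree in the monomial basis, and carrying them through the integration yields the strictly upper-triangular blocks $\widetilde K_1,\widetilde K_2,\widetilde K_3$; the triangularity reflects that these terms can only create, not annihilate, a Grassmann pair. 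Likewise, expanding $\mathcal F(Q)=\exp\{\frac{c_0}{4n}\Str Q\Lambda_1\}$ in the Grassmann variables gives the scalar multiplication operator $F$ of (\ref{F}) on the diagonal and the degree-raising factor $F_1$ off the diagonal, producing the matrix $\widehat F$ of (\ref{repr2}). The boundary vectors arise from the two end factors: the plain $\mathcal F(Q')$ at one end, after Grassmann expansion, gives $\widehat f=\widehat F(e_4-e_1)$, while the factor $D_1(Q)\mathcal F(Q)$ at the other end contributes $\widehat g$ through the coefficients $c_\nu,d_\nu$ of (\ref{D_1}), the remaining $O(\tilde\beta^{-1})$ terms collecting subleading Grassmann contributions. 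This identifies $\widehat M=\widehat F\widehat K\widehat F$, $\widehat f$, $\widehat g$ as in (\ref{repr3}).

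The main obstacle is the operator bound (\ref{bound_re}). Since $K_U$, $K_S$, and the blocks $\widetilde K_p$ are all difference kernels on the respective groups, they are simultaneously diagonalized by the group Fourier transform (equivalently, by the radial eigenfunctions of $\widetilde\Delta_U,\widetilde\Delta_S$); this immediately gives normality, mutual commutativity, and commutativity with $K_{US}$. To get the quantitative bound I would expand the Gaussian-type kernels at short distance: on the scale $|(U(U')^*)_{12}|^2,\,|(S(S')^{-1})_{12}|^2\sim\tilde\beta^{-1}$ the operator $K_{US}$ differs from the identity by a term comparable to $-(\widetilde\Delta_U+\widetilde\Delta_S)/\tilde\beta$, and the degree-raising kernels $\widetilde K_p$, which carry the same Gaussian weight multiplied by a bounded factor vanishing at coincidence, are dominated by the same quantity. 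This yields $|\widetilde K_p|\le C(1-K_{US})\le -C(\widetilde\Delta_U+\widetilde\Delta_S)/\beta$. Establishing these comparisons uniformly, in particular on the noncompact factor $\mathring U(1,1)$ where one must control the kernel away from coincidence, is the delicate point.

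Finally, with $\widehat M$ bounded and its spectrum contained in a disk of radius $1+O(1/n)$ (the leading eigenvalue being $1$ from the constant mode, with $\|\widehat F\|=1+O(1/n)$), I would choose $A$ large enough that $\omega_A=\{|z|=1+A/n\}$ encloses the entire spectrum and invoke the Riesz functional calculus, $\widehat M^{\,n-1}=-\frac{1}{2\pi i}\oint_{\omega_A}z^{n-1}(\widehat M-z)^{-1}dz$. Inserting this into $Z_{\beta n}=e^{E(x_1-x_2)}(\widehat M^{\,n-1}\widehat f,\widehat g)$ gives (\ref{repr1}) with $\widehat G(z)=(\widehat M-z)^{-1}$, completing the proof.
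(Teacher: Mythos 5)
Your overall architecture---rewriting (\ref{trans1}) as $(\widehat M^{\,n-1}\widehat f,\widehat g)$ for a transfer operator $\widehat M=\widehat F\widehat K\widehat F$ on $L_2(U)\otimes L_2(S)$ tensored with a finite-dimensional Grassmann sector, and then passing to (\ref{repr1}) by the Riesz functional calculus---is the same as the paper's, which in turn imports most of the structural claims (the form of $\widehat K_0$, $\widehat F$, and the bounds (\ref{bound_re})) from Proposition 5.1 of \cite{SS:sigma}. The genuine gap is your assertion that ``the Grassmann content that can be transmitted across an interface is spanned by the monomials $1,\,n_1,\,n_2,\,n_1n_2$.'' It is not: the sector that actually propagates is six-dimensional, spanned by $e_1=1,\,e_2=n_1,\,e_3=n_2,\,e_4=n_1n_2,\,e_5=\rho_1\tau_2,\,e_6=\rho_2\tau_1$, and the transfer operator couples $\mathcal{L}_1=\hbox{span}\{e_1,\dots,e_4\}$ to $\hbox{span}\{e_5,e_6\}$ through the off-diagonal blocks $H^{(12)},H^{(21)}$ of (\ref{H^ij}) built from the kernels $x,y$ of (\ref{hxy}). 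Moreover, $D_1$ of (\ref{D_1}) explicitly contains $d_1\rho_1\tau_2+d_2\rho_2\tau_1$, so the boundary observable genuinely has components along $e_5,e_6$. On your four-dimensional space $\widehat K$ would equal $\widehat K_0$ exactly and $\widehat g$ would have no remainder, leaving the error terms $O(\beta^{-1})$ in (\ref{M_hat}) and $O(\tilde\beta^{-1})$ in (\ref{repr3}) unaccounted for---yet producing and controlling precisely these remainders is the entire content of the paper's proof. That control goes through the Schur complement for the resolvent of the $6\times6$ block operator, the bounds $\|G_2\|\le Cn$, $\|G_{2d}\|\le Cn$, $\|\tilde h\|\le C\beta^{-2}$, $\|H^{(21)}\|\le C\beta^{-2}$ (quoted from Lemma 6.2 of \cite{SS:sigma}), the rescaling by the matrix $T$ which converts $O(\beta^{-2})$ entries into the stated $O(\beta^{-1})$, and the observation that after integrating over $U$ only the $s=1$ term of the Neumann expansion (\ref{exp_G_2a}) of $G_{2d}$ contributes. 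None of this appears in your proposal, so the reduction to a $4\times4$ block structure is unjustified as written.

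The remaining steps are essentially sound in outline: the identification of the bosonic part of $-\tfrac{\tilde\beta}{4}\Str QQ'$ with the difference kernels $K_U\otimes K_S$, the triangularity of the Grassmann-degree-raising corrections, and the final contour-integral step all match the standard scheme. Your sketch of (\ref{bound_re}) via simultaneous diagonalization of difference kernels by the radial eigenfunctions of $\widetilde\Delta_U,\widetilde\Delta_S$ is the right mechanism, and you correctly flag the noncompact $\mathring{U}(1,1)$ factor as the delicate point; the paper sidesteps this entirely by citing \cite{SS:sigma}, so this part is a matter of completeness rather than a flaw, but it would need to be carried out (or cited) rather than asserted.
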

The proposition is basically identical to the Proposition 5.1 of \cite{SS:sigma}. The only change is the different form of $\widehat g$ coming from the
presence of the factor $D_1$ in (\ref{trans1}). The from  of $\widehat g$ in (\ref{repr2})
follows from (\ref{sigma-mod}) and Proposition \ref{p:turn}. Indeed, consider the 
 basis  $e_1=1,\,e_2=n_1,\,e_3=n_2,\, e_4=n_1n_2,\,e_5=\rho_1\tau_2,\,e_6=\rho_2\tau_1$, and let $\mathcal{L}_1=\hbox{span}\{e_1,e_2,e_3,e_4\}$.
 Write the transfer operator matrix $H$ as a block matrix with the first block
 corresponding to  $\mathcal{L}_1$ (see the proof of Proposition 5.1 in \cite{SS:sigma}):
\begin{align}\label{H^ij}
&H= \left(\begin{array}{cc}H^{(11)}&H^{(12)}\\
H^{(21)}&H^{(22)}\end{array}\right),\quad
H^{(22)}=\left(\begin{array}{cc}h_{11}&h_{12}\\
h_{21}&h_{22}\end{array}\right),\\
&H^{(21)}=\left(\begin{array}{cccc}2x_d& x&x&0\\
-2\overline x_d&-\overline x&-\overline x&0\end{array}\right),\quad H^{(12)}=\left(\begin{array}{cc}0&0\\  y& -\overline{ y}\\
 y& -\overline{ y}\\
2 y_d&-2\overline{ y_d}
\end{array}\right).
\notag\end{align}
Here 
 $h_{ij}, x,y,x_d,y_d$ are  ``difference" operators whose kernels are defined with the functions
\begin{align}\label{hxy}
h_{ij}=&h_{ijU}h_{ijS},\quad h_{ijU}=U_{ij}^2K_U,\quad h_{ijS}=\bar S_{ij}^2K_S\\
x=&x_Ux_S,\quad x_U=U_{11}U_{12}K_U,\quad x_S=\bar S_{11}\bar  S_{12}K_{S},\quad x_d=x\cdot d,\notag\\
 y=&y_Uy_S,\quad y_U=U_{11}\bar  U_{12}K_U\quad y_S=\bar S_{11}S_{12}K_{S},\quad
y_d=y\cdot d,
\notag\end{align}
and $\bar x,\bar y,\bar x_d,\bar y_d$ mean the complex conjugate kernels. We recall that we are saying that the operator in
$L_2(\mathring U_2)$
is a ``difference" one with a kernel $f$, if its kernel $k(U_1,U_2)$ has the form $k(U_1,U_2)=f(U_1U_2^*)$. The  operator on $L_2(\mathring U(1,1))$ 
is a``difference" one with a kernel $f$, if  $k(S_1,S_2)=f(S_1S_2^{-1})$. 
Let us recall also that $K$ (and consequently its resolvent) was obtained from $H$ by  the transformation
\[
K=\hat T H\hat T, \quad \hat T=\mathrm{diag}\{T, I\},\quad  T=\left(\begin{array}{cccc}
0&0&0&\tilde{\beta}\\0&0&1&0\\0&1&0&0\\\tilde{\beta}^{-1}&0&0&0
\end{array}\right)
\]
Hence the entries of the off-diagonal blocks of the resolvent of $K$ are obtained from those of the off-diagonal blocks of the resolvent of $H$ by multiplication by $\beta$, $1$, or $\beta^{-1}$. Thus, to obtain the bound 
$O(\beta^{-1})$ in (\ref{repr3}), it is sufficient to get the bound $O(\beta^{-2})$ for the corresponding entries of
 the resolvent of $H$.

According to the Schur formula, the resolvent $G(z)=(H-z)^{-1}$  has the form
\begin{align*}
G(z)&:=
 \left(\begin{array}{cc}G^{(11)}&-G^{(11)}H^{(12)}G_2\\-G_2H^{(21)}G^{(11)}&G_2
 +G_2H^{(21)}G^{(11)}H^{(12)}G_2\end{array}\right),\\
G^{(11)}&=M_1^{-1},\quad M_1=H^{(11)}-z-H^{(12)}G_2H^{(21)},\quad G_2(z)=(H^{(22)}-z)^{-1}.
\notag\end{align*}
Since $\hat f_{5,6}=0$, we can write
\begin{align*}
(G(z)\hat f,\hat g)=(G^{(11)}f^{(1)},g^{(1)}-(H^{(21)})^*G_2^*g^{(2)}),
\end{align*}
where $f^{(1)},g^{(1)}$ are the projection  of $\hat f, \hat g$ on $\mathcal{L}_1$ and  $g^{(2)}$ is a projection of $\hat g$ on $\hbox{span}\{e_5,e_6\}$.
Let us consider $H^{(22)}=\hat h+\tilde h$, where $\hat h$  is a diagonal part and $\tilde h$-off diagonal part of $ H^{(22)}$,
and let $G_{2d}=(\hat h-z)^{-1}$.
By the resolvent identity we can write
\begin{align}\label{exp_G_2}
G_2=G_{2d}-G_{2d}\tilde hG_{2}
\end{align}
Moreover, it  was  proven in \cite{SS:sigma} (see the proof of Lemma 6.2) that
\begin{align*}
&\|G_2(z)\|\le Cn,\quad \|G_{2d}(z)\|\le Cn,\quad \|\tilde h\|\le C\beta^{-2},\quad\|H^{(21)}\|\le C\beta^{-2}\\
&\Rightarrow \|(H^{(21)})^*(G_{2d} \tilde h G_{2})^*\|\le C\beta^{-2}
\end{align*}
For the first  terms of the r.h.s. of (\ref{exp_G_2}) we use the expansion
\begin{align}\label{exp_G_2a}
G_{2d}(z)=-z^{-1}\sum_{s=0}^\infty z^{-s}(\hat h)^s
\end{align}
It is easy to see that, due to the form $g^{(2)}$ (see (\ref{sdet3})), $\hat h$ and $H^{(21)}$ (see (\ref{H^ij}) and (\ref{hxy})), after the integration
with respect to $U$ only the term corresponding to
 $s=1$  in the above expansion will give non zero contribution. Hence, using that  
 \[
 \|(H^{(21)})^*\hat h^*g^{(2)}\|\le C\beta^{-2}, 
 \]
after the multiplication by $\beta$ we get (\ref{repr3}).

$\square$

Now let us  derive (\ref{t2.1}) from Proposition \ref{p:repr}.
To this end, set  
\begin{align*}
\widehat M_0=\widehat F^2,\quad \widehat G_0=(\widehat M_0-z)^{-1},
\end{align*}
and consider
\begin{align*}
\Delta G:=\widehat G-\widehat G_0=-\widehat G_0(\widehat M-\widehat M_0)\widehat G_0-\widehat G_0(\widehat M-\widehat M_0)\widehat G(\widehat M-\widehat M_0)\widehat G_0.
\end{align*}
We  apply the following lemma:
\begin{lemma}\label{l:bG}
For any $z\in\omega_A$ (see (\ref{repr1})) we have the bounds
\begin{align}\label{b_1}
& \|\widehat G\|\le C\log^2n/|z-1|\quad \|(\widehat M-\widehat M_0)\widehat G_0\widehat f\|^2\le C(n/\tilde{\beta})^2,\quad
\\
&\|(\widehat M-\widehat M_0)\widehat G_0\widehat g\|^2\le C(n/\tilde{\beta})^2,\quad
 |(\widehat G_0(\widehat M-\widehat M_0)\widehat G_0\widehat f,\widehat g)|\le \dfrac{n\log n}{\tilde{\beta} |z-1|}
\label{b_Gg}\end{align} 
\end{lemma}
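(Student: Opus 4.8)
The plan is to exploit the single algebraic identity $\widehat M-\widehat M_0=\widehat F(\widehat K-\mathbf 1)\widehat F$ together with the fact, recorded in (\ref{M_hat})--(\ref{bound_re}), that $\widehat K-\mathbf 1$ is governed by the Laplacians on the two symmetric spaces. Writing $\widehat K-\mathbf 1=(\widehat K_0-\mathbf 1)+O(\beta^{-1})$ and reading off (\ref{repr2}), the diagonal of $\widehat K_0-\mathbf 1$ is $K_{US}-\mathbf 1$ while every off-diagonal block is one of the $\widetilde K_p$; by (\ref{bound_re}) all of them are dominated in the form sense by $1-K_{US}\le -C(\widetilde\Delta_U+\widetilde\Delta_S)/\beta$. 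Since $\widetilde K_p$, $K_{US}$ and $\widetilde\Delta_U+\widetilde\Delta_S$ commute, this upgrades to an operator estimate
\[
\|(\widehat K-\mathbf 1)v\|\le \frac{C}{\tilde\beta}\big\|(\widetilde\Delta_U+\widetilde\Delta_S)v\big\|+\frac{C}{\tilde\beta}\|v\|
\]
for smooth $v$, which is the origin of every factor $\tilde\beta^{-1}$ in the lemma. In parallel I would record that $\widehat M_0=\widehat F^2$ is a (block-triangular) multiplication operator (with $F$ of (\ref{F})) whose range is an interval lying within $O(1/n)$ of the point $1$; hence for $z\in\omega_A$ with $A$ large, $\widehat G_0$ is again multiplication and $\|\widehat G_0\|\le C/\mathrm{dist}(z,\mathrm{ran}\,\widehat F^2)\le C/|z-1|$.

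For the two $(n/\tilde\beta)^2$ estimates in (\ref{b_1})--(\ref{b_Gg}) I would apply the displayed Laplacian bound to $w=\widehat F\widehat G_0\widehat f$ (respectively $w=\widehat F\widehat G_0\widehat g$), using $(\widehat M-\widehat M_0)\widehat G_0\widehat f=\widehat F(\widehat K-\mathbf 1)w$ and $\|\widehat F\|\le C$. Because $\widehat f=\widehat F(e_4-e_1)$ and $\widehat g$ are built from the constant modes $e_1,\dots,e_4$, the profile $(\widehat F^2-z)^{-1}$ has size $O(n)$ but bounded relative smoothness, so that $\|w\|\le Cn$ and $\|(\widetilde\Delta_U+\widetilde\Delta_S)w\|\le Cn$. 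Combining this with the $\tilde\beta^{-1}$ from $(\widehat K-\mathbf 1)$ gives $\|(\widehat M-\widehat M_0)\widehat G_0\widehat f\|\le Cn/\tilde\beta$, and identically for $\widehat g$; the precise bookkeeping of these smoothness norms is exactly the computation carried out in \cite{SS:sigma}.

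The last bound in (\ref{b_Gg}) then follows by moving one resolvent onto the second slot, $\big(\widehat G_0(\widehat M-\widehat M_0)\widehat G_0\widehat f,\widehat g\big)=\big((\widehat M-\widehat M_0)\widehat G_0\widehat f,\widehat G_0^{*}\widehat g\big)$, and applying Cauchy--Schwarz with the estimate just obtained and $\|\widehat G_0^{*}\widehat g\|\le C\log n/|z-1|$; here the extra $\log n$ over the naive $C/|z-1|$ is produced by the logarithmically divergent pairing integral $\int(\overline{\widehat g}\,\widehat g)(\widehat F^2-z)^{-1}$, the divergence coming from the linear vanishing of $1-\widehat F^2$ at the edge of its range in the (non-compact) $S$-direction.

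The genuine obstacle is the first bound, $\|\widehat G\|\le C\log^2 n/|z-1|$, because $\widehat M$ is \emph{not} a norm-small perturbation of $\widehat M_0$: $\widehat M$ contains the smoothing operator $K_{US}$ whereas $\widehat M_0$ is pure multiplication, so $\widehat M-\widehat M_0$ has $O(1)$ norm on high modes and no Neumann expansion off $\widehat G_0$ is available. I would instead split the space $(L_2(U)\otimes L_2(S))^{\oplus4}$ into the near-kernel of $1-K_{US}$ and its orthogonal complement. On the complement the spectral gap $1-K_{US}\ge c/\tilde\beta$ together with $|z-1|\ge A/n\gg 1/\tilde\beta$ keeps the spectrum of $\widehat M$ at distance $\gtrsim|z-1|$ from $z$, giving a resolvent of norm $\le C/|z-1|$. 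On the near-kernel I would perform the Feshbach/Schur reduction already set up in Proposition \ref{p:repr} (the $\mathcal L_1$ versus $\mathrm{span}\{e_5,e_6\}$ split with $G^{(11)}=M_1^{-1}$), using the inputs $\|G_2\|\le Cn$, $\|\tilde h\|,\|H^{(21)}\|\le C\beta^{-2}$ proved in \cite{SS:sigma}; the resulting effective operator is governed by $\widehat F^2$-multiplication, and its resolvent, evaluated through the two logarithmically singular self-energy integrals (one for the compact $U$-sector, one for the non-compact $S$-sector), produces the $\log^2 n$ factor. Assembling the two sectors via the coupling bounds of the previous paragraphs then yields $\|\widehat G\|\le C\log^2 n/|z-1|$.
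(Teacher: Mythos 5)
Your overall architecture matches the paper's first step: both you and the authors reduce, via (\ref{bound_re}), to estimates involving $\tilde\beta^{-1}(\widetilde\Delta_U+\widetilde\Delta_S)$ and then exploit that $\widehat G_0=(\widehat F^2-z)^{-1}$ is an explicit block-triangular multiplication operator. Note, however, that the paper does not reprove the two bounds in (\ref{b_1}) at all --- they are quoted from Lemma 5.1 of \cite{SS:sigma} --- so your Feshbach/spectral-gap sketch for $\|\widehat G\|\le C\log^2 n/|z-1|$ attacks something the paper outsources, and as written it is only a plausibility argument: the ``two logarithmically singular self-energy integrals'' that are supposed to produce the $\log^2 n$ are never exhibited or estimated.

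The genuine gap is in the bounds involving $\widehat g$, which are the only new content of the lemma relative to \cite{SS:sigma}. You assert that $\widehat g$ is built from the constant modes $e_1,\dots,e_4$ with ``bounded relative smoothness'' and defer ``the precise bookkeeping of these smoothness norms'' to \cite{SS:sigma}. That cannot work: $\widehat g$ contains the coefficients $c_\nu$ of $D_1$ in (\ref{D_1}), which are new to this paper (they encode the deformation $\gamma$) and which depend on the hyperbolic variable through $(\tau-i\sinh t\cdot\cos\alpha_2\cos\theta+\cosh t\cdot\sin\alpha_2)^{-k}$, $k=1,2,3$. Since $S$ ranges over the non-compact space $\mathring{U}(1,1)$, the estimates $\|(\widetilde\Delta_U+\widetilde\Delta_S)\widehat G_0\widehat g\|^2\le Cn^2$ and the $\log n$ in the pairing are not automatic; they hinge on first integrating out $\theta$ and then verifying the quantitative decay (\ref{cond_g}) of the resulting functions $\tilde g_\nu$, namely $|\tilde g_\nu|,\ |(x+1)\tilde g_\nu'|,\ |x(x+1)\tilde g_\nu''|\le C(x^2+1)^{-1/2}$ in $x=\sinh^2(t/2)$. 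This is exactly the computation the paper carries out (see (\ref{sdet2})--(\ref{sdet3}) and (\ref{f_alpha})--(\ref{cond_g})) and exactly the step you skip; without it neither the norm bound $\|\widehat G_0^{*}\widehat g\|\le C\log n/|z-1|$ that you assert for your Cauchy--Schwarz step nor the $C(n/\tilde\beta)^2$ bound on $\|(\widehat M-\widehat M_0)\widehat G_0\widehat g\|^2$ is justified. Relatedly, your proposed mechanism for the $\log n$ (a divergence from the vanishing of $1-\widehat F^2$ at the edge of its range) is not the paper's: there the logarithm arises from $\int_0^\infty e^{-c\tilde x}(\tilde x^2+n^{-2})^{-1/2}\,d\tilde x\le C\log n$, i.e., from the $(x^2+1)^{-1/2}$ decay of $\tilde g_\nu$ played against the profile $|G_0\Delta_S f_\nu|\le Cne^{-cx/n}/|z-1|$.
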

Inequalities (\ref{b_1}) were proven in  \cite{SS:sigma} (see Lemma 5.1).  Hence we need to prove only inequalities (\ref{b_Gg}). We postpone the proof to the end of the
section, and continue with the proof of  (\ref{t2.1}) using Lemma \ref{l:bG}. 

\noindent Let us write 
\begin{align*}
&\Big|\frac{1}{2\pi i}\oint_{\omega_A}z^{n-1}(\Delta G\widehat f,\widehat g)dz\Big|\le C\oint_{\omega_A} |(\widehat G_0(\widehat M-\widehat M_0)\widehat G_0\widehat f,\widehat g)|\,|dz|\\
&+ C\oint_{\omega_A}\|\widehat G(z) \|\cdot \|(\widehat M-\widehat M_0)\widehat G_0(z)\widehat f\|\cdot
\|(\widehat M-\widehat M_0)\widehat G_0(\bar z)\widehat g\|\,|dz|\\
&\le C(n\log n/\tilde{\beta})\oint_{\omega_A}\frac{|dz|}{|z-1|}\le Cn\log^2 n/\tilde{\beta}\to 0,
\end{align*}
where we used $n\log^2 n\ll \tilde\beta$ and
\[
\oint_{\omega_A}\frac{|dz|}{|z-1|}\le C\log n.
\]
Thus we have proved that (recall (\ref{repr2}))
\begin{align*}
Z_{\beta n}(\kappa,z_1,z_2)&=-\frac{e^{E(x_1-x_2)}}{2\pi i}\oint_{\omega_A}z^{n-1}
(\widehat G_0(z)\widehat f,\widehat g)dz+o(1)=e^{E(x_1-x_2)}(\widehat F^{2n-2}\widehat f,\widehat g)+o(1).
\end{align*}

 \medskip
 
 \textit{Proof of inequalities (\ref{b_Gg})}

Using  inequalities (\ref{bound_re}) it is easy to conclude the it is sufficient to prove that

\begin{align}
&\|(\Delta_S+\Delta_U)\widehat G_0\widehat g\|^2\le Cn^2,\quad
 |(\widehat G_0(\Delta_S+\Delta_U)\widehat G_0\widehat f,\widehat g)|\le n\log n/|z-1|.
\label{b_Gg.1}
\end{align}
Notice that
\begin{align}\notag
   \widehat G_0=&(\widehat F^2-z)^{-1} =\left(\begin{array}{cccc}G_0&G_0F_1FG_0&G_0F_1FG_0&G_0F_1FG_0F_1FG_0+G_0F_1^2FG_0\\
0&G_0&0&G_0F_1FG_0\\0&0&G_0&G_0F_1FG_0\\0&0&0&G_0\end{array}\right)\\ G_0=&(F^2-z)^{-1},\quad F(x)=e^{-(2c_2x-2c_1u+c_1-c_2)/2n},
\quad (x=|S_{12}|^2),\quad (u=|U_{12}|^2),
\label{G_0,hatG_0}\end{align}
and observe that coefficients of $ \widehat G_0$ do not depend on $\theta$ of (\ref{U,S}). Hence we can integrate over $\theta$ in expression
for $\hat g$ of (\ref{repr2}) and (\ref{D_1}). Using that
\begin{align}
\frac{1}{2\pi}&\int\limits_0^{2\pi} \frac{d\theta}{(\tau-i\cos\alpha_2\sinh t\cos\theta+\sin\alpha_2\cosh t)^\delta}\\
&\qquad=C_\delta\frac{\partial^{\delta-1}}{\partial\tau^{\delta-1}}\Big((\tau+\sin\alpha_2\cosh t)^2
+\sinh^2 t\cos^2\alpha_2\Big)^{-1/2},
\quad\delta=1,2,3\notag
 \end{align}
one can conclude that (\ref{b_Gg.1}) will follow from the bounds
\begin{align}
&\|(\Delta_S+\Delta_U)\tilde f_\nu\tilde g_\nu\|^2\le Cn^2,\quad (\nu=1,\dots,4)
\label{b_Gg.2}
\end{align}
where for $x=\sinh^2 (t/2)$ 
\begin{align}\label{f_alpha}
\tilde  f_\nu(x)=&a_\nu G_0F+b_\nu F_1(FG_0)^2+d_\nu F_1^2(FG_0)^3\\
  \tilde g_\nu(x)=&a_\nu ^{(1)}\Big((\tau+\sin\alpha_2\cosh t)^2
+\sinh^2 t\cos^2\alpha_2\Big)^{-1/2}\notag\\
&+b_\nu^{(1)}\frac{\partial}{\partial\tau}\Big((\tau+\sin\alpha_2\cosh t)^2
+\sinh^2 t\cos^2\alpha_2\Big)^{-1/2}\notag\\
  &+d_\nu^{(1)} \frac{\partial^2}{\partial\tau^2}\Big((\tau+\sin\alpha_2\cosh t)^2
+\sinh^2 t\cos^2\alpha_2\Big)^{-1/2},
\notag\end{align}
where $a_\nu,b_\nu,d_\nu$ and $a^{(1)}_\nu,b^{(1)}_\nu,d^{(1)}_\nu$ are bounded functions depending only on $u$.

It is straightforward to check that
\begin{align}\label{cond_g}
&|\tilde g_\nu(x)|\le C(x^2+1)^{-1/2},\quad |(x+1)\tilde g_\nu^{\prime}(x)|\le C(x^2+1)^{-1/2},\\
&x(x+1)|\tilde g_\nu^{\prime\prime}(x)|\le C(x^2+1)^{-1/2},
\notag\end{align}
and
\[|\tilde f_\nu^{\prime\prime}(x)|\le  Cn,\quad |(x+1)f_\nu^{\prime}(x)|\le Cn,\quad|f_\nu(x)|\le Cn.
\]
Then, since
\[\Delta_S(\tilde f_\nu\tilde g_\nu)=x(x+1)(\tilde f_\nu\tilde g_\nu)^{\prime\prime}+(2x+1)(\tilde f_\nu\tilde g_\nu)^{\prime}
+\tilde f_\nu\tilde g_\nu,\]
we conclude that
\[|\Delta_S(\tilde f_\nu\tilde g_\nu)|\le Cn(x^2+1)^{-1/2}\quad\Rightarrow\quad 
\|\Delta_S(\tilde f_\nu\tilde g_\nu)\|^2\le Cn^2.\]
In addition, one can obtain by the same way that
\[\|\Delta_U(\tilde f_\nu\tilde g_\nu)\|^2\le Cn^2.\]
Thus, we obtain (\ref{b_Gg.2}).

To prove the second inequality in (\ref{b_Gg.1}), we observe that for any $f_\nu$ of the same type as in (\ref{f_alpha}) we have
\begin{align*}
|G_0\Delta_S f_\nu|\le \frac{Ce^{-cx/n}}{|z-1|^2}\le \frac{Cne^{-cx/n}}{|z-1|}.
\end{align*}
Hence
\begin{align*}
\Big |\int \Delta_S( f_\nu)g_\nu dx\Big|&\le \frac{Cn}{|z-1|}\int_0^\infty\frac{e^{-cx/n}dx}{(x^2+1)^{1/2}}\\
&=\frac{Cn}{|z-1|}\int_0^\infty\frac{e^{-c\tilde x}dx}{(\tilde x^2+n^{-2})^{1/2}}\le \frac{Cn\log n}{|z-1|}.
\end{align*}
Here we changed the variable $x\to n\tilde x$. Repeating the argument for $\Delta_U$ we obtain the second inequality in (\ref{b_Gg.2}). 

$\square$

\medskip

The  case $M>1$  is very similar, since in this case the transfer matrix $\mathcal{M}$ and $\mathcal{F}$ in (\ref{trans1}) remain
the same and only $D_1$ is replaced by the product of $D_\alpha$  of the same form but with different $\gamma_\alpha$ (see (\ref{sigma-mod})).
Hence, in (\ref{repr1}) the resolvent $\widehat G$ and the function $\widehat f$ are the same and only the function $\widehat g$ will be different.
But one can see from the argument given after (\ref{cond_g}) that for our proof  we need only bounds (\ref{cond_g}), and the fact that 
$\widehat g$ depends polynomially on entries of $U$ (recall that we used polynomial dependence on $U$ in (\ref{exp_G_2})-(\ref{exp_G_2a}) to prove that
only a finite number of terms in (\ref{exp_G_2a}) are non zero). But  for $M>1$ $D_1$ should
be replaced by the product of $D_\alpha$ and each of them has the form (\ref{D_1}) with $\tau$ replaced by $\tau_\alpha$ defined by (\ref{tau})
with $\gamma=\gamma_\alpha$. Hence  it is evident that that resulting $\widehat g$ will satisfy (\ref{cond_g}) and will depend on entries of $U$
polynomially.

\section{Appendix}

\subsection{Grassmann integration}
Let us consider two sets of formal variables
$\{\psi_j\}_{j=1}^n,\{\overline{\psi}_j\}_{j=1}^n$, which satisfy the anticommutation
conditions
\begin{equation}\label{anticom}
\psi_j\psi_k+\psi_k\psi_j=\overline{\psi}_j\psi_k+\psi_k\overline{\psi}_j=\overline{\psi}_j\overline{\psi}_k+
\overline{\psi}_k\overline{\psi}_j=0,\quad j,k=1,\ldots,n.
\end{equation}
Note that this definition implies $\psi_j^2=\overline{\psi}_j^2=0$.
These two sets of variables $\{\psi_j\}_{j=1}^n$ and $\{\overline{\psi}_j\}_{j=1}^n$ generate the Grassmann
algebra $\mathfrak{A}$. Taking into account that $\psi_j^2=0$, we have that all elements of $\mathfrak{A}$
are polynomials of $\{\psi_j\}_{j=1}^n$ and $\{\overline{\psi}_j\}_{j=1}^n$ of degree at most one
in each variable. We can also define functions of
the Grassmann variables. Let $\chi$ be an element of $\mathfrak{A}$, i.e.
\begin{equation}\label{chi}
\chi=a+\sum\limits_{j=1}^n (a_j\psi_j+ b_j\overline{\psi}_j)+\sum\limits_{j\ne k}
(a_{j,k}\psi_j\psi_k+
b_{j,k}\psi_j\overline{\psi}_k+
c_{j,k}\overline{\psi}_j\overline{\psi}_k)+\ldots.
\end{equation}
For any
sufficiently smooth function $f$ we define by $f(\chi)$ the element of $\mathfrak{A}$ obtained by substituting $\chi-a$
in the Taylor series of $f$ at the point $a$. Since $\chi$ is a polynomial of $\{\psi_j\}_{j=1}^n$,
$\{\overline{\psi}_j\}_{j=1}^n$ of the form (\ref{chi}), according to (\ref{anticom}) there exists such
$l$ that $(\chi-a)^l=0$, and hence the series terminates after a finite number of terms and so $f(\chi)\in \mathfrak{A}$.

Following Berezin \cite{Ber}, we define the operation of
integration with respect to the anticommuting variables in a formal
way:
\begin{equation*}
\intd d\,\psi_j=\intd d\,\overline{\psi}_j=0,\quad \intd
\psi_jd\,\psi_j=\intd \overline{\psi}_jd\,\overline{\psi}_j=1,
\end{equation*}
and then extend the definition to the general element of $\mathfrak{A}$ by
the linearity. A multiple integral is defined to be a repeated
integral. Assume also that the ``differentials'' $d\,\psi_j$ and
$d\,\overline{\psi}_k$ anticommute with each other and with the
variables $\psi_j$ and $\overline{\psi}_k$. Thus, according to the definition, if
$$
f(\psi_1,\ldots,\psi_k)=p_0+\sum\limits_{j_1=1}^k
p_{j_1}\psi_{j_1}+\sum\limits_{j_1<j_2}p_{j_1,j_2}\psi_{j_1}\psi_{j_2}+
\ldots+p_{1,2,\ldots,k}\psi_1\ldots\psi_k,
$$
then
\begin{equation*}
\intd f(\psi_1,\ldots,\psi_k)d\,\psi_k\ldots d\,\psi_1=p_{1,2,\ldots,k}.
\end{equation*}

   Let $A$ be an ordinary Hermitian matrix with a positive real part. The following Gaussian
integral is well-known
\begin{equation}\label{G_C}
\intd \exp\Big\{-\sum\limits_{j,k=1}^nA_{jk}z_j\overline{z}_k\Big\} \prod\limits_{j=1}^n\dfrac{d\,\Re
z_jd\,\Im z_j}{\pi}=\dfrac{1}{\mdet A}.
\end{equation}
One of the important formulas of the Grassmann variables theory is the analog of this formula for the
Grassmann algebra (see \cite{Ber}):
\begin{equation}\label{G_Gr}
\int \exp\Big\{-\sum\limits_{j,k=1}^nA_{jk}\overline{\psi}_j\psi_k\Big\}
\prod\limits_{j=1}^nd\,\overline{\psi}_jd\,\psi_j=\mdet A,
\end{equation}
where $A$ now is any $n\times n$ matrix.

%
%
We will also need the following bosonization formula
\begin{proposition}({\bf see, e.g., \cite{SupB:08} })\label{p:supboz}\\
Let $F:\mathbb{R}\to\mathbb{C}$ be some function that depends only on combinations 
\begin{align*}
\bar{\phi}\phi:=\Big\{\sum\limits_{\alpha=1}^p \bar{\phi}_{l\alpha}\phi_{s\alpha}\Big\}_{l,s=1}^2,
\end{align*}
and set
\[
d\Phi=\prod\limits_{l=1}^2\prod\limits_{\alpha=1}^p d\Re \phi_{l\alpha} d\Im \phi_{l\alpha}.
\]
Assume also that $p\ge 2$. Then
\begin{equation*}
\int F\left(\bar{\phi}\phi \right)d\Phi=\dfrac{\pi^{2p-1}}{(p-1)!(p-2)!}\int F(B)\cdot \mdet^{p-2} B \,dB,
\end{equation*}
where
$B$ is a $2\times 2$ positive Hermitian matrix, and
\begin{align*}
dB&=\mathbf{1}_{B>0}dB_{11}dB_{22}d\Re B_{12} d\Im B_{12}.
\end{align*}
\end{proposition}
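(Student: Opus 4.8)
The plan is to read the left-hand side as the push-forward of Lebesgue measure under a Gram map and to identify the resulting density. Assemble the columns $\phi_l=(\phi_{l1},\dots,\phi_{lp})^t$, $l=1,2$, into the $p\times 2$ complex matrix $\Phi=(\phi_1,\phi_2)$; then the $2\times2$ matrix in the statement is precisely the Gram matrix $B=\Phi^*\Phi=\bar\phi\phi$, which is Hermitian and positive semidefinite, while $d\Phi$ is Lebesgue measure on $\mathbb{C}^{2p}\cong\mathbb{R}^{4p}$. Thus $\int F(\bar\phi\phi)\,d\Phi=\int F(B)\,\mu(B)\,dB$, where $\mu$ is the density of the push-forward of $d\Phi$ under $\Phi\mapsto\Phi^*\Phi$, supported on positive definite $B$ (the set where $\Phi$ has full column rank, which is of full measure once $p\ge 2$). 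The entire proof then reduces to showing $\mu(B)=C\,\mdet^{\,p-2}B$ and pinning down $C$.

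First I would determine the form of $\mu$. Two quick consistency observations fix its symmetry type and its homogeneity: replacing $\Phi\mapsto\Phi V$ with $V\in U(2)$ sends $B\mapsto V^*BV$ and preserves $d\Phi$, so $\mu(V^*BV)=\mu(B)$ and $\mu$ depends only on the eigenvalues of $B$; and the substitution $\phi\mapsto\lambda\phi$ ($\lambda>0$) shows $\mu(tB)=t^{\,2(p-2)}\mu(B)$, i.e. $\mu$ is homogeneous of degree $2(p-2)$. These alone do not force the determinant, so the exponent must come from a Jacobian computation. To this end I would use the QR/Cholesky factorization: almost every full-rank $\Phi$ factors uniquely as $\Phi=VT$ with $V$ on the complex Stiefel manifold $\{V:V^*V=I_2\}$ and $T$ upper triangular with positive real diagonal $t_{11},t_{22}$, so that $B=\Phi^*\Phi=T^*T$. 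The standard factorization of Lebesgue measure gives $d\Phi=\big(\prod_{j=1}^2 t_{jj}^{\,2(p-j)+1}\big)\,dT\,(V^*dV)$; integrating out the finite, $B$-independent Stiefel volume and passing to $B=T^*T$, whose Cholesky Jacobian contributes the remaining powers of $t_{jj}$, leaves exactly $\mu(B)\propto(t_{11}t_{22})^{2(p-2)}=\mdet^{\,p-2}B$, in agreement with the homogeneity above.

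It remains to compute $C$, which I would do with the single Gaussian test function $F(B)=e^{-\Tr B}$. On the left, $F(\bar\phi\phi)=\exp\{-\sum_{l,\alpha}|\phi_{l\alpha}|^2\}$ factorizes into $2p$ elementary integrals $\int e^{-|\phi|^2}\,d\Re\phi\,d\Im\phi=\pi$, giving $\pi^{2p}$. On the right one meets the complex matrix Gamma integral
\[
\int_{B>0}e^{-\Tr B}\,\mdet^{\,p-2}B\,dB=\pi\,\Gamma(p)\,\Gamma(p-1)=\pi\,(p-1)!\,(p-2)!,
\]
namely the $k=2$ complex multivariate Gamma function $\Gamma_2(p)$, evaluated e.g. by diagonalizing $B$ and integrating its eigenvalues against the Vandermonde factor. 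Equating $\pi^{2p}=C\,\pi(p-1)!(p-2)!$ yields $C=\pi^{2p-1}/[(p-1)!(p-2)!]$, the asserted constant; the hypothesis $p\ge 2$ is exactly what makes $B$ generically invertible and the integral convergent, since $\Gamma(p-1)<\infty$.

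The main obstacle is the second step: verifying that the push-forward density is exactly the power $\mdet^{\,p-2}B$ rather than some other symmetric homogeneous function of the eigenvalues. Unitary invariance and scaling constrain $\mu$ only up to its total degree, so the determinant structure genuinely requires the QR/Cholesky Jacobian (equivalently, the derivation of the complex Wishart/Laguerre density). Once that exponent is in hand the constant is routine, which is precisely why the statement is quoted as a standard bosonization identity.
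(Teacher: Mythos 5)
Your proof is correct. Note that the paper does not actually prove this proposition: it is quoted as a known bosonization identity with a reference to the superbosonization literature, so there is no in-paper argument to compare against. What you have written is the standard derivation of the complex Wishart (Laguerre) density for the Gram matrix $B=\Phi^*\Phi$ of a $p\times 2$ complex matrix: the QR/Cholesky factorization $\Phi=VT$, $B=T^*T$, correctly produces the exponent, since $t_{11}^{2p-1}t_{22}^{2p-3}\,dT\,(V^*dV)$ combined with the Cholesky Jacobian $dB\propto t_{11}^{3}t_{22}\,dT$ leaves $(t_{11}t_{22})^{2(p-2)}=\mdet^{p-2}B$, and the Gaussian test function $F(B)=e^{-\Tr B}$ together with $\int_{B>0}e^{-\Tr B}\mdet^{p-2}B\,dB=\pi\,(p-1)!\,(p-2)!$ pins down the constant $\pi^{2p-1}/[(p-1)!(p-2)!]$ exactly as stated (one can sanity-check the case $p=2$ directly). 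You are also right to flag that unitary invariance and scaling alone only fix the homogeneity degree, so the Jacobian computation is the genuine content; this is a complete and self-contained proof of the purely bosonic identity as stated, which is all the paper uses (the full superbosonization of the cited reference, involving Grassmann sectors, is not needed here).
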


\bigskip

{\bf Acknowledgements.} The paper is based upon work supported by the NSF grant DMS-1928930 while the authors participated in a program 
"Universality and Integrability in Random Matrix Theory and Interacting Particle Systems"
hosted by the Mathematical Sciences Research Institute in Berkeley, California, during the Fall 2021 semester.

\end{document}